\declaretheorem{theorem}
\newtheorem{lemma}[theorem]{Lemma}
\newtheorem{corollary}[theorem]{Corollary}
\newtheorem{proposition}[theorem]{Proposition}
\newtheorem{definition}[theorem]{Definition}
\newcommand{\ignore}[1]{}%
\newcommand{\ProblemFormat}[1]{{\sc #1}}
\newcommand{\ProblemName}[1]{\ProblemFormat{#1}\xspace}
\newcommand{\probLab}[1][($a$:$b)$]{\ProblemName{List #1-coloring}}
\newcommand{\probNULab}[1][($a$:$b)$]{\ProblemName{Nonuniform List #1-coloring}}
\newcommand{\probABCol}[1][($a$:$b$)]{\ProblemName{#1-coloring}}
\newcommand{\probHom}{\ProblemName{Graph Homomorphism}}
\newcommand{\probTSAT}{\ProblemName{$3$-SAT}}
\newcommand{\probTFSAT}{\ProblemName{$(3,\!4)$-SAT}}
\newcommand{\probMonoTest}{\ProblemName{$(r,k)$-Monomial Testing}}
\newcommand{\probCLSubsetSum}{\ProblemName{Carry-Less Subset Sum}}
\newcommand{\classP}{{\ensuremath{\rm{P}}}}
\newcommand{\NP}{{\ensuremath{\rm{NP}}}}
\DeclareMathOperator{\poly}{poly}
\newcounter{rulecnt}
\newcommand{\field}[1]{\textup{GF}(#1)}
\newcommand{\Ff}{{\ensuremath{\mathcal{F}}}}
\newcommand{\Hh}{{\ensuremath{\mathcal{H}}}}
\newcommand{\Ii}{{\ensuremath{\mathcal{I}}}}
\newcommand{\Oh}{\ensuremath{\mathcal{O}}}
\newcommand{\Ohstar}{\ensuremath{\Oh^\star}}
\begin{document}
\title{Tight lower bounds for the complexity of 
\textcolor{blue}{m}\textcolor{red}{u}\textcolor{olive}{l}\textcolor{violet}{t}i\textcolor{pink}{c}\textcolor{violet}{o}\textcolor{orange}{l}\textcolor{green}{o}\textcolor{blue}{r}\textcolor{brown}{i}\textcolor{yellow}{n}\textcolor{red}{g}\thanks{Work supported by the National Science Centre of Poland, grants number 2013/11/D/ST6/03073 (MP, MW) and 2015/17/N/ST6/01224 (AS). The work of {\L}. Kowalik is a part of the project TOTAL that has received funding from the European Research Council (ERC) under the European Union’s Horizon 2020 research and innovation programme (grant agreement No 677651). Micha\l{} Pilipczuk is supported by the Foundation for Polish Science (FNP) via the START stipend programme.}} 


\author{Marthe Bonamy\thanks{CNRS, LaBRI, France} \and \L ukasz Kowalik\thanks{University of Warsaw, Poland} \and Micha\l \ Pilipczuk\footnotemark[3] \and Arkadiusz Soca\l a\footnotemark[3] \and Marcin Wrochna\footnotemark[3]}


\date{}

\begin{titlepage}
\maketitle

\begin{abstract}
In the {\em{multicoloring}} problem, also known as {\em{($a$:$b$)-coloring}} or {\em{$b$-fold coloring}}, we are given a graph $G$ and a set of $a$ colors, 
and the task is to assign a subset of $b$ colors to each vertex of $G$ so that adjacent vertices receive disjoint color subsets.
This natural generalization of the classic coloring problem (the $b=1$ case) is equivalent to finding a homomorphism to the Kneser graph $KG_{a,b}$, and gives relaxations approaching the fractional chromatic number.

We study the complexity of determining whether a graph has an ($a$:$b$)-coloring.
Our main result is that this problem does not admit an algorithm with running time $f(b)\cdot 2^{o(\log b)\cdot n}$, for any computable $f(b)$, unless the Exponential Time Hypothesis (ETH) fails.
A $(b+1)^n\cdot \poly(n)$-time algorithm due to Nederlof [2008] shows that this is tight.
A direct corollary of our result is that the graph homomorphism problem does not admit a $2^{\Oh(n+h)}$ algorithm unless ETH fails, even if the target graph is required to be a Kneser graph. This refines the understanding given by the recent lower bound of Cygan et al. [SODA 2016].

The crucial ingredient in our hardness reduction is
the usage of {\em{detecting matrices}} of Lindstr\"om~[Canad. Math. Bull., 1965], which is a combinatorial tool that, to the best of our knowledge, has not yet been used for proving complexity lower bounds.
As a side result, we prove that the running time of the algorithms of Abasi et al.~[MFCS 2014] and of Gabizon et al.~[ESA 2015] for the $r$-monomial detection problem are optimal under ETH.

\end{abstract}
\pagenumbering{gobble}
\end{titlepage}
\pagenumbering{arabic}

\section{Introduction}\label{sect:intro}

The complexity of determining the chromatic number of a graph is undoubtedly among the most intensively studied computational problems.
Countless variants, extensions, and generalizations of graph colorings have been introduced and investigated.
Here, we focus on {\em multicolorings}, also known as ($a$:$b$)-colorings. In this setting, we are given a graph $G$, a palette of $a$ colors, and a number $b\leq a$.
An {\em ($a$:$b$)-coloring} of $G$ is any assignment of $b$ distinct colors to each vertex so that adjacent vertices receive disjoint subsets of colors.
The \probABCol problem asks whether $G$ admits an ($a$:$b$)-coloring.
Note that for $b=1$ we obtain the classic graph coloring problem. 
The smallest $a$ for which an ($a$:$b$)-coloring exists, is called the {\em{$b$-fold chromatic number}}, denoted by $\chi_b(G)$.

The motivation behind ($a$:$b$)-colorings can be perhaps best explained by showing the connection with the {\em{fractional chromatic number}}.
The fractional chromatic number of a graph $G$, denoted $\chi_f(G)$, is the optimum value of the natural LP relaxation of the problem of computing the chromatic number of $G$,
expressed as finding a cover of the vertex set using the minimum possible number of independent sets.
It can be easily seen that by relaxing the standard coloring problem by allowing $b$ times more colors while requiring that every vertex receives $b$ colors
and adjacent vertices receive disjoint subsets, with increasing $b$ we approximate the fractional chromatic number better and better. Consequently, $\lim_{b\to\infty} \chi_b(G)/b=\chi_f(G)$.

\begin{figure}[h]
	\centering
	\begin{tikzpicture}[scale=1,line width=0.3mm]
\begin{scope}[>=latex]
 
 \foreach \i/\l/\c in {1/v/51,2/x/34,3/y/12,4/u/45,5/w/23}
 {
   \path (-90 + \i * 360 / 5 :8mm) node[draw, circle, minimum size = 0.5cm] (\l1) {};
   \draw (\l1) node[black] {\c};
 }  

 \foreach \i/\l/\c in {1/v/24,2/x/25,3/y/35,4/u/13,5/w/14}
 {
   \path (-90 + \i * 360 / 5 :17mm) node[draw, circle, minimum size = 0.5cm] (\l2){};
   \draw (\l2) node[black] {\c};
 }  

 \foreach \i/\l/\c in {1/v/24,2/x/25,3/y/35,4/u/13,5/w/14}
 {
   \path (90 + \i * 360 / 5 :22mm) node[draw, circle, minimum size = 0.5cm] (\l3){};
   \draw (\l3) node[black] {\c};
 }  

 \foreach \i/\l/\c in {1/v/51,2/x/34,3/y/12,4/u/45,5/w/23}
 {
   \path (90 + \i * 360 / 5 :35mm) node[draw, circle, minimum size = 0.5cm] (\l4){};
   \draw (\l4) node[black] {\c};
 }  

 \foreach \i/\j in {v1/x1,x1/y1,y1/u1,u1/w1,w1/v1,v1/v2,x1/x2,y1/y2,u1/u2,w1/w2,v4/x4,x4/y4,y4/u4,u4/w4,w4/v4,v4/v3,x4/x3,y4/y3,u4/u3,w4/w3,w3/x2,x2/u3,u3/v2,v2/y3,y3/w2,w2/x3,x3/u2,u2/v3,v3/y2,y2/w3}
 {
       \draw (\i) -- (\j);
 }

 \draw[-angle 90] (4,0) -- (5,0);
\end{scope}

\begin{scope}[shift={(8,0)}]
 
 \foreach \i/\l/\c in {1/v/24,2/x/25,3/y/35,4/u/13,5/w/14}
 {
   \path (90 + \i * 360 / 5 :10mm) node[draw, circle, minimum size = 0.5cm] (\l3){};
   \draw (\l3) node[black] {\c};
 }  

 \foreach \i/\l/\c in {1/v/51,2/x/34,3/y/12,4/u/45,5/w/23}
 {
   \path (90 + \i * 360 / 5 :20mm) node[draw, circle, minimum size = 0.5cm] (\l4){};
   \draw (\l4) node[black] {\c};
 }  

 \foreach \i/\j in {v4/x4,x4/y4,y4/u4,u4/w4,w4/v4,v3/v4,x3/x4,y3/y4,u3/u4,w3/w4,w3/x3,x3/u3,u3/v3,v3/y3,y3/w3}
 {
       \draw (\i) -- (\j);
 }
\end{scope}

\end{tikzpicture}
	\caption{A $(5$:$2)$-coloring of the dodecahedron (left) which can be seen as a homomorphism to $KG_{5,2}$ (the Petersen graph, right). 
	The homomorphism is given by identifying the pairs of opposite vertices in the corresponding regular solid.}
	\label{fig:homomorphism}
\end{figure}
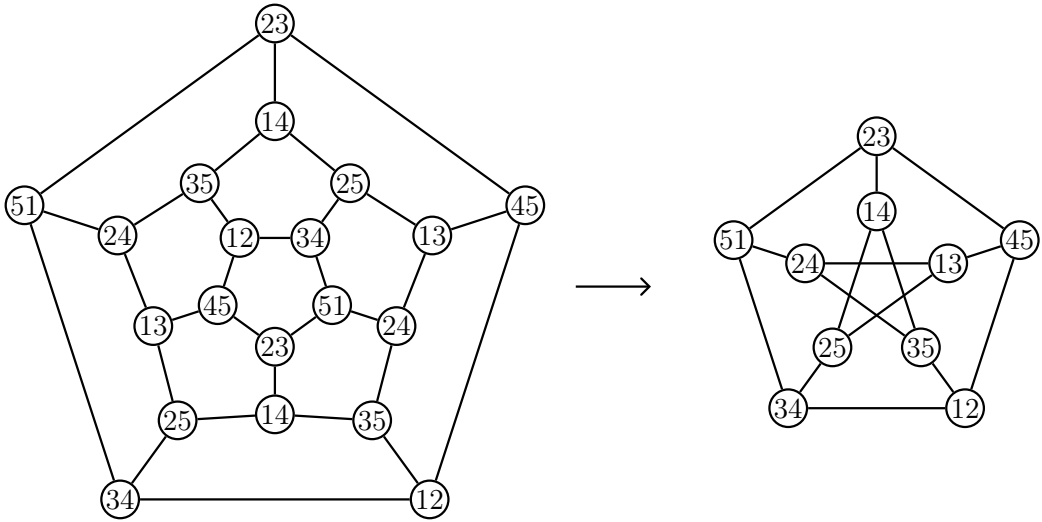

Another interesting connection concerns {\em{Kneser graphs}}. 
Recall that for positive integers $a$, $b$ with $b<a/2$, the Kneser graph $KG_{a,b}$ has all $b$-element subsets of $\{1,2,\ldots,a\}$ as vertices, and two subsets are considered adjacent if and only if they are disjoint. 
For instance, $KG_{5,2}$ is the well-known Petersen graph (see Fig.~\ref{fig:homomorphism}, right). 
Thus, ($a$:$b$)-coloring of a graph $G$ can be interpreted as a homomorphism from $G$ to the Kneser graph $KG_{a,b}$ (see Fig.~\ref{fig:homomorphism}). 
Kneser graphs are well studied in the context of graph colorings mostly due to the celebrated result of Lov\'asz~\cite{Lovasz78}, who determined their chromatic number, initiating the field of topological combinatorics.

Multicolorings and ($a$:$b$)-colorings have been studied both from combinatorial~\cite{chvatal,Fisher95,Lin08} and algorithmic~\cite{ChristFL14,HalldorssonKPSST03,Havet01,KchikechT06,Kuhn09,Marx02,McDiarmidR00,SudeepV05} 
points of view. The main real-life motivation comes from the problem of assigning frequencies to nodes in a cellular network so that adjacent nodes receive disjoint sets of frequencies on which they can operate.
This makes (near-)planar and distributed settings particularly interesting for practical applications. We refer to the survey of Halld\'orsson and Kortsarz~\cite{HalldorssonK04} for a broader discussion.

In this paper we focus on the paradigm of exact exponential time algorithms: given a graph $G$ on $n$ vertices and numbers $a\geq b$, we would like to determine whether $G$ is ($a$:$b$)-colorable as quickly as possible.
Since the problem is already NP-hard for $a=3$ and $b=1$, we do not expect it to be solvable in polynomial time, and hence look for an efficient exponential-time algorithm.
A straightforward dynamic programming approach yields an algorithm with running time\footnote{The $\Ohstar(\cdot)$ notation hides factors polynomial in the input size.}  $\Ohstar(2^n\cdot (b+1)^n)$ as follows. For each function $\eta\colon V(G)\to \{0,1,\ldots,b\}$ and each $k=0,1,\ldots,a$, we
create one boolean entry $D[\eta,k]$ denoting whether one can choose $k$ independent sets in $G$ so that every vertex $v\in V(G)$ is covered exactly $\eta(v)$ times. Then
value $D[\eta,k]$ can be computed as a disjunction of values $D[\eta',k-1]$ over $\eta'$ obtained from $\eta$ by subtracting $1$ on vertices from some independent set in $G$.

This simple algorithm can be improved by finding an appropriate algebraic formula for the number of ($a$:$b$)-colorings of the graph and using the inclusion-exclusion principle to compute it quickly, similarly
as in the case of standard colorings~\cite{BjorklundHK09}. Such an algebraic formula was given by Nederlof~\cite[Theorem 3.5]{Nederlof} in the context of a more general {\sc{Multi Set Cover}} problem.
Nederlof also observed that in the case of \probABCol, a simple application of the inclusion-exclusion principle to compute the formula yields an $\Ohstar((b+1)^n)$-time exponential-space algorithm.
Hua et al.~\cite{HuaWYL10} noted that the formulation of Nederlof~\cite{Nederlof} for {\sc{Multi Set Cover}} can be also used to obtain a polynomial-space algorithm for this problem.
By taking all maximal independent sets to be the family in the {\sc{Multi Set Cover}} problem, and applying the classic Moon-Moser upper bound on their number~\cite{Moon1965}, we obtain an algorithm for \probABCol
that runs in time $\Ohstar(3^{n/3}\cdot (b+1)^n)$ and uses polynomial space. Note that by plugging $b=1$ to the results above, we obtain algorithms for the standard coloring problem 
with running time $\Ohstar(2^n)$ and exponential space usage, and with running time $\Ohstar(2.8845^n)$ and polynomial space usage, which almost matches the fastest known procedures~\cite{BjorklundHK09}.

The complexity of \probABCol becomes particularly interesting in the context of the \probHom problem: given graphs $G$ and $H$, with $n$ and $h$ vertices respectively, determine whether $G$ admits a homomorphism to $H$.
By the celebrated result of Hell and Ne\v{s}et\v{r}il~\cite{HellN90} the problem is in $\classP$ if $H$ is bipartite and $\NP$-complete otherwise.
For quite a while it was open whether there is an algorithm for \probHom running in time $2^{\Oh(n+h)}$. 
It was recently answered in the negative by Cygan et al.~\cite{cygan:homo}; more precisely, they proved that an algorithm with running time $2^{o(n\log h)}$ would contradict the Exponential Time Hypothesis (ETH) of Impagliazzo et al.~\cite{eth}.
However, \probHom is a very general problem, hence researchers try to uncover a more fine-grained picture and identify families of graphs $\Hh$ such that the problem can be solved more efficiently whenever $H\in \Hh$.
For example, Fomin, Heggernes and Kratsch~\cite{DBLP:journals/mst/FominHK07} showed that when $H$ is of treewidth at most $t$, then \probHom can be solved in time $\Ohstar((t+3)^n)$.
It was later extended to graphs of cliquewidth bounded by $t$, with an $\Ohstar((2t+1)^{\max\{n,h\}})$ time bound by Wahlstr{\"{o}}m~\cite{Wahlstrom11}.
On the other hand, $H$ needs not be sparse to admit efficient homomorphism testing: the family of cliques admits the $\Ohstar(2^n)$ running time as shown by Bj\"{o}rklund et al.~\cite{BjorklundHK09}. 
As noted above, this generalizes to Kneser graphs $KG_{a,b}$, by the $\Ohstar((b+1)^n)$-time algorithm of Nederlof.
In this context, the natural question is whether the appearance of $b$ in the base of the exponent is necessary, or is there an algorithm running in time $\Ohstar(c^n)$ for some universal constant $c$ independent of $b$.	

\paragraph*{Our contribution.} We show that the algorithms for \probABCol mentioned above are essentially optimal under the Exponential Time Hypothesis. Specifically, we prove the following results:


\begin{restatable}{theorem}{mainthm}\label{th:main}
If there is an algorithm for \probABCol that runs in time~$f(b) \cdot 2^{o(\log b) \cdot n}$, for some computable function $f(b)$, then ETH fails.
This holds even if the algorithm is only required to work on instances where $a = \Theta(b^2 \log b)$.
\end{restatable}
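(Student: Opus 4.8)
The plan is to design, for every integer $b$, a polynomial-time reduction turning a \probTFSAT instance $\varphi$ with $N$ variables and $\Oh(N)$ clauses (this bounded-occurrence variant of \probTSAT has no $2^{o(N)}$ algorithm under ETH, after the sparsification lemma) into a graph $G$ on $n = \Oh(N/\log b)$ vertices together with palette size $a = \Theta(b^2\log b)$, such that $G$ is ($a$:$b$)-colorable if and only if $\varphi$ is satisfiable; in other words, the reduction must pack $\Theta(\log b)$ bits of $\varphi$ into each vertex. Granting this, an algorithm running in time $f(b)\cdot 2^{o(\log b)\cdot n}$ decides $\varphi$ in time $f(b)\cdot 2^{o(\log b)\cdot \Oh(N/\log b)} = f(b)\cdot 2^{o(N)}$ for each fixed $b$; letting $b=b(N)$ tend to infinity slowly enough that $f(b(N)) = 2^{o(N)}$ (possible since $f$ is computable, e.g. $b(N)=\max\{b\le\sqrt N: f(b)\le 2^{\sqrt N}\}$) while the $o(\log b)$ factor still drives $2^{o(\log b)\cdot n}$ down to $2^{o(N)}$, one obtains a $2^{o(N)}$-time algorithm for \probTFSAT, contradicting ETH. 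So all the content is in the vertex-efficient reduction.

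The backbone uses the reformulation recalled in the introduction: an ($a$:$b$)-coloring of $G$ is the same as a choice of $a$ independent sets of $G$ (repetitions and $\emptyset$ allowed) covering every vertex exactly $b$ times, i.e. an exact $b$-fold multicover of $V(G)$ by independent sets. I would first reduce \probTFSAT to an intermediate carry-less subset-sum / exact-multicover instance --- a bounded number of integer vectors whose chosen subset must sum, coordinatewise (no carries), to a prescribed target, where the subset simultaneously encodes a truth assignment of $\varphi$ and certificates that all clauses are satisfied --- and then realise that instance as an ($a$:$b$)-coloring question by taking the ground-set elements as vertices and the admissible sets as (maximal) independent sets of a conflict graph, so that ``sum of chosen vectors equals target'' becomes ``each vertex covered exactly $b$ times''. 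The point of routing through a digitwise-arithmetic intermediate is that the compression is transparent there: carry-less addition translates faithfully into color-counting.

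The genuinely new ingredient, and the step I expect to be the hard part, is making the gadgets \emph{rigid} while keeping the palette at $\Theta(b^2\log b)$. The colors are split into $\Theta(\log b)$ groups; a vertex is then described, up to symmetry, by its \emph{signature}, the vector recording how many of its $b$ colors lie in each group, and all a neighbour can ``see'' of a vertex is (sums of) signature coordinates, since disjointness is a group-by-group inequality on signatures. To force a vertex's signature to encode a prescribed bounded-alphabet vector $x$ --- the chunk of $\Theta(\log b)$ variables it carries, or the batch of $\Theta(\log b)$ clause checks it performs --- and to be recoverable from the few coarse quantities its neighbours observe, I would size the groups and wire each gadget according to a Lindstr\"om \emph{detecting matrix} $M$: detecting matrices are exactly the combinatorial objects for which a bounded-alphabet vector $x$ is uniquely determined by the few coordinates of $Mx$, and an $m\times k$ detecting matrix exists with $k = \Theta(m\log m)$, which is precisely the bandwidth letting $\Theta(\log b)$ bits ride on a signature spread over $\Theta(\log b)$ groups. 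Each group still needs only $\Theta(b^2)$ colors (to fit the $\Oh(b)$ mutually adjacent vertices of a clique gadget, each occupying up to $\Oh(b)$ colors of the group disjointly), and $\Theta(\log b)\cdot\Theta(b^2)=\Theta(b^2\log b)=a$.

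With the gadgets in hand, correctness is two routine directions. From a satisfying assignment I would build the $a$ independent sets group by group, using the detecting-matrix wiring to realise exactly the intended signatures and to make the clause gadgets satisfiable because every clause is true. Conversely, from any ($a$:$b$)-coloring I would read off the signature of each vertex, invoke the detecting property of $M$ to decode a well-defined truth assignment and a well-defined outcome of every batched clause check, and then use the disjointness constraints imposed by the constraint gadgets to conclude the assignment satisfies $\varphi$. The main obstacle throughout is this rigidity analysis: $\binom{a}{b} = 2^{\Theta(b\log b)}$ gives an adversarial coloring far more freedom than the $2^{\Theta(\log b)}$ bits each vertex is supposed to carry, so the bulk of the work is ruling out that a coloring exploits this slack, which rests on the injectivity of $x\mapsto Mx$ over the relevant box together with the exact-threshold rigidity of the clique gadgets. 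The same gadgets, read through the monomial-counting formula underlying the multicover algorithm, also pin down the optimal running time for \probMonoTest, which I would record as a corollary.
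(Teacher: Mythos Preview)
Your outer framework is right: pick $b=b(N)\to\infty$ slowly enough that $f(b(N))=2^{o(N)}$, reduce an $N$-variable \probTFSAT instance to an \probABCol instance on $\Oh(N/\log b)$ vertices with $a=\Theta(b^2\log b)$, and use Lindstr\"om's detecting matrices as the compression device. But the reduction you sketch diverges from the paper's and has a genuine gap.

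The paper does \emph{not} reduce directly to the non-list problem. It first reduces to \probLab, where lists break the color-permutation symmetry and make the gadgets transparent: each variable-group vertex gets a $b$-element list and encodes its $\lfloor\log_2 b\rfloor$-bit chunk of the assignment by the one color it \emph{omits}; then, for each clause group $C_j$ of size $\sim b$, a $4$-detecting family on $C_j$ of size $\Oh(b/\log b)$ is used to certify that every clause of $C_j$ is satisfied with only $\Oh(b/\log b)$ check vertices instead of $b$. So the detecting family acts on \emph{clauses within a batch}, not on the per-vertex ``signatures'' you describe. Only afterwards are the lists removed, by disjointly attaching a full copy of the Kneser graph $KG_{a+b,b}$ and invoking the theorem that every self-homomorphism of a Kneser graph is an automorphism; this forces the coloring on the Kneser copy to be the identity up to a global permutation, which reinstates the lists on the original vertices. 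Keeping $\binom{a+b}{b}$ below $N^{o(1)}$ is why $b(N)$ must grow so slowly (roughly $b\log b\le \log N/\log\log N$), a constraint your proposal does not surface.

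Your plan lacks any analogue of this anchoring step, and that is the gap. The signatures you want to decode depend on a fixed partition of $[a]$ into groups, but a plain \probABCol instance is invariant under every permutation of $[a]$; a clique of size $\Oh(b)$ pins down only an \emph{unlabelled} collection of pairwise-disjoint $b$-sets, not which of them constitute which group, so in the converse direction the step ``read off the signature and invoke the detecting property of $M$'' has nothing well-defined to read. The paper needs the full Kneser graph, and specifically its rigidity under self-homomorphisms, precisely because cliques are too symmetric for this purpose. (As an aside, the carry-less subset-sum detour you outline is exactly what the paper uses---but for the \probMonoTest lower bound in Section~\ref{sect:ldtesting}, not for Theorem~\ref{th:main}.)
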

\vspace*{-\baselineskip}
\begin{restatable}{corollary}{maincor}\label{cor:main}
If there is an algorithm for \probHom that runs in time~$f(h) \cdot 2^{o(\log \log h) \cdot n}$, for some computable function $f(h)$, then ETH fails.
This holds even if the algorithm is only required to work on instances where $H$ is a Kneser graph $KG_{a,b}$ with $a = \Theta(b^2 \log b)$.
\end{restatable}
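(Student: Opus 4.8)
The plan is to obtain Corollary~\ref{cor:main} as a direct consequence of Theorem~\ref{th:main}, using the equivalence between $(a{:}b)$-coloring and homomorphisms into Kneser graphs that was already observed above. Concretely, recall that for $b < a/2$ a homomorphism from $G$ to $KG_{a,b}$ is precisely an assignment of a $b$-element subset of $\{1,\dots,a\}$ to each vertex of $G$ such that adjacent vertices get disjoint subsets, i.e.\ exactly an $(a{:}b)$-coloring of $G$. So, given an instance $(G,a,b)$ of \probABCol with $a = \Theta(b^2\log b)$ (for which $b < a/2$ once $b$ is large enough), I would build the graph $H := KG_{a,b}$ explicitly in time $\poly(h)$, where $h := \binom{a}{b}$ is its number of vertices, and then invoke the hypothetical \probHom algorithm on $(G,H)$; it accepts if and only if $G$ is $(a{:}b)$-colorable.

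The only quantitative step is to control how $h$ depends on $b$. For $a = \Theta(b^2\log b)$ one has $h = \binom{a}{b} \le a^b = 2^{b\log a} = 2^{\Oh(b\log b)}$ and $h \ge (a/b)^b = \Theta(b\log b)^b = 2^{\Omega(b\log b)}$, so $\log h = \Theta(b\log b)$ and therefore $\log\log h = \Theta(\log b)$. In particular the construction time $\poly(h) = 2^{\Oh(b\log b)}$ is a function of $b$ alone, and any $g$ with $g(h) = o(\log\log h)$ satisfies $g(h) = o(\log b)$ along the sequence $h = \binom{a(b)}{b}$, where $a(b)$ is the concrete $\Theta(b^2\log b)$ function used in Theorem~\ref{th:main}.

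Combining the two, an algorithm solving \probHom in time $f(h)\cdot 2^{o(\log\log h)\cdot n}$ on Kneser instances $KG_{a,b}$ with $a = \Theta(b^2\log b)$ would yield, via the reduction above, an algorithm solving \probABCol on instances with $a = \Theta(b^2\log b)$ in time $\poly(h) + f(h)\cdot 2^{o(\log\log h)\cdot n} = f'(b)\cdot 2^{o(\log b)\cdot n}$, where $f'(b) := \poly(\binom{a(b)}{b}) + f(\binom{a(b)}{b})$ is computable because $f$ is and $b \mapsto \binom{a(b)}{b}$ is. By Theorem~\ref{th:main} this contradicts ETH, proving the corollary.

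I do not expect any genuine obstacle here: the statement really is a corollary, and the whole content is the elementary double-logarithm estimate $\log\log\binom{a}{b} = \Theta(\log b)$ for $a = \Theta(b^2\log b)$, which is exactly what converts the $\log b$ in the base of the exponent in Theorem~\ref{th:main} into the $\log\log h$ appearing here. The one point deserving a little care is the bookkeeping for the $o(\cdot)$ notation --- checking that a running-time bound stated with $o(\log\log h)$ in the exponent, specialized to exactly the subfamily of Kneser graphs produced by the reduction, indeed becomes a bound with $o(\log b)$ in the exponent --- but this is immediate from $\log\log h = \Theta(\log b)$ and from $h$ being a fixed increasing function of $b$ on this subfamily.
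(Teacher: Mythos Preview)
Your proposal is correct and follows essentially the same approach as the paper: reinterpret an \probABCol instance with $a=\Theta(b^2\log b)$ as a \probHom instance with target $KG_{a,b}$, use $h=\binom{a}{b}=2^{\Theta(b\log b)}$ to get $\log\log h=\Theta(\log b)$, and invoke Theorem~\ref{th:main}. You are slightly more explicit than the paper about the Kneser-graph construction time and the two-sided bound on $h$, but the argument is the same.
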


%

The bound for \probABCol is tight, as the straightforward $\Ohstar(2^n \cdot (b+1)^n)=2^{\Oh(\log b)\cdot n}$ dynamic programming algorithm already shows.
At first glance, one might have suspected that \probABCol, as an interpolation between classical coloring and fractional coloring, both solvable in $2^{\Oh(n)}$ time~\cite{GrotschelLS81}, should be just as easy; Theorem 1 refutes this suspicion.

Corollary~\ref{cor:main} in particular excludes any algorithm for testing homomorphisms into Kneser graphs with running time $2^{\Oh(n+h)}$.
It cannot give a tight lower bound matching the result of Cygan et~al.~\cite{cygan:homo} for general homomorphisms, because $h=|V(KG_{a,b})|=\binom{a}{b}$ is not polynomial in $b$.
On the other hand, it exhibits the first explicit family of graphs $H$ for which the complexity of \probHom increases with $h$.

In our proof, we first show a lower bound for the list variant of the problem, where every vertex is given a list of colors that can be assigned to it (see Section~\ref{sect:pre} for formal definitions).
The list version is reduced to the standard version by introducing a large Kneser graph $KG_{a+b,b}$; we need $a$ and $b$ to be really small so that the size of this Kneser graph does not dwarf the size of the rest
of the construction.
However, this is not necessary for the list version, where we obtain lower bounds for a much wider range of functions $b(n)$.


\begin{restatable}{theorem}{listabthm}
\label{th:listab}%
If there is an algorithm for \probLab that runs in time~$2^{o(\log b) \cdot n}$, then ETH fails.
This holds even if the algorithm is only required to work on instances where $a=\Theta(b^2\log b)$ and $b=\Theta(b(n))$ for an arbitrarily chosen polynomial-time computable function $b(n)$ such that $b(n)\in\omega(1)$ and $b(n)=\Oh(n/\log n)$.
\end{restatable}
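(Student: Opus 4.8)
I would prove Theorem~\ref{th:listab} by reducing from \probTFSAT, i.e.\ $3$-SAT in which every variable occurs in at most four clauses, which --- combining the Sparsification Lemma with the standard reduction making occurrences bounded --- admits no $2^{o(N)}$-time algorithm unless ETH fails, where $N$ is the number of variables (so the number of clauses is $\Theta(N)$). Given such a formula $\varphi$ and the function $b(\cdot)$ from the statement, the target is a polynomial-time construction of an equivalent \probLab instance on $n$ vertices with $a=\Theta(b^2\log b)$, $b=\Theta(b(n))$ and $n\log b=\Theta(N)$; a hypothetical $2^{o(\log b)\cdot n}$-time algorithm would then decide \probTFSAT in time $2^{o(N)}$, contradicting ETH. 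The value of $n$ --- and hence of $b$ --- is fixed only at the end; until then treat $b$ as a free parameter with $b=\omega(1)$.

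The first ingredient is the \emph{encoding of the assignment}. Partition the variables of $\varphi$ into $\Theta(N/\log b)$ blocks of $\Theta(\log b)$ variables and introduce one \emph{assignment vertex} per block. A block has $2^{\Theta(\log b)}=\poly(b)$ truth assignments, and since $\binom{a}{b}$ is super-polynomial in $b$ we may encode them injectively as $b$-subsets of $[a]$; moreover we lay out $[a]$ so that each block owns a private ``register'' region through which the truth value of each of its individual variables can later be probed, via disjointness, by an adjacent vertex. The list of an assignment vertex is precisely the set of $b$-subsets realizing a legal encoding.

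The second and central ingredient is the \emph{verification of clauses}, and this is where Lindstr\"om's detecting matrices enter. Spending one gadget per clause would force $n=\Theta(N)$; instead we view the tuple of block assignments as a vector $x$ over a bounded alphabet and phrase the whole clause system as the requirement $x\in\mathcal{X}$ for a suitable set $\mathcal{X}$. We then fix a detecting matrix $M$ with bounded-weight rows, only $\Oh(N/\log b)$ rows, and with each ``weighing'' $\sum_j M_{ij}x_j$ ranging over a set of size $\Oh(a)$, such that $x\mapsto Mx$ is injective on the relevant domain. For each row of $M$ we create a \emph{verification vertex}, join it to the assignment vertices in the support of that row, and build a disjointness gadget --- using the register regions and the remaining colours of the $\Theta(b^2\log b)$-sized palette --- that forces the verification vertex to carry exactly the value of that weighing; finally we prune its list to the weighing-values attained by tuples in $\mathcal{X}$ that satisfy the clauses ``charged'' to that vertex. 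In the forward direction a satisfying assignment of $\varphi$ clearly yields a proper list-colouring. For soundness, a proper list-colouring determines, through the registers, a vector $x$, hence through the verification vertices the whole vector $Mx$; since $M$ detects, $x$ is uniquely determined, and since each coordinate of $Mx$ lies in its pruned list, $x$ encodes a satisfying assignment of $\varphi$.

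Finally, to realize the prescribed growth rate I would (approximately, which is enough) solve $n=\Theta(N/\log b(n))$ for $n$ as a computable function of $N$ --- the hypothesis $b(n)=\Oh(n/\log n)$ is exactly what guarantees a solution with $n=\Omega(N/\log N)$ and that the $\Oh(N/\log b)$ assignment plus verification vertices do not overshoot $n$ --- then set $b:=\Theta(b(n))$, take block size $\Theta(\log b)$, run the construction, and pad with isolated vertices of full list $[a]$ up to exactly $n$ vertices; padding changes neither the answer nor the order of $n\log b$. The main obstacle is the third step: producing, inside a palette of only $\Theta(b^2\log b)$ colours, a Lindstr\"om-type detecting matrix that is simultaneously bounded-weight, short ($\Oh(N/\log b)$ rows) and whose weighings take only $\Oh(a)$ values, together with a disjointness gadget that implements the weighings faithfully while every vertex is forced to use exactly $b$ colours. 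This forces the register regions to be heavily reused across verification vertices, and one must check both that the reuse introduces no spurious constraints (completeness) and that the combined list restrictions capture satisfiability exactly (soundness), all while keeping every list polynomial in $N$. Checking that the palette size $\Theta(b^2\log b)$ --- and not a larger power of $b$ --- suffices is the most delicate calculation; it is also precisely this regime that later lets Theorem~\ref{th:main} eliminate the lists by substituting a Kneser graph $KG_{a+b,b}$ without dwarfing the rest of the construction.
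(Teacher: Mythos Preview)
Your soundness argument contains a genuine gap. You want to check that the block-assignment vector $x$ lies in the set $\mathcal{X}$ of satisfying tuples by pruning the list of the $i$-th verification vertex to the set $S_i=\{(My)_i : y\in\mathcal{X}_i\}$ of weighing values achievable by tuples satisfying the clauses ``charged'' to row $i$. But the constraints $(Mx)_i\in S_i$ for all $i$ are strictly weaker than $x\in\mathcal{X}$: each coordinate of $Mx$ could be realized by a \emph{different} $y\in\mathcal{X}_i$, and nothing forces these witnesses to agree. The injectivity of $x\mapsto Mx$ only tells you that $x$ is determined once the whole vector $Mx$ is known; it does not let you test a non-linear predicate like ``$\varphi$ is satisfied'' coordinate-by-coordinate. (A toy instance: $\mathcal{X}=\{(0,1),(1,0)\}$, $M=I$; then $S_1=S_2=\{0,1\}$ and $(0,0)$ passes every local test.) So as written, the reduction is not sound.

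The paper applies detecting families in a quite different place, and this is really the key idea. It groups the \emph{clauses} into $O(N/b)$ batches of size $\le b$ and, for each batch $C_j$, uses a $4$-detecting family on the universe $C_j$. The function being ``weighed'' is $f\colon C_j\to\{0,1,2,3\}$, where $f(c)$ counts how many of the chosen literals for $c$ are satisfied; the gadgets (one $u_{j,k}$ per query, plus a single $w_j$ that eats up the slack) force $\sum_{c\in S}f(c)=|S|$ for every query set $S$ and its complement, hence $f\equiv 1$, hence every clause in $C_j$ is satisfied. This replaces $b$ per-clause checks by $O(b/\log b)$ queries, which is where the $O(N/\log b)$ vertex count comes from. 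Two further points worth noting: the paper works first with a \emph{nonuniform} demand function $\beta$ (each $u_{j,k}$ demands $|C_{j,k}|$ colours, $w_j$ demands $2|C_j|$) and only afterwards reduces to uniform $b$ via a $3$-colouring of $G$; and the $a=\Theta(b^2\log b)$ bound arises from the colouring function $\sigma$ that separates variable groups touching a common clause batch, not from any list-size calculation on verification vertices.
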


%
The crucial ingredient in the proof of Theorem~\ref{th:listab} is the usage of {\em{$d$-detecting matrices}} introduced by Lindstr\"om~\cite{lindstrom1965combinatorial}.
We choose to work with their combinatorial formulation, hence we shall talk about {\em{$d$-detecting families}}.
Suppose we are given some universe $U$ and there is an unknown function $f\colon U\to \{0,1,\ldots,d-1\}$, for some fixed positive integer $d$.
One may think of $U$ as consisting of coins of unknown weights that are integers between $0$ and $d-1$.
We would like to learn $f$ (the weight of every coin) by asking a small number of queries of the following form: for a subset $X\subseteq U$, what is $\sum_{e\in X} f(e)$ (the total weight of coins in $X$)?
A set of queries sufficient for determining all the values of an arbitrary $f$ is called a {\em{$d$-detecting family}}. 
Of course $f$ can be learned by asking $|U|$ questions about single coins, but it turns out that significantly fewer questions are needed: 
there is a $d$-detecting family of size $\Oh(|U|/\log |U|)$, for every fixed $d$~\cite{lindstrom1965combinatorial}. 
The logarithmic factor in the denominator will be crucial for deriving our lower bound.

Let us now sketch how $d$-detecting families are used in the proof of Theorem~\ref{th:listab}. 
Given an instance $\varphi$ of {\sc{3-SAT}} with $n$ variables and $\Oh(n)$ clauses, and a number $b\leq n/\log n$, we will construct an instance $G$ of \probLab[$(a$:$b)$] for some $a$.
This instance will have a positive answer if and only if $\varphi$ is satisfiable, and the constructed graph $G$ will have $\Oh(n/\log b)$ vertices.
It can be easily seen that this will yield the promised lower bound.

Partition the clause set $C$ of $\varphi$ into groups $C_1,C_2,\ldots,C_p$, each of size roughly $b$; thus $p=\Oh(n/b)$.
Similarly, partition the variable set $V$ of $\varphi$ into groups $V_1,\ldots,V_q$, each of size roughly $\log_2 b$; thus $q=\Oh(n/\log b)$.
In the output instance we create one vertex per each variable group---hence we have $\Oh(n/\log b)$ such vertices---and one block of vertices per each clause group, whose size will be determined in a moment.
Our construction ensures that the set of colors assigned to a vertex created for a variable group misses one color from some subset of $b$ colors.
The choice of the missing color corresponds to one of $2^{\log_2 b}=b$ 
possible boolean assignments to the variables of the group.

Take any vertex $u$ from a block of vertices created for some clause group $C_j$. We make it adjacent to vertices constructed for precisely those variable groups $V_i$, 
for which there is some variable in $V_i$ that occurs in some clause of $C_j$. 
This way, $u$ can only take a subset of the above missing colors corresponding to the chosen assignment on variables relevant to $C_j$.
By carefully selecting the list of $u$, and some additional technical gadgeteering, we can express a constraint of the following form:
the total number of satisfied literals in some subset of clauses of $C_j$ is exactly some number. Thus, we could verify that every clause of $C_j$ is satisfied by creating a block of $|C_j|$
vertices, each checking one clause. However, the whole graph output by the reduction would then have $\Oh(n)$ vertices, and we would not obtain any non-trivial lower bound. 
Instead, we create one vertex per each question in a $d$-detecting family on the universe $U=C_j$, which has size $\Oh(|C_j|/\log |C_j|)=\Oh(|C_j|/\log b)$.
Then, the total number of vertices in the constructed graph will be $\Oh(n/\log b)$, as intended.

\bigskip

Finally, we observe that from our main result one can infer a lower bound for the complexity of the \probMonoTest problem. Recall that in this problem
we are given an arithmetic circuit that evaluates a homogenous polynomial $P(x_1,x_2,\ldots,x_n)$ over some field $\mathbb{F}$; here, a polynomial is homogenous if
all its monomials have the same total degree $k$. The task is to verify whether $P$ has some monomial in which every variable has individual degree not larger than $r$, for a given parameter $r$.
Abasi et al.~\cite{abasi} gave a randomized algorithm solving this problem in time $\Ohstar(2^{\Oh(k\cdot \frac{\log r}{r})})$, where $k$ is the degree of the polynomial, assuming
that $\mathbb{F}=\field{p}$ for a prime $p\leq 2r^2+2r$. This algorithm was later derandomized by Gabizon et al.~\cite{GabizonLP15} within the same running time, but under the assumption
that the circuit is {\em{non-cancelling}}: it has only input, addition, and multiplication gates. Abasi et al.~\cite{abasi} and Gabizon et al.~\cite{GabizonLP15} gave a number of
applications of low-degree monomial detection to concrete problems. For instance, {\sc{$r$-Simple $k$-Path}}, the problem of finding a walk of length $k$ that visits every vertex at most $r$ times, can
be solved in time $\Ohstar(2^{\Oh(k\cdot \frac{\log r}{r})})$. However, for {\sc{$r$-Simple $k$-Path}}, as well as other problems that can be tackled using this technique, the best known
lower bounds under ETH exclude only algorithms with running time $\Ohstar(2^{o(\frac{k}{r})})$. Whether the $\log r$ factor in the exponent is necessary was left open by Abasi et al. and Gabizon et al.

We observe that the \probLab problem can be reduced to \probMonoTest over the field $\field{2}$ in such a way that an $\Ohstar(2^{k\cdot o(\frac{\log r}{r})})$-time algorithm for the latter
would imply a $2^{o(\log b)\cdot n}$-time algorithm for the former, which would contradict ETH. Thus, we show that the known algorithms for \probMonoTest  most probably cannot be sped up
in general; nevertheless, the question of lower bounds for specific applications remains open. However, going through \probLab to establish a lower bound for \probMonoTest is actually quite a detour,
because the latter problem has a much larger expressive power. Therefore, we also give a more straightforward reduction that starts from a convenient form of {\sc{Subset Sum}}; this reduction also
proves the lower bound for a wider range of $r$, expressed as a function of $k$.

\paragraph*{Outline.} In Section~\ref{sect:pre} we set up the notation as well as recall definitions and well-known facts.
We also discuss $d$-detecting families, the main combinatorial tool used in our reduction.
In Section~\ref{sect:listab} we prove the lower bound for the list version of the problem, i.e., Theorem~\ref{th:listab}.
In Section~\ref{sect:nolist} we give a reduction from the list version to the standard version, thereby proving Theorem~\ref{th:main}.
Section~\ref{sect:ldtesting} is devoted to deriving lower bounds for low-degree monomial testing.

\section{Preliminaries}\label{sect:pre}

\paragraph*{Notation.} We use standard graph notation, see e.g.~\cite{platypus,Diestel-book}.
All graphs we consider in this paper are simple and undirected.
For an integer $k$, we denote $[k]=\{0,\ldots,k-1\}$.
By $\uplus$ we denote the disjoint union, i.e., by $A \uplus B$ we mean $A\cup B$ with the indication that $A$ and $B$ are disjoint.
If $I$ and $J$ are instances of decision problems $P$ and $R$, respectively, then we say that $I$ and $J$ are {\em equivalent}
if either both $I$ and $J$ are YES-instances of respective problems, or both are NO-instances.

\paragraph*{Exponential-Time Hypothesis.}
The Exponential Time Hypothesis (ETH) of Impagliazzo et al.~\cite{eth} states that there exists a constant $c > 0$, such that there is no algorithm solving \probTSAT in time $\Ohstar(2^{cn})$.
During the recent years, ETH became the central conjecture used for proving tight bounds on the complexity of various problems. 
One of the most important results connected to ETH is the {\em{Sparsification Lemma}}~\cite{seth}, which essentially gives a reduction from an arbitrary instance of {\sc{$k$-SAT}} to an instance where
the number of clauses is linear in the number of variables.
The following well-known corollary can be derived by combining ETH with the Sparsification Lemma.

\begin{theorem}[see e.g.~Theorem 14.4 in \cite{platypus}]
\label{thm:eth-main}
Unless ETH fails, there is no algorithm for \probTSAT that runs in time $2^{o(n+m)}$, where $n,m$ denote the numbers of variables and clauses, respectively.
\end{theorem}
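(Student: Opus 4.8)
The plan is to derive the statement directly from ETH together with the Sparsification Lemma~\cite{seth}; this is the standard route for this folklore corollary. Recall that ETH provides a constant $c>0$ such that no algorithm solves \probTSAT in time $\Ohstar(2^{cn})$, and that the Sparsification Lemma states that for every $\eps>0$ there is a constant $C=C(\eps)$ and an algorithm that, given a \probTSAT instance $\varphi$ with $n$ variables, runs in time $\Ohstar(2^{\eps n})$ and outputs $3$-CNF formulas $\psi_1,\dots,\psi_t$ with $t\le 2^{\eps n}$, each on the same variable set and with at most $Cn$ clauses, such that $\varphi$ is satisfiable if and only if some $\psi_i$ is satisfiable.

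Suppose towards a contradiction that \probTSAT can be solved in time $2^{o(n+m)}$; that is, there is an algorithm $B$ and a function $g$ with $g(s)=o(s)$ such that $B$ runs in time $\Ohstar\bigl(2^{g(n+m)}\bigr)$ on instances with $n$ variables and $m$ clauses. First I would fix the sparsification parameter $\eps := c/3$, obtaining the corresponding constant $C=C(\eps)$. Only then, using $g(s)=o(s)$, I would choose a threshold $s_0$ with $g(s)\le \frac{c}{3(1+C)}\,s$ for all $s\ge s_0$. Since every sparsified formula $\psi_i$ satisfies $n+m\le(1+C)n$, for all sufficiently large $n$ each call $B(\psi_i)$ takes time $\Ohstar\bigl(2^{g(n+m)}\bigr)\le \Ohstar\bigl(2^{(c/3)n}\bigr)$. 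Running $B$ on all $t\le 2^{(c/3)n}$ formulas and adding the $\Ohstar(2^{(c/3)n})$ cost of the sparsification step gives an algorithm for \probTSAT with total running time $\Ohstar(2^{(2c/3)n})$, hence in particular $\Ohstar(2^{cn})$, contradicting ETH. Instances that are too small for these asymptotic bounds to kick in (which, for $3$-SAT, means $n$ bounded by a constant, and therefore $m$ bounded as well) are handled by brute force in constant time.

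The only genuinely delicate point is the order in which the constants are introduced: the bound $C$ on the clause-to-variable ratio depends on $\eps$, and the factor $(1+C)$ may be absorbed into the exponent only \emph{after} $C$ has been fixed, by invoking $g=o(\cdot)$ at that stage; reversing this order would be circular. Beyond this quantifier bookkeeping I do not anticipate any real obstacle, as the argument is just a routine composition of two known results.
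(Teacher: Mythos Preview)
Your argument is correct and is exactly the standard derivation the paper alludes to: the paper does not give its own proof of this statement but merely cites it (``see e.g.\ Theorem~14.4 in~\cite{platypus}''), after remarking that it ``can be derived by combining ETH with the Sparsification Lemma''. Your proposal carries out precisely that combination, with appropriate care about the order in which $\eps$, $C(\eps)$, and the threshold from $g=o(\cdot)$ are chosen.
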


We need the following regularization result of Tovey~\cite{Tovey84}.
Following Tovey, by \probTFSAT we call the variant of \probTSAT where each clause of the input formula contains exactly $3$ different variables, and each variable occurs in at most $4$ clauses.

\begin{lemma}[\cite{Tovey84}]
	\label{lem_transformation}
	Given a \probTSAT formula $\varphi$ with $n$ variables and $m$ clauses one can transform
	it in polynomial time into an equivalent \probTFSAT instance $\varphi'$ with $\Oh(n+m)$ 
	variables and clauses.
\end{lemma}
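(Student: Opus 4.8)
The plan is to run the classical two-phase construction: first rewrite $\varphi$ so that every clause has exactly three literals on pairwise distinct variables, and then, in a second pass, make every variable occur in at most four clauses. Each phase blows up the number of variables and clauses by only a constant factor, so composing them yields an equivalent \probTFSAT instance of size $\Oh(n+m)$.

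For the first phase I would preprocess $\varphi$ in polynomial time: delete every tautological clause (one containing both a literal and its negation), delete repeated literals inside a clause, and then perform unit propagation---repeatedly pick a clause $(\ell)$, set $\ell$ true, and simplify. If the empty clause ever appears, $\varphi$ is unsatisfiable and we output a fixed unsatisfiable \probTFSAT instance (e.g.\ all eight width-$3$ clauses over three variables); if all clauses disappear, $\varphi$ is satisfiable and we output a fixed satisfiable one. Otherwise every surviving clause has width two or three, and we widen each width-two clause $(\ell_1\vee\ell_2)$ into the pair $(\ell_1\vee\ell_2\vee y)\wedge(\ell_1\vee\ell_2\vee\bar y)$ with a private fresh variable $y$; this pair is logically equivalent to $(\ell_1\vee\ell_2)$, at most doubles the number of clauses, and each such $y$ occurs exactly twice. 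After this phase every clause has exactly three distinct-variable literals, and only the original variables can still occur in many clauses.

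For the second phase, take a variable $x$ occurring in $d$ clauses, replace its $i$-th occurrence by a fresh sign-preserving copy $x^{(i)}$, and add the cyclic ``equality'' clauses $(\bar x^{(i)}\vee x^{(i+1)})$ for $i=1,\dots,d$ (indices modulo $d$). These clauses encode the implication cycle $x^{(1)}\Rightarrow x^{(2)}\Rightarrow\cdots\Rightarrow x^{(d)}\Rightarrow x^{(1)}$, hence force all copies of $x$ to take the same value in any satisfying assignment, while conversely any assignment of $\varphi$ lifts by giving every copy the value of $x$; so equivalence is preserved, and summing over all variables this adds only $\Oh(m)$ new variables and clauses. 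Each copy then occurs in exactly three clauses (one ``real'' clause, of width three, plus two equality clauses), and it remains only to re-widen the width-two equality clauses. The delicate point---and the reason the achievable bound is exactly $4$ rather than a larger constant---is to organise this last step (and, if necessary, the number of copies created per variable) so that the copy variables still end up in at most four clauses; this is precisely the bookkeeping carried out by Tovey~\cite{Tovey84}, which I would follow. I expect this reconciliation of ``exactly three literals per clause'' with ``at most four occurrences per variable'' to be the only real obstacle, since the most naive combination of the two gadgets above pushes the copy variables slightly over budget; everything else---correctness of the equivalences and the $\Oh(n+m)$ size bound---is routine.
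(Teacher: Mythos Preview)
The paper does not supply its own proof of this lemma; it is quoted from Tovey~\cite{Tovey84} with a bare citation and then used as a black box to derive Corollary~\ref{cor:eth-3,4-sat}. So there is no in-paper argument to compare your sketch against.

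Your two-phase outline is the standard route and is essentially Tovey's construction. You have also put your finger on the one genuine difficulty: after replacing the $d$ occurrences of a variable $x$ by copies $x^{(1)},\dots,x^{(d)}$ linked by the implication cycle $(\bar{x}^{(i)}\vee x^{(i+1)})$, each copy sits in exactly three clauses, but the cycle clauses have width two; padding each of them by the doubling trick $(\ell_1\vee\ell_2)\mapsto(\ell_1\vee\ell_2\vee y)\wedge(\ell_1\vee\ell_2\vee\bar y)$ then pushes every copy to five occurrences, one over the budget of four. You resolve this by deferring to Tovey for the bookkeeping, which is appropriate given that the lemma is attributed to him---but be aware that this makes your write-up a pointer back to the cited source rather than a self-contained argument. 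For the purposes of this paper, which treats the lemma as an imported fact, that is exactly what is wanted.
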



\begin{corollary}
 \label{cor:eth-3,4-sat}
Unless ETH fails, there is no algorithm for \probTFSAT that runs in time $2^{o(n)}$, where $n$ denotes the number of variables of the input formula. 
\end{corollary}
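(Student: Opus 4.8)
The plan is to derive the statement by a direct composition of the two results already recalled above: the ETH-based lower bound for \probTSAT (Theorem~\ref{thm:eth-main}) and Tovey's polynomial-time regularization (Lemma~\ref{lem_transformation}). The argument is a textbook reduction-chaining, so I expect no real obstacle; the only point requiring a moment's care is bookkeeping the parameter through the transformation.

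Concretely, I would argue by contradiction. Suppose there is an algorithm $\mathcal{A}$ solving \probTFSAT in time $2^{o(n')}$, where $n'$ is the number of variables of the input \probTFSAT formula. Take an arbitrary \probTSAT instance $\varphi$ with $n$ variables and $m$ clauses. First I would apply Lemma~\ref{lem_transformation} to $\varphi$, obtaining in polynomial time an equivalent \probTFSAT instance $\varphi'$ with $n' = \Oh(n+m)$ variables (and $\Oh(n+m)$ clauses). Then I would run $\mathcal{A}$ on $\varphi'$; by equivalence, its answer is the correct answer for $\varphi$.

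It remains to bound the running time. The transformation step costs $\poly(n+m)$, and the call to $\mathcal{A}$ costs $2^{o(n')} = 2^{o(n+m)}$ since $n' = \Oh(n+m)$. Hence the whole procedure decides \probTSAT in time $2^{o(n+m)}$, contradicting Theorem~\ref{thm:eth-main}. Therefore no such algorithm $\mathcal{A}$ exists, which is exactly the claimed statement. The only subtlety worth spelling out explicitly is that composing a $o(\cdot)$ bound with a linear change of parameter keeps the bound in the $o(\cdot)$ class, so that the conclusion lands precisely on the $2^{o(n+m)}$ regime ruled out by Theorem~\ref{thm:eth-main}; everything else is immediate.
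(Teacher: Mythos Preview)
Your proposal is correct and is exactly the intended derivation: the paper states Corollary~\ref{cor:eth-3,4-sat} without proof immediately after Theorem~\ref{thm:eth-main} and Lemma~\ref{lem_transformation}, so the implicit argument is precisely the composition you spell out. There is nothing to add.
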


\paragraph*{List and nonuniform list ($a$:$b$)-coloring}
For integers $a,b$ and a graph $G$ with a function $L\colon V(G) \to 2^{[a]}$ (assigning a list of colors to every vertex), an \emph{$L$-($a$:$b$)-coloring} of $G$ is an assignment of exactly $b$ colors from $L(v)$ to each vertex $v\in V(G)$, such that adjacent vertices get disjoint color sets. 
The \probLab problem asks, given $(G,L)$, whether an $L$-($a$:$b$)-coloring of $G$ exists.

As an intermediary step of our reduction, we will use the following generalization of list colorings where the number of demanded colors varies with every vertex.
For integers $a,b$, a graph $G$ with a function $L\colon V(G) \to 2^{[a]}$ and a \emph{demand function} $\beta\colon V(G) \to \{1,\dots,b\}$, an \emph{$L$-($a$:$\beta$)-coloring} of $G$ is an assignment of exactly $\beta(v)$ colors from $L(v)$ to each vertex $v \in V(G)$, such that adjacent vertices get disjoint color sets. \probNULab is then the problem in which given $(G,L,\beta)$ we ask if an $L$-($a$:$\beta$)-coloring of $G$ exists.

\paragraph*{$d$-detecting families.}
In our reductions the following notion plays a crucial role.

\begin{definition}\label{def:uniftest}
A \emph{$d$-detecting family} for a finite set $U$ is a family $\mathcal{F} \subseteq 2^U$ of subsets of $U$ such that for every two functions $f,g:U\to\{0,\dots,d - 1\}$, $f \neq g$, there is a set $S$ in the family such that $\sum_{x \in S} f(x) \neq \sum_{x \in S} g(x)$.
\end{definition}

A deterministic construction of sublinear, $d$-detecting families was given by Lindstr{\"o}m~\cite{lindstrom1965combinatorial}, together with a proof that even the constant factor 2 in the family size cannot be improved.

\begin{theorem}[\cite{lindstrom1965combinatorial}]\label{th:uniftest}
For every constant $d \in \mathbb{N}$ and finite set $U$, there is a $d$-detecting family $\mathcal{F}$ on $U$ of size $\frac{2|U|}{\log_d |U|} \cdot \left(1 + o(1)\right)$. Furthermore, $\mathcal{F}$ can be constructed in time polynomial in $|U|$.
\end{theorem}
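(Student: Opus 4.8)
The plan is to recast the statement linear-algebraically and then carry out (a streamlined version of) Lindström's explicit construction. A family $\Ff=\{S_1,\dots,S_m\}\subseteq 2^U$ on $U=\{x_1,\dots,x_n\}$ is $d$-detecting precisely when the incidence matrix $M\in\{0,1\}^{m\times n}$, with $M_{ij}=[x_j\in S_i]$, is injective on $[d]^n$, since $\sum_{x\in S_i}f(x)=(Mz)_i$ for $z_j=f(x_j)$. Writing $v_1,\dots,v_n\in\{0,1\}^m$ for the columns of $M$, the condition is that the $d^n$ signed sums $\sum_j z_jv_j$, $z\in[d]^n$, are pairwise distinct, equivalently $\sum_j w_jv_j\neq 0$ for every nonzero $w\in\{-(d-1),\dots,d-1\}^n$. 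So it suffices to produce, in time $\poly(n)$, vectors $v_1,\dots,v_n\in\{0,1\}^m$ with this property and $m=\tfrac{2n}{\log_d n}(1+o(1))$.

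As a warm-up, and to see where the difficulty lies, I would first run the probabilistic method with the $v_j$ drawn independently and uniformly from $\{0,1\}^m$. For a fixed nonzero $w$ the event $\sum_j w_jv_j=0$ has probability $p(w)^m$ with $p(w)=\Pr_\xi[\sum_j w_j\xi_j=0]$ for $\xi$ uniform in $\{0,1\}^n$, and a Littlewood--Offord-type anticoncentration bound gives $p(w)=\Oh(\|w\|_0^{-1/2})$. Grouping the union bound by the support size $\ell=\|w\|_0$ (there are $\binom n\ell(2(d-1))^\ell$ such $w$) and optimising over $\ell$, the sum drops below $1$ once $m\ge 2\log_d(2d-1)\cdot\tfrac{n}{\log_d n}(1+o(1))$. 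This already yields $m=\Oh(n/\log_d n)$ (with the hidden constant depending on $d$); but the leading constant it produces, $2\log_d(2d-1)$, overshoots the target $2$ by the factor $\log_d(2d-1)\in(1,2)$ --- negligible as $d\to\infty$ but not for small $d$, in particular $d=2$, where it is $\log_2 3$ --- and removing it is the crux.

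To reach the sharp constant one must stop paying, in the union bound, for all $\approx 2^\ell$ sign patterns on each support; following Lindström I would therefore restrict the columns to a structured family, such as the vectors of a prescribed Hamming weight $r=r(m)$, equivalently: fix a partition of the $m$ coordinates into blocks and let each column pick one coordinate (or a bounded number of them) per block. This bounds every coordinate of every signed sum and shrinks the relevant count enough that the same optimisation now returns leading constant exactly $2$; what has to be verified is then the $d$-detecting property, i.e.\ that no nonzero $w\in\{-(d-1),\dots,d-1\}^n$ sums to zero on all block-classes simultaneously. Since the per-block choices can be specified by an \emph{explicit} code rather than drawn at random, the family is written down deterministically in time $\poly(n)$, settling the "furthermore" clause; and the self-reduction --- if $\Ff_0$ is $d$-detecting on $\{1,\dots,g\}$ and $U$ is split into $g=n/k$ groups $G_\ell=\{x_{\ell,1},\dots,x_{\ell,k}\}$, then $\{\,\{x_{\ell,i}:\ell\in S\}: S\in\Ff_0,\ i\in[k]\,\}$ is $d$-detecting on $U$ and has size $k\,|\Ff_0|$ (each of the $k$ "layers" $\ell\mapsto f(x_{\ell,i})$ being decoded independently via $\Ff_0$) --- is a convenient way to reduce an arbitrary $n$ to one of a clean form, preserving the ratio $|\Ff|/(n/\log_d n)$ up to $1+o(1)$ whenever $k=n^{o(1)}$. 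I expect the refined counting in this step --- the joint choice of the block structure and the weight parameter that pins the constant to $2$ while keeping everything explicit --- to be the main obstacle; the reformulation, the crude probabilistic bound, and the self-reduction are routine.
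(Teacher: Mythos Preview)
The paper does not prove this theorem. It is stated with a citation to Lindstr\"om~\cite{lindstrom1965combinatorial} and no proof is given; the paragraph that follows it offers only an informal probabilistic heuristic for the \emph{existence} of $d$-detecting families (and defers the refined argument to~\cite{GrebinskiK00}), without attempting to obtain the sharp leading constant~$2$ or the deterministic polynomial-time construction. So there is no ``paper's own proof'' to compare your proposal against.

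That said, your plan is a reasonable and well-informed outline of how one would reprove Lindstr\"om's result, and it goes substantially further than the paper's one-paragraph sketch: the linear-algebraic reformulation is correct, the Littlewood--Offord-based union bound is the right crude tool, and you correctly isolate the overshoot factor $\log_d(2d-1)$ that separates the naive probabilistic argument from the sharp bound. But, as you yourself flag, the decisive step --- the explicit structured-column construction (fixed-weight columns, block decomposition) together with the counting that nails the constant at exactly~$2$ --- is only announced, not carried out. What you have written is a plan, not a proof: the actual verification that the structured family is $d$-detecting, and the parameter choice that yields $\tfrac{2n}{\log_d n}(1+o(1))$, are precisely Lindstr\"om's contribution and are not supplied here. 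If the goal were to include a self-contained proof in the paper, this proposal would need that gap filled; if the goal is merely to match what the paper does, then a citation suffices and no proof is required.
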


Other constructions, generalizations, and discussion of similar results can be found in Grebinski and Kucherov~\cite{GrebinskiK00}, and in Bshouty~\cite{Bshouty09}.
Note that the expression $\sum_{x \in S} f(x)$ is just the product of $f$ as a vector in $[d]^{|U|}$ with the characteristic vector of $S$. 
Hence, instead of subset families, Lindstr{\"o}m speaks of \emph{detecting vectors}, while later works see them as \emph{detecting matrices}, 
that is, $(0,1)$-matrices with these vectors as rows (which define an injection on $[d]^{|U|}$ despite having few rows).
Similar definitions appear in the study of query complexity, e.g., as in the popular Mastermind game~\cite{Chvatal83}.

While known polynomial deterministic constructions of detecting families involve some number theory or fourier analysis, their existence can be argued with an elementary probabilistic argument.
Intuitively, a random subset $S\subseteq U$ will distinguish two distinct functions $f,g:U\to\{0,\dots,d - 1\}$ 
(meaning $\sum_{x \in S} f(x) \neq \sum_{x \in S} g(x)$) with probability at least $\frac{1}{2}$. 
This is because any $x$ where $f$ and $g$ disagree is taken or not taken into $S$ with probability $\frac{1}{2}$,
while sums over $S$ cannot agree in both cases simultaneously, as they differ by $f(x)$ and $g(x)$ respectively.
There are $d^{n} \cdot d^{n}$ function pairs to be distinguished.
In any subset of pairs, at least half are distinguished by a random set in expectation, thus at least one such set exists. 
Repeatedly finding such a set for undistinguished pairs, we get $|\log_{\frac{1}{2}} (d^{n} \cdot d^n)| = \Oh(n \log d)$ sets that distinguish all functions.
More strongly though, when two functions differ on more values, the probability of distinguishing them increases significantly. Hence we need fewer random sets to distinguish all pairs of distant functions. On the other hand, there are few function pairs that are close, so we need few random sets to distinguish them all as well.
This allows to show that in fact $\Oh(\frac{n}{\log_d n})$ random sets are enough to form a $d$-detecting family with positive probability~\cite{GrebinskiK00}.

\section{Hardness of \probLab}\label{sect:listab}

In this section we show our main technical contribution: an  ETH-based lower bound for \probLab. 
The key part is reducing an $n$-variable instance \probTSAT to an instance of \probNULab with only $\Oh(\frac{n}{\log b})$ vertices. 
Next, it is rather easy to reduce \probNULab to \probLab. We proceed with the first, key part.

\subsection{The nonuniform case}

We prove the following theorem through the remaining part of this section.

\begin{theorem}\label{th:non-unif-abconstruction}
For any instance $\phi$ of \probTFSAT with $n$ variables and any integer $2 \leq b\leq n/\log_2 n$, there is an equivalent instance $(G,\beta,L)$ of \probNULab[($a$:$2b$)] such that $a=\Oh(b^2\log b)$, $|V(G)|=\Oh(\frac{n}{\log b})$ and $G$ is 3-colorable.
Moreover, the instance $(G,\beta,L)$ and the 3-coloring of $G$ can be constructed in $\poly(n)$ time.
\looseness=-1
\end{theorem}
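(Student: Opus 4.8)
The construction follows the sketch given in the introduction, so I would organize the proof around the three types of "units" in the output graph: variable-group vertices, clause-group blocks, and the detecting-family gadgets. First I would fix the parameters: partition the variable set $V$ of $\phi$ into $q=\Theta(n/\log_2 b)$ groups $V_1,\dots,V_q$, each of size $\floor{\log_2 b}$ (so that the number of boolean assignments to $V_i$ is at most $b$), and partition the clause set $C$ into $p=\Theta(n/b)$ groups $C_1,\dots,C_p$, each of size roughly $b$ (using that $\phi$ is a $(3,4)$-SAT instance, so $|C|=\Oh(n)$). The palette will be laid out in blocks: one private block $P_i$ of $\leq b$ colors per variable group $V_i$ (with each color in $P_i$ in bijection with one assignment $\alpha$ to $V_i$), plus some auxiliary colors for the clause-checking gadgets; I must keep the total at $a=\Oh(b^2\log b)$, which forces the clause gadgets to reuse colors across the $p$ groups — each clause group $C_j$ should use its own fresh batch of $\Oh(b\log b)$ auxiliary colors, and there are only $\Oh(n/b)$ groups, but since auxiliary colors can be shared among all clause groups (their constraints are local to a block and the blocks for different $C_j$ are nonadjacent), $\Oh(b^2\log b)$ colors suffice in total. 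The demand $2b$ (rather than $b$) gives slack: a variable-group vertex will be made to take exactly $2b$ colors out of a list of size $2b+1$, i.e. to miss exactly one color, and that missing color encodes the chosen assignment.

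Next I would describe the gadgets in order. For each variable group $V_i$, create a vertex $w_i$ with list $L(w_i)=P_i'$ where $P_i'$ is $P_i$ padded to size exactly $2b+1$ using shared "filler" colors, and demand $\beta(w_i)=2b$; thus a valid coloring of $w_i$ selects one missing color, which we read as an assignment $\alpha_i$ to $V_i$ (fillers being forbidden as the missing color by also attaching a small clique/pendant structure, or by making the filler colors lie outside $P_i$ in a way that forces the miss inside $P_i$ — this is one of the routine "technical gadgeteering" points). For each clause group $C_j$, first build an \emph{aggregator} sub-gadget: for each variable group $V_i$ relevant to $C_j$ (i.e. sharing a variable with some clause of $C_j$), connect $w_i$ to the aggregator so that the aggregator is forced, on the colors of $P_i$, to take exactly those colors \emph{not} missing, i.e. it "sees" $\alpha_i$; by a standard layered construction (a path of vertices each with demand $1$ or $2$ and carefully chosen lists) one can transfer, into a bounded-size set of "counter" colors of $C_j$, the value $\sum_{\ell} [\alpha\text{ satisfies }\ell]$ over any prescribed subset of literals of $C_j$. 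Finally, for each set $S$ in a $d$-detecting family $\Ff$ on the universe $U=C_j$ (with $d$ a constant to be chosen so that the maximum possible number of satisfied literals summed over $S$ is $< d$ — here $d = \Oh(1)$ since each clause contributes at most $3$ and we can instead detect, per clause, the indicator "$\geq 1$ literal satisfied", making the relevant range $\{0,\dots,|S|\}$... so I'd take $d$ as small as the per-coordinate range, which is bounded, and apply Theorem~\ref{th:uniftest}), create one "test vertex" $u_{j,S}$ whose list and demand force $\sum_{c\in S}(\text{\#satisfied literals of }c)$ to equal a specific target; the conjunction over all $S\in\Ff$ of these equalities is, by the defining property of a detecting family, equivalent to the vector $(\text{sat}(c))_{c\in C_j}$ being the all-"satisfied" vector, i.e. every clause of $C_j$ is satisfied. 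Since $|\Ff|=\Oh(|C_j|/\log|C_j|)=\Oh(|C_j|/\log b)$, the total number of test vertices over all groups is $\Oh(|C|/\log b)=\Oh(n/\log b)$, and the variable-group vertices number $q=\Oh(n/\log b)$, and the aggregators add $\Oh(1)$ per (variable group, clause group) incidence, which is $\Oh(n)$ in total — so here I need to be careful: the aggregator construction must itself be compressed, or folded into the test vertices, so that it contributes only $\Oh(n/\log b)$ vertices. The correct accounting, which I would verify, is that each test vertex $u_{j,S}$ directly aggregates over its own $S$ via adjacency to the relevant $w_i$'s (no separate aggregator), so the vertex count is exactly $q+\sum_j|\Ff_j| = \Oh(n/\log b)$.

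**Correctness and 3-colorability.** For equivalence, in the forward direction a satisfying assignment $\psi$ of $\phi$ induces $\alpha_i=\psi|_{V_i}$, hence a missing color for each $w_i$; then each $u_{j,S}$ can be colored consistently because the aggregated counts match the targets (all clauses satisfied). Conversely, any $L$-$(a$:$2b)$-coloring reads off assignments $\alpha_i$ from the missing colors of the $w_i$'s; the test-vertex constraints plus the detecting-family property force every clause of every group to be satisfied by the combined assignment, which is therefore a satisfying assignment of $\phi$. For 3-colorability of $G$: I would exhibit an explicit proper $3$-coloring. The natural structure — variable-group vertices forming an independent set, test vertices forming an independent set, and each test vertex adjacent only to variable-group vertices plus possibly a bounded local gadget — should be $2$- or $3$-colorable by inspection; the only place $3$ colors are genuinely needed is inside the small arithmetic gadgets, which I would design from triangles/odd structures of bounded size admitting an obvious $3$-coloring, and then just union the colorings (since the gadgets for distinct groups are vertex-disjoint or meet only at the independent-set $w_i$'s). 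This $3$-colorability is needed later for the list-to-nonlist reduction via $KG_{a+b,b}$, so it must be maintained throughout; I'd note it as a side-constraint guiding every gadget choice.

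**Main obstacle.** The delicate part is the "technical gadgeteering" that converts the missing-color encoding at the $w_i$'s into a clean additive count that a single test vertex can compare against a fixed target — all while (a) using only $\Oh(b^2\log b)$ colors total, (b) adding only $\Oh(1)$ extra vertices per test (not per clause), and (c) keeping the whole graph $3$-colorable. In particular one must ensure that a test vertex $u_{j,S}$, which is adjacent to possibly many $w_i$'s, can still be assigned exactly its demanded number of colors iff the sum of satisfied-literal indicators over the clauses in $S$ hits the target: this requires the lists of the $w_i$'s restricted to $u_{j,S}$'s palette to interact additively, which is where the block structure of the palette and the choice of demand $2b$ (to leave room for "carry" colors) really matter. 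I expect that getting this gadget to simultaneously satisfy all three constraints — and proving it does — is the bulk of the work, while the detecting-family application and the final vertex count are then immediate from Theorem~\ref{th:uniftest}.
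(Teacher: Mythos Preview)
Your high-level outline follows the introduction sketch faithfully, but there is a genuine gap in the palette design that would prevent the construction from achieving $a=\Oh(b^2\log b)$. You propose \emph{one private block $P_i$ of $\leq b$ colors per variable group}. With $q=\Theta(n/\log b)$ groups this already uses $\Theta(nb/\log b)$ colors, which is not $\Oh(b^2\log b)$ unless $n=\Oh(b\log^2 b)$; for small $b$ (say $b$ polylogarithmic in $n$) it is far too many. Your remark about sharing ``auxiliary'' colors across clause groups does not address this: it is the \emph{variable-group} blocks themselves that must be shared. The paper's fix is a map $\sigma\colon\{1,\dots,n_V\}\to[12b\lfloor\log_2 b\rfloor]$ assigning each variable group one of only $\Oh(b\log b)$ color blocks, together with a combinatorial guarantee (property~\eqref{eq:diffvar}) that for every clause group $C_j$, the variable groups touching $C_j$ receive pairwise distinct $\sigma$-values. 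Establishing this simultaneously with the variable/clause partitions requires a three-layer greedy-coloring argument on auxiliary bounded-degree graphs and is the real reason $a=12b^2\lfloor\log_2 b\rfloor$ suffices.

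The second gap is in the counting mechanism. You leave the ``aggregator'' as a black box and worry, correctly, about keeping it to $\Oh(1)$ extra vertices per test. In the paper there is no transfer gadget at all: each clause group $C_j$ gets a single extra vertex $w_j$ with demand $2|C_j|$, adjacent to all relevant $v_i$'s and to all test vertices $u_{j,k}$. Since $|I_j|=3|C_j|$ missing colors are available and $w_j$ absorbs $2|C_j|$ of them, exactly $|C_j|$ remain for the $u_{j,k}$'s; the lists of the $u_{j,k}$ then force these $|C_j|$ colors to correspond to satisfied literals, and a $4$-detecting family (together with the complementary sets $C_j\setminus C_{j,k}$) pins down the per-clause count to exactly $1$. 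Note also that the demand ``$2b$'' in the statement is the \emph{maximum} demand (attained by $w_j$), not a uniform demand: the $v_i$'s have demand $b-1$ with lists of size $b$, so the missing-color encoding needs no filler or pendant tricks. With these three vertex types the $3$-coloring is immediate (one color per type), whereas your unspecified arithmetic gadgets would have to be designed with this constraint in mind.
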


%

Consider an instance $\phi$ of \probTSAT where each variable appears in at most four clauses.
Let $V$ be the set of its variables and $C$ be the set of its clauses.
Note that $\frac{1}3|V| \le |C| \le \frac{4}3 |V|$. Let $a = 12 b^2 \cdot \lfloor\log_2 b\rfloor$.
We shall construct, for some integers $n_V = \Oh(|V|/\log b)$ and $n_C = \Oh(|C|/b)$:
\begin{itemize}
	\item a partition $V=V_1 \uplus \ldots \uplus V_{n_V}$ of variables into groups of size at most $\lfloor\log_2 b\rfloor$,
 	\item a partition $C=C_1 \uplus \ldots \uplus C_{n_C}$ of clauses into groups of size at most $b$,
 	\item a function $\sigma \colon \{1,\dots,n_V\} \to [12 \cdot b \cdot \lfloor\log_2 b\rfloor]$, 
\end{itemize}
such that the following condition holds:
\begin{equation}\tag{$\maltese$}\label{eq:diffvar}\begin{minipage}{0.85\textwidth}
	For any $j=1,\ldots,n_C$, the variables occurring in clauses of $C_j$ are all different and they all belong to pairwise different variable groups.
	Moreover, the indices of these groups are mapped to pairwise different values by $\sigma$.
\end{minipage}\end{equation}
In other words, any two literals of clauses in $C_j$ have different variables, and if they belong to $V_i$ and $V_{i'}$ respectively, then $\sigma(i)\neq \sigma(i')$.
\begin{lemma}
Partitions $V=V_1 \uplus \ldots \uplus V_{n_V}$, $C=C_1 \uplus \ldots \uplus C_{n_C}$ and a function $\sigma$ satisfying~\eqref{eq:diffvar} can be found in time $\Oh(n)$.
\end{lemma}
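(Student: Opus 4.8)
The plan is to construct the three objects greedily, controlling the "conflict" relation so that a bounded-degree structure appears and a simple first-fit argument suffices. First I would build the clause partition $C = C_1 \uplus \dots \uplus C_{n_C}$: process clauses in any order and keep adding the next clause to the current group $C_j$ as long as (i) $|C_j| < b$ and (ii) no variable appearing in the clause already appears in a clause already placed in $C_j$; otherwise start a new group. Since each clause has $3$ variables and each variable occurs in at most $4$ clauses, each clause "conflicts" (shares a variable) with at most $3 \cdot 3 = 9$ other clauses, so condition (ii) can block at most $9$ clauses from joining a given group; hence each group has size $\Omega(\min(b,\ \cdot))$ — more carefully, a standard accounting shows the number of groups is $n_C = \Oh(|C|/b + |C|/b) = \Oh(|C|/b)$, because whenever we open a new group either the previous one was full (charge $b$ clauses) or it was blocked, and blocking events are charged to variable-sharing, of which there are $\Oh(|C|)$ in total. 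This gives $n_C = \Oh(|C|/b) = \Oh(n/b)$ and guarantees the "all variables in $C_j$ are different" part of~\eqref{eq:diffvar}. All of this runs in $\Oh(n)$ time with an auxiliary array marking, per current group, which variables are used.

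Next I would build the variable partition $V = V_1 \uplus \dots \uplus V_{n_V}$ together with $\sigma$ simultaneously. Form an auxiliary "conflict graph" $H$ on the variable set $V$, where two variables are adjacent iff they occur together in some clause of the \emph{same} clause group $C_j$. Since the variables inside any single $C_j$ are all distinct and $|C_j| \le b$, and each variable lies in at most $4$ clauses hence in at most $4$ groups, each variable has at most $4 \cdot (3b - 1) = \Oh(b)$ neighbours in $H$ — wait, more carefully, within one group a variable co-occurs with at most $3|C_j| \le 3b$ others, and it sits in at most $4$ groups, giving degree $\Oh(b)$ in $H$. I would greedily partition $V$ into groups of size exactly $\lfloor \log_2 b \rfloor$ (with one possibly smaller group) subject to the rule that \emph{no group contains two $H$-adjacent variables}; equivalently, I colour $H$ with "group labels" so that each colour class is split into chunks of size $\le \lfloor\log_2 b\rfloor$. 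Since $\Delta(H) = \Oh(b)$, a first-fit colouring uses $\Oh(b)$ colours, and splitting each colour class into chunks of size $\lfloor\log_2 b\rfloor$ gives $n_V = \Oh(|V|/\log b)$ groups, as required. This handles the second clause of~\eqref{eq:diffvar}: variables in one $C_j$ land in pairwise distinct variable groups.

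Finally, $\sigma$: I need $\sigma\colon\{1,\dots,n_V\}\to[12 b \lfloor\log_2 b\rfloor]$ such that, for each $j$, the (at most $3b$, hence $\le 3|C_j|\le 3b$) variable groups meeting $C_j$ receive pairwise distinct $\sigma$-values. Build yet another auxiliary graph $H'$ on $\{1,\dots,n_V\}$ with an edge between group indices $i,i'$ whenever some $C_j$ contains variables from both $V_i$ and $V_{i'}$. Each $C_j$ touches at most $3b$ variable groups, contributing a clique of size $\le 3b$; and each variable group $V_i$ (size $\le \lfloor\log_2 b\rfloor$, each variable in $\le 4$ clauses, each clause in exactly one group) is touched by at most $4\lfloor\log_2 b\rfloor$ clause groups, so $\deg_{H'}(i) \le 4\lfloor\log_2 b\rfloor \cdot 3b = 12 b \lfloor\log_2 b\rfloor$. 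Hence greedy (first-fit) colouring of $H'$ uses at most $12 b \lfloor\log_2 b\rfloor$ colours, which is exactly the size of the codomain of $\sigma$; this $\sigma$ satisfies~\eqref{eq:diffvar}. All three steps are straightforward linear-time greedy procedures, so the total running time is $\Oh(n)$ (using $|V|,|C| = \Theta(n)$ and that $b \le n$, so all the auxiliary degree bounds are $\poly(n)$, and adjacency lists of $H, H'$ have $\Oh(n)$ total size since each clause contributes $\Oh(1)$ and $\Oh(b^2)$ — note $\Oh(b^2) = \Oh(n^2)$ which is fine for constructing but one should be slightly careful; in fact each $C_j$ contributes a clique on $\le 3b$ groups, i.e.\ $\Oh(b^2)$ edges, and $n_C = \Oh(n/b)$ of them, so $|E(H')| = \Oh(nb)$, still polynomial). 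The main obstacle — really the only place needing genuine care — is the bookkeeping that ties the three greedy bounds together so that the constants line up with $a = 12 b^2 \lfloor\log_2 b\rfloor$ and the codomain $[12 b \lfloor\log_2 b\rfloor]$; the degree estimates for $H$ and $H'$ must be done carefully, exploiting both that each variable is in $\le 4$ clauses and that each clause group has $\le b$ clauses with distinct variables, to keep everything linear and within the stated ranges.
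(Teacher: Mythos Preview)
Your overall plan---three bounded-degree conflict graphs handled by greedy colouring---is exactly the paper's strategy, but Step~1 as you describe it does not work. The sequential greedy ``add to the current group if possible, else open a new group'' can produce $\Theta(|C|)$ groups, not $\Oh(|C|/b)$. Take clauses $c_1,\dots,c_m$ forming a path in the clause-conflict graph (each $c_i$ shares a variable with $c_{i+1}$ and with no other $c_j$; this is easy to realise under the \probTFSAT constraints). Processed in this order, every $c_{i+1}$ is blocked from the current group $\{c_i\}$, so you open $m$ singleton groups. Your accounting argument correctly notes that the number of variable-sharing clause pairs is $\Oh(|C|)$, but from this you may only conclude there are $\Oh(|C|)$ blocking events and hence $\Oh(|C|)$ groups; the jump to ``$\Oh(|C|/b + |C|/b)$'' is unjustified, since nothing forces a blocked group to contain $\Omega(b)$ clauses. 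The fix is to replace the single-active-bin greedy by either first-fit (try all existing groups) or, equivalently, a greedy proper colouring of the clause-conflict graph (degree $\le 9$, so $10$ colours suffice) followed by chopping each colour class into pieces of size $\le b$.

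For comparison, the paper reverses your order: it partitions \emph{variables first} via the primal graph $G_1$ (degree $\le 8$, so $9$ colours, then chop into size-$\lfloor\log_2 b\rfloor$ pieces), and only then partitions clauses via a graph $G_2$ whose edges encode ``two clauses share a variable group''; because the variable groups are already tiny, $G_2$ has degree $\Oh(\log b)$ rather than the $\Oh(b)$ your $H$ has. Both orders work once Step~1 is repaired, but the paper's order keeps the auxiliary graphs smaller. As a minor secondary point, you yourself observe that $|E(H')|=\Oh(nb)$, so your construction is not literally $\Oh(n)$ time; this is inessential to Theorem~\ref{th:non-unif-abconstruction}, which needs only polynomial time, and in fact the paper's $G_3$ has the same $\Oh(nb)$ edge count.
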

\begin{proof}
We first group variables, in a way such that the following holds: (P1) the variables occurring in any clause are different and belong to different variable groups. 
To this end, consider the graph $G_1$ with variables as vertices and edges between any two variables that occur in a common clause (i.e. the primal graph of $\phi$).
Since no clause contains repeated variables, $G_1$ has no loops.
Since every variable of $\phi$ occurs in at most four clauses, and since those clauses contain at most two other variables, the maximum degrees of $G_1$ is at most 8.
Hence $G_1$ can be greedily colored with 9 colors.
Then, we refine the partition given by colors to make every group have size at most $\lfloor\log_2 b\rfloor$, producing in total at most $n_V := \lceil|V|/\lfloor \log_2 b \rfloor\rceil +9$ groups $V_1,\dots,V_{n_V}$.
(P1) holds, because any two variables occurring in a common clause are adjacent in $G_1$, and thus get different colors, and thus are assigned to different groups.

Next, we group clauses in a way such that: (P2) the variables occurring in clauses of a group $C_j$ are all different and belong to different variable groups. 
For this, consider the graph $G_2$ with clauses as vertices, and with an edge between clauses if they contain two different variables from the same variable group. 
By (P1), $G_2$ has no loops.
Since every clause contains exactly 3 variables, each variable is in a group with at most $\lfloor\log_2 b\rfloor-1$ others, and every such variable occurs in at most 4 clauses, the maximum degree of $G_2$ is at most $12 (\lfloor\log_2 b\rfloor -1)$. 
We can therefore color $G_2$ greedily with $12 \lfloor\log_2 b\rfloor$ colors.
Similarly as before, we partition clauses into $n_C := \lceil |C|/b \rceil + 12\lfloor\log_2 b\rfloor$ 
monochromatic groups $C_1,\dots,C_{n_C}$ of size at most $b$ each. Then (P2) holds by construction of the coloring.

Finally, consider a graph $G_3$ with variable groups as vertices, and with an edge between two variable groups if they contain two different variables occurring in clauses from a common clause group.
More precisely, $V_i$ and $V_{i'}$ are adjacent if there are two different variables $x\in V_i$ and $x'\in V_{i'}$, and a clause group $C_j$ with clauses $c$ and $c'$ (possibly $c=c'$), such that $x$ occurs in $c$ and $x'$ occurs in $c'$.
By (P2), $G_3$ has no loops.
Since a variable has at most $\lfloor\log_2 b\rfloor -1$ other variables in its group, each of these variables occur in at most 4 clauses, each of these clauses has at most $b-1$ other clauses in its group, and each of these contains exactly 3 variables, the maximum degree of $G_3$ is at most $4 \cdot (\lfloor\log_2 b\rfloor -1) \cdot (b-1) \cdot 3$.
We can therefore color it greedily into $12 b \lfloor \log_2 b \rfloor$ colors. Let $\sigma$ be the resulting coloring.
By (P2) and the construction of this coloring, \eqref{eq:diffvar} holds.

The colorings can be found in linear time using standard techniques.
Note that we have $n_V = \lceil |V|/\lfloor \log_2 b \rfloor\rceil +9 = \Oh(|V|/\log b)$.
Moreover, since $b \leq  n / \log_2 n$, we get $\log_2 b \le \log_2 n \le \frac{n}b = \Theta({|C|}/b)$ and hence $n_C = \lceil|C|/b\rceil + 12\lfloor\log_2 b\rfloor = \Oh(|C|/b)$.
\end{proof}

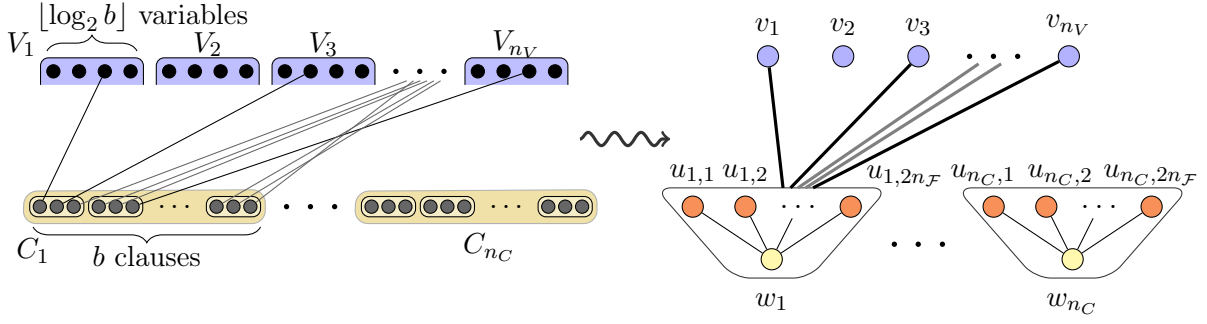
\begin{figure}[H]
	\centering
	\tikzset{ %
	V/.style={circle,draw=black,inner sep=0pt,minimum size=5pt,fill=black},
	L/.style={V, fill=gray!80!black},	
	v/.style={circle,draw=black,inner sep=0pt,minimum size=8pt,fill=blue!30},
	u/.style={circle,draw=black,inner sep=0pt,minimum size=8pt,fill=yellow!25!red!70},	
	w/.style={circle,draw=black,inner sep=0pt,minimum size=8pt,fill=yellow!40},		
	vg/.style={fill=blue!25},
	cla/.style={fill=yellow!60!brown!90!gray!30!white},
	clg/.style={draw=black!30,fill=yellow!60!brown!90!gray!50!white}
}
\begin{tikzpicture}
\begin{scope}[xscale=0.9,yscale=0.9]
\begin{scope}[xscale=0.95]
	\begin{scope}
		\node at (-0.5,0.4) { $V_1$ };
		\draw [decorate,decoration={brace,amplitude=6pt},xshift=0pt,yshift=0pt]
			(-0.1,0.3) -- (1.3,0.3)
			node [midway,yshift=13pt,xshift=19pt] {$\lfloor \log_2 b\rfloor$ variables};
		\draw[vg] (-0.2, -0.2) {[rounded corners = 3pt]-- (-0.2,0.2) -- (1.4,0.2)} -- (1.4,-0.2);
		\node[V] (var11) at (0.0,0) {};
		\node[V] (var12) at (0.4,0) {};
		\node[V] (var13) at (0.8,0) {};
		\node[V] (var14) at (1.2,0) {};	
	\end{scope}

	\begin{scope}[shift={(1.8,0)}]
		\node at (0.6,0.4) { $V_2$ };
		\draw[vg] (-0.2, -0.2) {[rounded corners = 3pt]-- (-0.2,0.2) -- (1.4,0.2)} -- (1.4,-0.2);
		\node[V] (var21) at (0.0,0) {};
		\node[V] (var22) at (0.4,0) {};
		\node[V] (var23) at (0.8,0) {};
		\node[V] (var24) at (1.2,0) {};	
	\end{scope}
	
	\begin{scope}[shift={(3.6,0)}]
		\node at (0.6,0.4) { $V_3$ };
		\draw[vg] (-0.2, -0.2) {[rounded corners = 3pt]-- (-0.2,0.2) -- (1.4,0.2)} -- (1.4,-0.2);
		\node[V] (var31) at (0.0,0) {};
		\node[V] (var32) at (0.4,0) {};
		\node[V] (var33) at (0.8,0) {};
		\node[V] (var34) at (1.2,0) {};	
	\end{scope}	

	\begin{scope}[shift={(5.1,0)}]
		\node at (0.6,0) {\huge $\ldots$};
	\end{scope}	

	\begin{scope}[shift={(6.6,0)}]
		\node at (0.6,0.4) { $V_{n_V}$ };	
		\draw[vg] (-0.2, -0.2) {[rounded corners = 3pt]-- (-0.2,0.2) -- (1.4,0.2)} -- (1.4,-0.2);
		\node[V] (varn1) at (0.0,0) {};
		\node[V] (varn2) at (0.4,0) {};
		\node[V] (varn3) at (0.8,0) {};
		\node[V] (varn4) at (1.2,0) {};	
	\end{scope}	
\end{scope}
	
	\begin{scope}[shift={(-0.2,-2)}]
		\node at (-0.1,-0.65) { $C_1$ };
		\draw [decorate,decoration={brace,amplitude=6pt,mirror},xshift=0pt,yshift=0pt]
		(-0.1,-0.3) -- (3.26,-0.3)
		node [midway,yshift=-11pt,xshift=0pt] {$b$ clauses};
		\draw[clg] (-0.23, 0.23) {[rounded corners = 5pt]-- (-0.25,-0.25) -- (3.33,-0.25) -- (3.33,0.25) -- cycle};
		\begin{scope}
			\draw[cla] (-0.15, -0.15) {[rounded corners = 3pt]-- (-0.15,0.15) -- (0.65,0.15) -- (0.65,-0.15) -- cycle};
			\node[L] (l11) at (0.0,0) {};
			\node[L] (l12) at (0.25,0) {};
			\node[L] (l13) at (0.5,0) {};
		\end{scope}
		\begin{scope}[shift={(0.87,0)}]
			\draw[cla] (-0.15, -0.15) {[rounded corners = 3pt]-- (-0.15,0.15) -- (0.65,0.15) -- (0.65,-0.15) -- cycle};
			\node[L] (l21) at (0.0,0) {};
			\node[L] (l22) at (0.25,0) {};
			\node[L] (l23) at (0.5,0) {};
		\end{scope}
		\begin{scope}[shift={(0.87*2,0)}]
			\node at (0.25,0) {$\ldots$};
		\end{scope}
		\begin{scope}[shift={(0.87*3,0)}]
			\draw[cla] (-0.15, -0.15) {[rounded corners = 3pt]-- (-0.15,0.15) -- (0.65,0.15) -- (0.65,-0.15) -- cycle};
			\node[L] (l31) at (0.0,0) {};
			\node[L] (l32) at (0.25,0) {};
			\node[L] (l33) at (0.5,0) {};
		\end{scope}
	\end{scope}
	\begin{scope}[shift={(-0.55,-2)}]
		\node at (0.87*5,0) {\huge$\ldots$};
	\end{scope}
	\begin{scope}[shift={(4.7,-2)}]
		\node at (0.87*2,-0.6) { $C_{n_C}$ };
		\draw[clg] (-0.23, 0.23) {[rounded corners = 5pt]-- (-0.25,-0.25) -- (3.33,-0.25) -- (3.33,0.25) -- cycle};
		\begin{scope}
			\draw[cla] (-0.15, -0.15) {[rounded corners = 3pt]-- (-0.15,0.15) -- (0.65,0.15) -- (0.65,-0.15) -- cycle};
			\node[L] (l1) at (0.0,0) {};
			\node[L] (l2) at (0.25,0) {};
			\node[L] (l3) at (0.5,0) {};
		\end{scope}
		\begin{scope}[shift={(0.87,0)}]
			\draw[cla] (-0.15, -0.15) {[rounded corners = 3pt]-- (-0.15,0.15) -- (0.65,0.15) -- (0.65,-0.15) -- cycle};
			\node[L] (l1) at (0.0,0) {};
			\node[L] (l2) at (0.25,0) {};
			\node[L] (l3) at (0.5,0) {};
		\end{scope}
		\begin{scope}[shift={(0.87*2,0)}]
			\node at (0.25,0) {$\ldots$};
		\end{scope}
		\begin{scope}[shift={(0.87*3,0)}]
			\draw[cla] (-0.15, -0.15) {[rounded corners = 3pt]-- (-0.15,0.15) -- (0.65,0.15) -- (0.65,-0.15) -- cycle};
			\node[L] (l1) at (0.0,0) {};
			\node[L] (l2) at (0.25,0) {};
			\node[L] (l3) at (0.5,0) {};
		\end{scope}
	\end{scope}
	
	\draw (l11) -- (var13);
	\draw (l12) -- (var32);
	\draw[black!60] (l13) -- (5.22,-0.14);
	\draw[black!60] (l21) -- (5.42,-0.14);
	\draw[black!60] (l22) -- (5.52, -0.14);
	\draw (l23) -- (varn3);	
	\draw[black!60] (l31) -- (5.62,-0.14);
	\draw[black!60] (l32) -- (5.72,-0.14);	
	\draw[black!60] (l33) -- (5.32,-0.14);
\end{scope}

\draw [very thick,black!80,->,decorate,decoration=snake] (7.0,-0.9) -- (8.2,-0.9);

\begin{scope}[shift={(9.5,0.2)}]
	\node[v,label=above:$v_1$] (v1) at (0.0,0) {};
	\node[v,label=above:$v_2$] (v2) at (1.0,0) {};
	\node[v,label=above:$v_3$]  (v3) at (2.0,0) {};
	\node (vd) at (3.0,0) {\huge$\ldots$};
	\node[v,label=above:$v_{n_V}$] (vn) at (4.0,0) {};		
	
	\begin{scope}[shift={(-1,-2)}]
		\draw[rounded corners=6pt,black!80] (-0.5,0.25) -- (2.6,0.25) -- (1.55,-0.95) -- (0.55,-0.95) -- cycle;
		\node[u,label=above:$u_{1,1}$] (u1) at (0,0) {};
		\node[u,label=above:$u_{1,2}$] (u2) at (0.7,0) {};
		\node (ud)  at (1.4, 0) {\large$\ldots$};
		\node[u,label=60:$u_{1,2n_\Ff}$] (un) at (2.1,0) {};	
		\node[w,label=below:$w_1\strut$] (w0) at (1.05, -0.7) {};
		\draw (w0) -- (u1);
		\draw (w0) -- (u2);
		\draw (w0) -- (ud);
		\draw (w0) -- (un);
		\draw[very thick] (1.2,0.25) -- (v1);
		\draw[very thick] (1.3,0.25) -- (v3);
		\draw[very thick,black!50] (1.4,0.25) -- (3.8,1.9);	
		\draw[very thick,black!50] (1.5,0.25) -- (4.1,1.9);
		\draw[very thick] (1.6,0.25) -- (vn);
	\end{scope}
	\node at (2,-2.5) {\huge$\ldots$};
	\begin{scope}[shift={(3,-2)}]
		\draw[rounded corners=6pt,black!80] (-0.5,0.25) -- (2.6,0.25) -- (1.55,-0.95) -- (0.55,-0.95) -- cycle;
		\node[u,label={[xshift=-4pt]above:$u_{n_C,1}$}] (u1) at (0,0) {};
		\node[u,label={[xshift=-12pt]75:$u_{n_C,2}$}] (u2) at (0.7,0) {};
		\node (ud)  at (1.4, 0) {\large$\ldots$};
		\node[u,label={[xshift=0pt]90:$u_{n_C,2n_\Ff}$}] (un) at (2.1,0) {};	
		\node[w,label=below:$w_{n_C}\strut$] (w0) at (1.05, -0.7) {};
		\draw (w0) -- (u1);
		\draw (w0) -- (u2);
		\draw (w0) -- (ud);
		\draw (w0) -- (un);
	\end{scope}	
\end{scope}
\end{tikzpicture}
	\vspace*{-23pt}
	\caption{\emph{(left)} The groups of variables and clauses of the formula; literals in $C_1$ are joined with their variables. Since no variable of $V_2$ occurs in $C_1$, we have $2 \not \in I_1$ -- this may allow us to make $\sigma(2)$ the same number as $\sigma(3)$, say, reducing the total number $a$ of colors needed.
	\emph{(right)} The constructed graph; thick lines represent edges to all vertices corresponding to $C_1$.}
	\label{fig:reduction}
\end{figure}

For every $1 \leq i \leq n_V$, the set $V_i$ of variables admits $2^{|V_i|}\leq b$ different assignments. 
We will therefore say that each assignment on $V_i$ is \emph{given} by an integer $x \in [b]$, for example by interpreting the first $|V_i|$ bits of the binary representation of $x$ as truth values for variables in $V_i$.
Note that when $|V_i| < \log_2 b$, different integers from $[b]$ may give the same assignment on $V_i$.\looseness=-1
%
%
\smallskip

For $1\leq j \leq n_C$, let $I_j \subseteq \{1, \ldots, n_V\}$ be the set of indices of variable groups that contain some variable occurring in the clauses of $C_j$.
Since every clause contains exactly three literals, property~\eqref{eq:diffvar} means that $|I_j|=3|C_j|$ and that $\sigma$ is injective over each $I_j$.
See Fig.~\ref{fig:reduction}.
\smallskip

For $1\leq j \leq n_C$, let $\{C_{j,1},\dots,C_{j,n_\Ff}\}$ be a 4-detecting family of subsets of $C_j$, for some $n_\Ff = \Oh(\frac{b}{\log b})$ (we can assume $n_\Ff$ does not depend on $j$ by adding arbitrary sets when $|C_j|<b$).
For every $1 \leq k \leq n_\Ff$, let $C_{j,n_\Ff+k}=C_j\setminus C_{j,k}$.

We are now ready to build the graph $G$, the demand function $\beta:V(G)\rightarrow\{1,\ldots,2b\}$, and the list assignment $L$ as follows.

\begin{enumerate}[(1)]
\item For $1 \leq i \leq n_V$, create a vertex $v_i$ with $\beta(v_i)=b-1$ and $L(v_i)=\{b\cdot \sigma(i) +x \mid x \in [b]\}$.
\item
For $1 \leq j \leq n_C$ and $1 \leq k \leq 2n_\Ff$, create a vertex $u_{j,k}$ adjacent to each $v_i$ for $i \in I_j$.\\
Let $\beta(u_{j,k})=|C_{j,k}|$ and
\begin{align*}
	L(u_{j,k})=\{b\cdot &\sigma(i)+x\ \mid\ 1\leq i \leq n_V, x\in [2^{|V_i|}]\text{ such that}\\
	&x\text{ gives an assignment of }V_i\text{ that satisfies some clause of }C_{j,k}\}.
\end{align*}
\item\vspace*{-10pt} For $1 \leq j \leq n_C$, create a vertex $w_j$, adjacent to each $v_i$ for $i \in I_j$ and to each $u_{j,k}$ ($1 \leq k \leq 2n_\Ff$). Let $\beta(w_j)=2|C_j|$ and $L(w_j) = \bigcup_{i\in I_j} \{b \cdot \sigma(i) + x \mid x \in [b]\}$.
\end{enumerate}

Before giving a detailed proof of the correctness, let us describe the reduction in intuitive terms.
%
Note that vertices of type $v_i$ get all but one color from their list; this missing color, say $b\cdot \sigma(i) +x_i$, for some $x_i\in[b]$, defines an assignment on $V_i$. 
For every $j=1,\ldots,n_C$ the goal of the gadget consisting of $w_j$ and vertices $u_{j,k}$ is to express the constraint that every clause in $C_j$ has a literal satisfied by this assignment. Since $w_j,u_{j,k}$ are adjacent to all vertices in $\{v_i \mid i \in I_j\}$, they may only use the missing colors (of the form $b\cdot \sigma(i) +x_i$, where $i \in I_j$).
Since $|I_j|=3|C_j|$, there are $3|C_j|$ such colors and $2|C_j|$ of them go to $w_j$.
This leaves exactly $|C_j|$ colors for vertices of type $u_{j,k}$, corresponding to a choice of $|C_j|$ satisfied literals from the $3|C_j|$ literals in clauses of $C_j$.
The lists and demands for vertices $u_{j,k}$ guarantee that exactly $|C_{j,k}|$ chosen satisfied literals occur in clauses of $C_{j,k}$.
The properties of 4-detecting families will ensure that every clause has exactly one chosen, satisfied literal, and hence at least one satisfied literal.
We proceed with formal proofs.

\begin{lemma}\label{lem:PokGok}
If $\phi$ is satisfiable then $G$ is $L$-($a$:$\beta$)-colorable.
\end{lemma}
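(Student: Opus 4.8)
The plan is to take a satisfying assignment $\tau$ of $\phi$ and turn it into an explicit $L$-($a$:$\beta$)-coloring of $G$, handling the three vertex types in order. First, for each variable group $V_i$, let $x_i \in [2^{|V_i|}] \subseteq [b]$ be the integer whose first $|V_i|$ bits encode the restriction $\tau|_{V_i}$; call $b\cdot\sigma(i)+x_i$ the \emph{missing color of $V_i$}. Assign to $v_i$ the set $L(v_i)\setminus\{b\cdot\sigma(i)+x_i\}$, which has exactly $b-1=\beta(v_i)$ colors, all from $L(v_i)$, as required. The only adjacencies of $v_i$ are to some $u_{j,k}$ and $w_j$ with $i\in I_j$; disjointness with those will be arranged below, so the only real work for this type is the counting, which is immediate.

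Next I would colour the clause gadgets one group $C_j$ at a time. The colours available to $w_j$ and all $u_{j,k}$ are confined (by adjacency to every $v_i$, $i\in I_j$) to the set $M_j=\{\,b\cdot\sigma(i)+x_i : i\in I_j\,\}$ of missing colours; by property~\eqref{eq:diffvar}, $\sigma$ is injective on $I_j$, so $|M_j|=|I_j|=3|C_j|$ and these colours are pairwise distinct. Since $\tau$ satisfies every clause of $C_j$, pick for each clause $c\in C_j$ one literal $\ell_c$ in $c$ that $\tau$ satisfies; say $\ell_c$ uses a variable in group $V_{i(c)}$. The point is that $x_{i(c)}$ then \emph{gives an assignment of $V_{i(c)}$ satisfying $c$}, so the colour $p_c := b\cdot\sigma(i(c))+x_{i(c)}$ lies in $L(u_{j,k})$ for every $k$ with $c\in C_{j,k}$. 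The colours $p_c$ for $c\in C_j$ are pairwise distinct elements of $M_j$ (distinct variables, hence distinct groups, hence distinct $\sigma$-values, by~\eqref{eq:diffvar}). Assign to $u_{j,k}$ the set $\{p_c : c\in C_{j,k}\}$: it has exactly $|C_{j,k}|=\beta(u_{j,k})$ elements and lies in $L(u_{j,k})$ by the observation above; assign to $w_j$ the remaining $3|C_j|-|C_j| = 2|C_j| = \beta(w_j)$ colours of $M_j$, i.e. $M_j\setminus\{p_c:c\in C_j\}$, which lies in $L(w_j)=\bigcup_{i\in I_j}\{b\sigma(i)+x : x\in[b]\}\supseteq M_j$.

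Finally I would verify all the disjointness constraints. Edges within a gadget: $w_j$ took $M_j\setminus\{p_c\}$ while each $u_{j,k}$ took a subset of $\{p_c : c\in C_j\}$, so $w_j$ is disjoint from every $u_{j,k}$; there are no edges among the $u_{j,k}$ themselves, nor among the $v_i$, nor between different gadgets. Edges $v_i$--$u_{j,k}$ and $v_i$--$w_j$ for $i\in I_j$: the colours used on $u_{j,k}$ and $w_j$ all come from $M_j$, in particular any colour of the form $b\sigma(i)+x$ used there must equal the missing colour $b\sigma(i)+x_i$ of $V_i$ (again by $\sigma$-injectivity on $I_j$), which is precisely the one colour $v_i$ did \emph{not} take; hence $v_i$ is disjoint from both. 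This checks every edge, so the constructed assignment is a valid $L$-($a$:$\beta$)-coloring, and $G$ is $L$-($a$:$\beta$)-colorable. The only mildly delicate point is making sure the colours $p_c$ are genuinely distinct and genuinely in the lists $L(u_{j,k})$ — both come down to invoking~\eqref{eq:diffvar} and the definition of "$x$ gives an assignment satisfying some clause of $C_{j,k}$" — so I expect no real obstacle, just careful bookkeeping.
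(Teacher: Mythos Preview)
Your proof is correct and follows essentially the same approach as the paper's: you construct the same coloring (assign $L(v_i)\setminus\{b\sigma(i)+x_i\}$ to $v_i$, the colours $\{p_c:c\in C_{j,k}\}$ to $u_{j,k}$, and the remaining missing colours $M_j\setminus\{p_c\}$ to $w_j$) and verify list membership, cardinalities, and disjointness in the same way. The only differences are notational.
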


\begin{proof}
	Consider a satisfying assignment $\eta$ for $\phi$.
	For $1\leq i \leq n_V$, let $x_i\in [2^{|V_i|}]$ be an integer giving the same assignment on $V_i$ as $\eta$.
	For every clause $c$ of $\phi$, choose one literal satisfied by $\eta$ in it, and let $i_c$ be index of the group $V_{i_c}$ containing the literal's variable.
	Let $\alpha: V(G) \to \binom{[a]}{\le 2b}$ be the $L$-($a$:$\beta$)-coloring of $G$ defined as follows,  for $1\leq i \leq n_V$, $1\leq j \leq n_C$, $1\leq k \leq 2n_\Ff$:
	\begin{itemize}
		\item $\alpha(v_i) = L(v_i) \setminus \{b \cdot \sigma(i) + x_i\}$
		\item $\alpha(u_{j,k}) = \{ b \cdot \sigma(i_c) + x_{i_c} \mid c \in C_{j,k}\}$
		\item $\alpha(w_j)=\{b \cdot \sigma(i) + x_i  \mid i \in I_j \setminus \{i_c \mid c\in C_j\} \}$.
	\end{itemize}
	
Let us first check that every vertex $v$ gets colors from its list $L(v)$ only.
This is immediate for vertices $v_i$ and $w_j$, while for $u_{j,k}$ it follows from the fact that $x_{i_c}$ gives a partial assignment to $V_i$ that satisfies some clause of $C_{j,k}$.

Now let us check that for every vertex $v$, the coloring $\alpha$ assigns exactly $\beta(v)$ colors to $v$.
For $\alpha(v_i)$ this follows from the fact that $|L(v_i)|=b$ and $0 \leq x_i < 2^{|V_i|} \leq b$.
Since by property~\eqref{eq:diffvar}, $\sigma$ is injective on $I_j$, and thus on $\{i_c \mid c \in C_{j,k}\} \subseteq I_j$, we have $|\alpha(u_{j,k})|=|C_{j,k}|=b(u_{j,k})$.
Similarly, since $\sigma$ is injective on $I_j$ and $|I_j \setminus \{i_c \mid c\in C_j\}| = 3|C_j|-|C_j|=2|C_j|$, we get  $|\alpha(w_{j})|=2|C_{j}|=\beta(w_{j})$.
	
It remains to argue that the sets assigned to any two adjacent vertices are disjoint. There are three types of edges in the graph, namely $v_i u_{j,k}$, $v_i w_j$, and $w_j u_{j,k}$.
The disjointness of $\alpha(w_j)$ and $\alpha(u_{j,k})$ is immediate from the definition of $\alpha$, since $C_{j,k}\subseteq C_j$.
Fix $j=1,\ldots,n_C$.
Since $\sigma$ is injective on $I_j$, for any two different $i,i'\in I_j$, we have $b \cdot \sigma(i) + x_i \not\in L(v_{i'})$. Hence,
\[
 \bigcup_{i\in I_j} \alpha(v_i) = \{b \cdot \sigma(i) + x \mid i\in I_j\text{ and }x\in[b]\} \setminus \{b \cdot \sigma(i) + x_i \mid i\in I_j\}.
\]
Since $\alpha(u_{j,k}), \alpha(w_j) \subseteq \{b \cdot \sigma(i) + x_i \mid i\in I_j\}$, it follows that edges of types $v_i u_{j,k}$ and $v_i w_j$ received disjoint sets of colors on their endpoints, concluding the proof.
\end{proof}

\begin{lemma}\label{lem:GokPok}
If $G$ is $L$-($a$:$\beta$)-colorable then $\phi$ is satisfiable.
\end{lemma}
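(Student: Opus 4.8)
The plan is to start from an arbitrary $L$-($a$:$\beta$)-coloring $\alpha$ of $G$, read off a boolean assignment $\eta$ from the colors \emph{missing} at the vertices $v_i$, and then argue that $\eta$ satisfies $\phi$. Concretely, fix $1\le i\le n_V$: since $|L(v_i)|=b$ and $\beta(v_i)=b-1$, there is exactly one color $b\cdot\sigma(i)+x_i$ of $L(v_i)$ that $\alpha$ does \emph{not} assign to $v_i$. As $x_i\in[b]$, it gives an assignment to $V_i$; let $\eta$ be the global assignment obtained by gluing these together (the groups partition $V$, so this is well defined). We must show every clause of $\phi$ is satisfied by $\eta$; we do this group by group, fixing some $j$ and working inside $C_j$.

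First I would localize the palette. Every $u_{j,k}$ and $w_j$ is adjacent to all $v_i$ with $i\in I_j$; by the computation already carried out in Lemma~\ref{lem:PokGok} (using injectivity of $\sigma$ on $I_j$ from property~\eqref{eq:diffvar}), the colors still available to these vertices are contained in $M_j:=\{b\cdot\sigma(i)+x_i : i\in I_j\}$, a set of size exactly $|I_j|=3|C_j|$. Now $w_j$ is adjacent to every $u_{j,k}$, so the $2|C_j|$ colors of $\alpha(w_j)\subseteq M_j$ are forbidden for all $u_{j,k}$; hence all the $u_{j,k}$'s draw their colors from the remaining set $M'_j:=M_j\setminus\alpha(w_j)$ of size exactly $|C_j|$. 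For each color $b\cdot\sigma(i)+x_i\in M'_j$, since $\sigma$ is injective on $I_j$ the index $i$ is uniquely determined, and $x_i$ gives an assignment of $V_i$; I want to claim this assignment satisfies some clause of $C_j$ — but I should defer that claim and instead extract, from $\alpha$, for each element $c\in C_j$ a ``chosen literal.'' This is where the $4$-detecting family enters: define $h\colon C_j\to[4]$ by... actually the cleaner route is to define, for each $k$, the integer $\beta(u_{j,k})=|C_{j,k}|=|\alpha(u_{j,k})|$, and to define $f\colon C_j\to\{0,1,\dots,3\}$ (or rather a $\{0,1\}$-valued indicator that lifts to a function suitable for the detecting family) counting how the colors of $M'_j$ distribute. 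Precisely: for each color $m=b\cdot\sigma(i)+x_i\in M'_j$ and each clause $c\in C_j$ with $\mathrm{var}(c)\in V_i$, say $m$ ``covers'' $c$ if $x_i$ makes the literal of $c$ true; by the list constraint, every $m\in\alpha(u_{j,k})$ covers at least one clause of $C_{j,k}$. Define $g\colon C_j\to\mathbb{Z}_{\ge0}$ by $g(c)=$ number of colors in $M'_j$ covering $c$; then for every $k$, $|\alpha(u_{j,k})|=|C_{j,k}|$ forces $\sum_{c\in C_{j,k}} g(c)\ge |C_{j,k}|$ — wait, I need equality and an upper bound too, which is exactly what using \emph{both} $C_{j,k}$ and its complement $C_{j,n_\Ff+k}=C_j\setminus C_{j,k}$ buys: summing the two demands gives $\sum_{c\in C_j} g(c)= |C_j| = |M'_j|$, and combined with $\sum_{c\in C_{j,k}}g(c)\ge|C_{j,k}|$ for \emph{all} the subsets (including the complements) one gets $\sum_{c\in S}g(c)=|S|$ for every $S$ in the $4$-detecting family. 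Since the constant function $c\mapsto 1$ has the same sums over all these $S$, and $g$ takes values in $[4]$ (each clause has $3$ literals, so at most $3<4$ colors of $M'_j$ can cover it — this is the place to double-check the bound and confirm $d=4$ suffices), the detecting property forces $g\equiv 1$. In particular $g(c)\ge1$ for every $c\in C_j$: some color of $M'_j$ covers $c$, i.e.\ $\eta$ satisfies $c$. Ranging over all $j$ finishes the proof.

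The main obstacle I anticipate is the bookkeeping that makes the ``$g$ takes values in $\{0,\dots,3\}$ and equals the all-ones function'' step airtight: I need to be sure (i) that distinct colors of $M'_j$ really correspond to distinct (group-index, literal) data so that ``number of covering colors'' is the right quantity, which rests on $\sigma$ being injective on $I_j$ and on $|I_j|=3|C_j|$ from \eqref{eq:diffvar}; (ii) that no color of $M'_j$ is ``wasted'' — every color assigned to some $u_{j,k}$ must cover a clause \emph{in $C_{j,k}$}, not just somewhere in $C_j$, which is precisely the content of the list $L(u_{j,k})$; and (iii) that the demands $\beta(u_{j,k})=|C_{j,k}|$ together with $\beta(w_j)=2|C_j|$ pin down the sums exactly rather than merely bounding them — this needs the complement sets $C_{j,n_\Ff+k}$, and I should state clearly that $\{C_{j,k}\}_{k\le 2n_\Ff}$ being ``the detecting family together with all complements'' is what forces equality $\sum_{c\in S}g(c)=|S|$ for $S$ in the detecting family. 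Once these three points are nailed, invoking Definition~\ref{def:uniftest} on $f=g$ and $f'=\mathbf{1}$ (both valued in $[4]$) gives $g=\mathbf{1}$ immediately, and the rest is a one-line conclusion.
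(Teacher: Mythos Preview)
Your proposal is correct and follows essentially the same route as the paper: read off an assignment from the unique missing color at each $v_i$, observe that $w_j$ and the $u_{j,k}$ are confined to the $3|C_j|$ missing colors $M_j$, use $\beta(w_j)=2|C_j|$ to leave a residual set $M'_j$ of size $|C_j|$, define a $\{0,1,2,3\}$-valued counting function on $C_j$, and use the demands $|C_{j,k}|$ together with the complement sets $C_{j,n_\Ff+k}$ to force equality of sums over every member of the $4$-detecting family, hence equality with the constant function $\mathbf{1}$. The only cosmetic difference is that your $g(c)$ already encodes ``satisfied literal,'' whereas the paper's $f(c)$ counts literals of $c$ among $\{l_i : i\in J_j\}$ and then argues truth separately; your formulation makes the final line slightly more direct but is otherwise the same argument.
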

\begin{proof}
Assume that $G$ is $L$-($a$:$\beta$)-colorable, and let $\alpha$ be the corresponding coloring. 

For $1\leq i\leq n_V$, we have $|L(v_i)| = b$ and $|\alpha(v_i)|=b-1$, so $v_i$ misses exactly one color from its list. Let $b \cdot \sigma(i) + x_i$, for some $x_i \in [b]$, be the missing color.
We want to argue that the assignment $x$ for $\phi$ given by $x_i$ on each $V_i$ satisfies $\phi$.

Consider any clause group $C_j$, for $1 \leq j \leq n_C$.
Every vertex in $\{w_j\}\cup \{u_{j,k}\mid 1 \leq k \leq 2n_{\Ff}\}$ contains $\{v_i\mid i\in I_j\}$ in its neighborhood.
Therefore, the sets $\alpha(u_{j,k})$ and $\alpha(w_j)$ are disjoint from $\bigcup_{i\in I_j} \alpha(v_i)$.
Since $L(u_{j,k}), L(w_j) \subseteq \{b \cdot \sigma(i) + x' \mid i \in I_j, x' \in [b]\}$, we get that $\alpha(u_{j,k})$ and $\alpha(w_j)$ are contained in the set of missing colors $\{b \cdot \sigma(i) + x_i \mid i\in I_j\}$ (corresponding to the chosen assignment).
By property~\eqref{eq:diffvar}, this set has exactly $|I_j|=3|C_j|$ different colors.
Of these, exactly $2|C_j|$ are contained in $\alpha(w_j)$.
Let the remaining $|C_j|$ colors be $\{b \cdot \sigma(i) + x_i \mid i \in J_j\}$, for some subset $J_j\subseteq I_j$ of $|C_j|$ indices.

Since $\alpha(u_{j,k})$ is disjoint from $\alpha(w_j)$, we have $\alpha(u_{j,k})\subseteq \{b \cdot \sigma(i) + x_i \mid i \in J_j\}$ 
for all $k$.
By definition of $I_j$, for every $i \in J_j \subseteq I_j$ there is a variable in $V_i$ that appears in some clause of $C_j$. 
By property~\eqref{eq:diffvar}, it can only occur in one such clause, so let $l_i$ be the literal in the clause of $C_j$ where it appears.
For every color $b\cdot \sigma(i) + x_i \in \alpha(u_{j,k})$, by definition of the lists for $u_{j,k}$ we know that $x_i$ gives 
a partial assignment to $V_i$ that satisfies some clause of $C_{j,k}$. This means $x_i$ makes the literal $l_i$ true and $l_i$ occurs in a clause of $C_{j,k}$.
Therefore, for each $k$, at least $|\alpha(u_{j,k})| = |C_{j,k}|$ literals from the set $\{l_i \mid i \in J_j\}$ occur in clauses of $C_{j,k}$ and are made true by the assignment $x$.

Let $f:C_j \to \{0,1,2,3\}$ be the function assigning to each clause $c\in C_j$ the number of literals of $c$ in $\{l_i \mid i \in J_j\}$.
By the above, $\sum_{c \in C_{j,k}} f(c) \geq |C_{j,k}|$ for $1\leq k \leq 2n_\Ff$.
Since each literal in $\{l_i \mid i \in J_j\}$ belongs to some clause of $C_j$, we have $\sum_{c\in C_j} f(c) = |J_j| = |C_j|$.
Then,
\[\sum_{c\in C_{j,k}}f(c) = \sum_{c\in C_{j}}f(c) - \sum_{c\in C_{j,n_\Ff+k}}f(c) \le |C_j| - |C_{j,n_\Ff+k}| = |C_{j,k}|.\]
Hence $\sum_{c \in C_{j,k}} f(c) = |C_{j,k}|$ for $1\leq k \leq 2n_\Ff$. Let $g:C_j\rightarrow\{0,1,2,3\}$ be the constant function $g \equiv 1$.
Note that 
\[\sum_{c\in C_{j,k}}g(c) = |C_{j,k}| = \sum_{c\in C_{j,k}}f(c).\]
Since $\{C_{j,1},\ldots,C_{j,n_\Ff}\}$ is a 4-detecting family, this implies that $f \equiv 1$.
Thus, for every clause $c$ of $C_j$ we have $f(c)=1$, meaning that there is a literal from the set $\{l_i \mid i \in J_j\}$ in this clause.
All these literals are made positive by the assignment $\eta$, therefore all clauses of $C_j$ are satisfied.
Since $j=1,\ldots,n_C$ was arbitrary, this concludes the proof that $\eta$ is a satisfying assignment for $\phi$.
\end{proof}

The construction can clearly be made in polynomial time and the total number of vertices is $n_V + n_C \cdot \Oh(\frac{b}{\log b}) + n_C = \Oh(\frac{n}{\log b})$.
Moreover, we get a proper 3-coloring of $G$, by coloring vertices of the type $v_i$ by color 1, vertices of the type $u_{j,k}$ by color 2, and vertices of the type $w_{j}$ by color 3.
By Lemmas~\ref{lem:PokGok} and~\ref{lem:GokPok}, this concludes the proof of Theorem~\ref{th:non-unif-abconstruction}.

\subsection{The uniform case}

In this section we reduce the nonuniform case to the uniform one, and state the resulting lower bound on the complexity of \probLab.

\begin{lemma}\label{lem:abconstruction}
For any instance $I=(G,\beta,L)$ of \probNULab[($a$:$b$)] where the graph $G$ is $t$-colorable, there is an equivalent instance $(G,L')$ of \probLab[($(a+tb)$:$b$)].
Moreover, given a $t$-coloring of $G$ the instance $(G,L')$ can be constructed in time polynomial in $|I|+b$.
\looseness=-1
\end{lemma}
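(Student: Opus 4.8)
\emph{Proof plan.} The plan is to pad every demand up to $b$ using a pool of fresh colors that is \emph{reused} across the color classes of $G$, so that only $tb$ new colors are needed in total. First I would fix the given proper $t$-coloring $c\colon V(G)\to[t]$, introduce $tb$ brand new colors, and split them into $t$ blocks $B_0,\dots,B_{t-1}$ of size $b$ each, thinking of $B_\ell$ as being reserved for the vertices of color class $\ell$. For each vertex $v$ I would pick an arbitrary $P_v\subseteq B_{c(v)}$ with $|P_v|=b-\beta(v)$ (this fits since $\beta(v)\ge 1$, so $b-\beta(v)\le b-1<|B_{c(v)}|$), say the first $b-\beta(v)$ elements of $B_{c(v)}$, and set $L'(v)=L(v)\cup P_v$. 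The output is the instance $(G,L')$ of \probLab[($(a+tb)$:$b$)]; it uses exactly $a+tb$ colors (some of the new ones may occur in no list, which is harmless), and it is clearly computable in time polynomial in $|I|+b$ once $c$ is given.

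For the equivalence, the forward direction is the easy one: given an $L$-($a$:$\beta$)-coloring $\alpha$ of $G$, I would simply set $\alpha'(v)=\alpha(v)\cup P_v$. Since $\alpha(v)\subseteq L(v)$ consists of old colors and $P_v$ of new ones, these are disjoint, so $|\alpha'(v)|=\beta(v)+(b-\beta(v))=b$ and $\alpha'(v)\subseteq L'(v)$. For an edge $uv$, the old parts $\alpha(u),\alpha(v)$ are disjoint by hypothesis, old parts are disjoint from new parts by construction, and $P_u\subseteq B_{c(u)}$, $P_v\subseteq B_{c(v)}$ lie in \emph{distinct} blocks because $c(u)\ne c(v)$ ($c$ is proper and $uv\in E(G)$); distinct blocks are disjoint, so $\alpha'(u)\cap\alpha'(v)=\emptyset$, and $\alpha'$ is an $L'$-($(a+tb)$:$b$)-coloring.

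The interesting direction, and the one that dictates the whole construction, is the converse. Given an $L'$-($(a+tb)$:$b$)-coloring $\alpha'$, the crucial observation is that $v$ is handed only $b-\beta(v)$ fresh colors in $L'$ (not, say, its whole block $B_{c(v)}$): since $|L'(v)\setminus L(v)|=|P_v|=b-\beta(v)$ and $|\alpha'(v)|=b$, this forces $|\alpha'(v)\cap L(v)|\ge b-(b-\beta(v))=\beta(v)$. So I would let $\alpha(v)$ be any $\beta(v)$-element subset of $\alpha'(v)\cap L(v)$; then $|\alpha(v)|=\beta(v)$, $\alpha(v)\subseteq L(v)$, and for every edge $uv$ we get $\alpha(u)\cap\alpha(v)\subseteq\alpha'(u)\cap\alpha'(v)=\emptyset$ for free, because $\alpha(\cdot)\subseteq\alpha'(\cdot)$. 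Hence $(G,\beta,L)$ is a YES-instance if and only if $(G,L')$ is. The only real obstacle is to spot this balance: one must give each vertex \emph{exactly} $b-\beta(v)$ new colors so that its old-color usage is pinned from below, while letting all vertices of one (independent) color class draw those colors from a single shared block of size $b$, so that adjacent vertices, living in different classes, automatically avoid each other's blocks and the palette grows only by $tb$. Everything else is a routine verification of cardinalities and disjointness.
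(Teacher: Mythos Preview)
Your proposal is correct and is essentially the same argument as the paper's: the paper defines the filling set $F(v)=\{a+c(v)b+i : i=0,\dots,b-\beta(v)-1\}$, which is precisely your $P_v$ taken as the first $b-\beta(v)$ elements of the block $B_{c(v)}=\{a+c(v)b,\dots,a+c(v)b+b-1\}$, and both directions of the equivalence are verified in the same way.
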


\begin{proof}
Let $c\colon V(G)\rightarrow [t]$ be a $t$-coloring of $G$. For every vertex $v$, define a set of filling colors $F(v)=\{a + c(v)b + i\ \colon\ i=0,\ldots,b-|\beta(v)|-1\}$ and put
\[L'(v) = L(v) \cup F(v).\]

Let $\alpha\colon V(G)\rightarrow 2^{[a]}$ be an $L$-($a$:$\beta$)-coloring of $G$. 
We define a coloring $\alpha'\colon V(G)\rightarrow 2^{[a+tb]}$ by setting $\alpha'(v)=\alpha(v) \cup F(v)$ for every vertex $v\in V(G)$. 
Observe that $\alpha'(v)\subseteq L'(v)$ and $|\alpha'(v)|=|\alpha(v)| + (b-|\beta(v)|) = b$.
Since $\alpha$ was a proper $L$-($a$:$\beta$)-coloring, adjacent vertices can only share the filling colors. 
However, the lists of adjacent vertices have disjoint subsets of filling colors, since these vertices are colored differently by $c$. 
It follows that $\alpha'$ is an $L'$-($a$:$b$)-coloring of $G$.

Conversely, let $\alpha'\colon V(G)\rightarrow 2^{[a+tb]}$ be an $L'$-($a$:$b$)-coloring of $G$. 
For every vertex $v$, we have $|\alpha'(v)\cap[a]| = b - |\alpha'(v) \cap F(v)| \ge b - (b-|\beta(v)|) = |\beta(v)|$.
Define $\alpha(v)$ to be any cardinality $\beta(v)$ subset of $\alpha'(v)\cap[a]$.
It is immediate to check that $\alpha$ is an $L$-($a$:$\beta$)-coloring of $G$.
\end{proof}

We are now ready to prove one of our main results.

\listabthm*

\begin{proof}
Let $b(n)$ be a function as in the statement.
We can assume w.l.o.g.\ that $2\le b(n)\le n/\log_2 n$, for otherwise we can replace $b(n)$ with a function $b'(n)=2+\lfloor b(n)/c\rfloor$ in the reasoning below, where $c$ is a big enough constant; note that $b'(n)=\Theta(b(n))$.
Fix a function $g(b)=o(\log b)$ and assume there is an algorithm $\mathcal{A}$ for \probLab that runs in time~$2^{g(b) \cdot n}$, whenever $b=\Theta(b(n))$.
Consider an instance of \probTFSAT with $n$ variables. 
Let $b=b(n)$.
By Theorem~\ref{th:non-unif-abconstruction} in $\poly(n)$ time we get an equivalent instance $(G,\beta,L)$ of \probNULab[($a$:$(2b)$)] such that $a=\Theta(b^2 \log b)$, $|V(G)|=\Oh(\frac{n}{\log b})$, and a 3-coloring of $G$.
Next, by Lemma~\ref{lem:abconstruction} in $\poly(n)$ time we get an equivalent instance $(G,L')$ of \probLab[($(a+6b)$:$(2b)$)].
Finally, we solve the instance $(G,L')$ using algorithm $\mathcal{A}$.
Since $b(n)=\omega(1)$, we have $g(b(n))=o(\log(b(n)))$, and $\mathcal{A}$ runs in time $2^{o(\log b(n))\cdot |V(G)|}$.
Thus, we have solved the instance $\phi$ of \probTFSAT in time $2^{o(\log b(n))\cdot |V(G)|}=2^{o(\log b(n))\cdot \frac{n}{\log b(n)}}=2^{o(n)}$.
By Corollary~\ref{cor:eth-3,4-sat}, this contradicts ETH.
\end{proof}

\section{From \probLab to \probABCol}\label{sect:nolist}

In this section we reduce \probLab to \probABCol. This is done by adding a Kneser graph, and replacing the lists by edges to appropriate vertices of the Kneser graph.
We will need the following well-known property of Kneser graphs (see e.g., Theorem 7.9.1 in the textbook~\cite{godsil-agt}).

\begin{theorem}
\label{thm:homo}
 If $p>2q$ then every homomorphism from $KG_{p,q}$ to $KG_{p,q}$ is an automorphism.
\end{theorem}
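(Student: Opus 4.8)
The plan is to combine two classical ingredients: the elementary fact that the preimage of an independent set under a graph homomorphism is again independent, and the Erd\H{o}s--Ko--Rado theorem \emph{together with its uniqueness part} — and it is precisely in the uniqueness part that the hypothesis $p>2q$ is used. Throughout, identify a vertex of $KG_{p,q}$ with the corresponding $q$-subset of the ground set $\{1,\dots,p\}$, and for each $i$ let $S_i=\{A\in V(KG_{p,q}) : i\in A\}$ be the \emph{star} at $i$. Each $S_i$ is an independent set with $|S_i|=\binom{p-1}{q-1}$, and each vertex $A$ lies in exactly $q$ stars (one for each element of $A$).

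Let $\phi$ be a homomorphism from $KG_{p,q}$ to itself. First I would record that for every independent set $I$ the preimage $\phi^{-1}(I)$ is independent: if $u\sim v$ with $\phi(u),\phi(v)\in I$, then $\phi(u)\sim\phi(v)$ because $\phi$ is a homomorphism (and $\phi(u)\neq\phi(v)$ since Kneser graphs have no loops), contradicting independence of $I$. Applying this to the stars, each $\phi^{-1}(S_i)$ is an intersecting family of $q$-subsets, so by Erd\H{o}s--Ko--Rado $|\phi^{-1}(S_i)|\le\binom{p-1}{q-1}$. Now double count in two ways: since $\phi(A)$ is a $q$-set it lies in exactly $q$ stars, hence
\[
\sum_{i=1}^{p}\bigl|\phi^{-1}(S_i)\bigr| \;=\; \sum_{A\in V(KG_{p,q})}\bigl|\{i : i\in\phi(A)\}\bigr| \;=\; q\binom{p}{q} \;=\; p\binom{p-1}{q-1}.
\]
This is a sum of $p$ terms, each at most $\binom{p-1}{q-1}$, and equal to $p\binom{p-1}{q-1}$; therefore every term equals $\binom{p-1}{q-1}$, i.e.\ each $\phi^{-1}(S_i)$ is a \emph{maximum} intersecting family of $q$-subsets of $\{1,\dots,p\}$.

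Here I would invoke the hypothesis $p>2q$: by the uniqueness part of Erd\H{o}s--Ko--Rado, a maximum intersecting family of $q$-subsets of $\{1,\dots,p\}$ must be a star. Hence there is a map $\pi\colon\{1,\dots,p\}\to\{1,\dots,p\}$ with $\phi^{-1}(S_i)=S_{\pi(i)}$ for all $i$. Unravelling memberships, $A\in\phi^{-1}(S_i)\iff i\in\phi(A)$ while $A\in S_{\pi(i)}\iff\pi(i)\in A$, so $\phi(A)=\{i : \pi(i)\in A\}$ for every $q$-set $A$. It remains to see that $\pi$ is a bijection: since $|\phi(A)|=q$, we get $\sum_{a\in A}|\pi^{-1}(a)|=q$ for every $q$-subset $A$; comparing two $q$-sets that differ in a single element (possible since $p>2q\ge q+1$) shows $|\pi^{-1}(x)|$ is the same for all $x$, and since these $p$ values sum to $p$ each equals $1$. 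Consequently $\phi(A)=\pi^{-1}(A)$, i.e.\ $\phi$ is the map induced on $V(KG_{p,q})$ by the permutation $\pi^{-1}$ of the ground set; as $A\cap B=\emptyset\iff\pi^{-1}(A)\cap\pi^{-1}(B)=\emptyset$, this map is a bijection preserving both adjacency and non-adjacency, hence an automorphism.

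The essential external input — and the only genuinely nontrivial step — is the uniqueness statement in Erd\H{o}s--Ko--Rado: for $p>2q$, the stars are the \emph{only} intersecting families of $q$-subsets of $\{1,\dots,p\}$ attaining the maximum size $\binom{p-1}{q-1}$; this is exactly what fails for $p=2q$, which is why the strict inequality is needed. Everything else is bookkeeping: the preimage-of-an-independent-set observation and the two counting identities. (One could alternatively route through no-homomorphism bounds for vertex-transitive graphs or analyse the structure of $KG_{p,q}$ directly, but the argument above is the most economical.)
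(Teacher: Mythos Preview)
Your proof is correct. Note, however, that the paper does not actually prove this theorem: it is quoted as a known result with a pointer to Theorem~7.9.1 of Godsil and Royle's textbook \emph{Algebraic Graph Theory}, and is used as a black box in the reduction of Lemma~\ref{lem:remove-lists}. Your argument---pulling back stars under $\phi$, forcing them to be maximum intersecting families via the double count $\sum_i|\phi^{-1}(S_i)|=q\binom{p}{q}=p\binom{p-1}{q-1}$, and then invoking the uniqueness clause of Erd\H{o}s--Ko--Rado for $p>2q$ to extract a permutation of the ground set---is in fact essentially the proof given in that reference, so you have reconstructed exactly the intended argument.
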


We proceed with the reduction.

\begin{lemma}
\label{lem:remove-lists}
	For any given instance of \probLab with $n$ vertices,
	there exists an equivalent instance of \probABCol[($(a+b)$:$b$)]
	with $n + \binom{a + b}{b}$ vertices.
	Moreover, it can be computed in $\poly(n, \binom{a + b}{b})$-time.
\end{lemma}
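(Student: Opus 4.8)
The plan is to take an instance $(G, L)$ of \probLab with vertex set $V(G)$ of size $n$, and form a graph $G'$ by adding, as a ``palette gadget'', a disjoint copy of the Kneser graph $KG_{a+b,b}$, whose vertices are exactly the $b$-element subsets of $[a+b]$. We then attach $G$ to this palette gadget as follows: for each vertex $v \in V(G)$ and each $b$-element set $T \subseteq [a+b]$ with $T \not\subseteq L(v)$, add an edge between $v$ and the Kneser-vertex $T$. The number of vertices of $G'$ is $n + \binom{a+b}{b}$, and $G'$ can clearly be built in time polynomial in $n$ and $\binom{a+b}{b}$, so the size bound is immediate. It remains to prove that $G'$ has an $(a+b : b)$-coloring if and only if $(G,L)$ has an $L$-$(a\colon b)$-coloring.

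For the forward direction, suppose $G'$ admits an $(a+b:b)$-coloring $\alpha\colon V(G') \to \binom{[a+b]}{b}$. Restricted to the palette gadget $KG_{a+b,b}$, $\alpha$ is a map assigning a $b$-subset of $[a+b]$ to every $b$-subset of $[a+b]$ so that disjoint sets get disjoint images — in other words, a graph homomorphism from $KG_{a+b,b}$ to itself. Since $a+b > 2b$, Theorem~\ref{thm:homo} tells us this homomorphism is an automorphism of $KG_{a+b,b}$; by the classical description of $\Aut(KG_{p,q})$ (which is induced by $\mathrm{Sym}([p])$ when $p > 2q$), $\alpha$ acts on the Kneser-vertices as $T \mapsto \pi(T)$ for some permutation $\pi$ of $[a+b]$. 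Now define $\alpha'(v) = \pi^{-1}(\alpha(v))$ for $v \in V(G)$; this is again a proper $(a+b:b)$-coloring of the relabelled graph, in which the Kneser-vertex $T$ literally receives the color set $T$. The key point: for each $v \in V(G)$ and each $b$-set $T \not\subseteq L(v)$, $v$ is adjacent to the vertex now colored $T$, so $\alpha'(v) \neq T$; equivalently, the only $b$-set $\alpha'(v)$ can equal is one contained in $L(v)$, so $\alpha'(v) \subseteq L(v)$. Thus $\alpha'$ restricted to $V(G)$ is an $L$-$(a\colon b)$-coloring of $G$ — here I use that the edges of $G$ itself are preserved, so adjacent vertices of $G$ still get disjoint sets, and that $|L(v)| \ge b$ whenever $v$ has a valid color (if some $L(v)$ has fewer than $b$ elements the instance is trivially a NO-instance and we can handle that as a degenerate case).

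For the backward direction, suppose $(G,L)$ has an $L$-$(a\colon b)$-coloring $\alpha'\colon V(G) \to \binom{[a]}{b}$ with $\alpha'(v) \subseteq L(v)$. Extend it to $G'$ by coloring each Kneser-vertex $T$ with the set $T$ itself (viewing $T \subseteq [a+b]$); this is a proper coloring of $KG_{a+b,b}$ by definition of the Kneser graph. We must check the cross edges: $v \in V(G)$ is adjacent to $T$ only when $T \not\subseteq L(v)$, and then since $\alpha'(v) \subseteq L(v)$ we have $\alpha'(v) \neq T$, but we need $\alpha'(v) \cap T = \emptyset$, which is \emph{not} automatic. So the correct construction must instead connect $v$ to every Kneser-vertex $T$ with $T \cap L(v) \neq L(v)$ in a way that forces disjointness — the clean fix is: make $v$ adjacent to $T$ whenever $T \not\subseteq L(v)$, but only after first enlarging: actually the standard trick here is to connect $v$ to all $b$-sets $T \subseteq [a+b]$ such that $T \cap ([a+b]\setminus L(v)) \neq \emptyset$ is \emph{not} what we want either. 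The main obstacle, and the step I would think hardest about, is exactly pinning down the adjacency rule between $V(G)$ and the palette so that both directions go through: the right choice is to make $v$ adjacent to $T$ iff $T \not\subseteq L(v)$, and then in the backward direction color Kneser-vertex $T$ by $T$ and observe that adjacency $v \sim T$ with $\alpha'(v) \subseteq L(v) \not\supseteq T$ still only gives $\alpha'(v) \neq T$; to get disjointness one instead colors the gadget by an automorphism-shifted copy, or equivalently one proves the lemma by arguing that \emph{some} valid coloring of $G'$ exists via a counting/Hall-type or direct assignment argument, choosing $\alpha'(v)$ and then picking the gadget coloring to be the identity on a suitably permuted labelling. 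I would resolve this by defining adjacency as ``$v \sim T$ iff $T \not\subseteq L(v)$'', handling the forward direction exactly as above (via Theorem~\ref{thm:homo}), and for the backward direction taking the identity coloring on the gadget and noting that then $v$ is adjacent precisely to those $T$ with $\alpha'(v) \subsetneq$ fails, so I'd assign $\alpha'$ within $L(v)$ and the only sets $v$ is forced to avoid are exactly the $b$-subsets not inside $L(v)$, i.e. $\alpha'(v)$ must be a $b$-subset of $L(v)$ — which is precisely an $L$-$(a:b)$-coloring, while disjointness with the gadget holds because $\alpha'(v) \subseteq L(v) \subseteq [a]$ and the gadget vertices adjacent to $v$ are those $T$ meeting $[a+b]\setminus L(v)$; choosing $L$ so the gadget vertices use colors... at this point I would simply carry out the verification carefully in both directions, treating the degenerate case $|L(v)|<b$ separately, and conclude the equivalence and hence the lemma.
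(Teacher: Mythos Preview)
Your overall plan---add a copy of $KG_{a+b,b}$ as a palette gadget and use Theorem~\ref{thm:homo} to normalize any coloring so that each Kneser vertex $T$ receives the set $T$---is exactly the paper's approach. The problem is that you have the wrong adjacency rule between $V(G)$ and the gadget, and this is why your backward direction collapses into confusion.

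You connect $v$ to $T$ whenever $T\not\subseteq L(v)$. But recall that in an $(a{+}b:b)$-coloring, adjacent vertices receive \emph{disjoint} $b$-sets, not merely different ones. So in your forward direction, from ``$v$ is adjacent to the vertex colored $T$'' you should conclude $\alpha'(v)\cap T=\emptyset$, not $\alpha'(v)\neq T$. With your rule this is fatal: for any color $c\in[a+b]$ (even $c\in L(v)$), you can form a $b$-set $T$ containing $c$ together with $b-1$ elements of $[a+b]\setminus L(v)$ (there are at least $b$ such elements); then $T\not\subseteq L(v)$, so $v\sim T$, forcing $c\notin\alpha'(v)$. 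Thus $\alpha'(v)$ would have to be empty, and $G'$ is never colorable. The rule you briefly consider and discard, ``$T\cap([a+b]\setminus L(v))\neq\emptyset$'', is the same condition restated.

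The correct rule, and the one the paper uses, is the opposite extreme: make $v$ adjacent to $T$ precisely when $T\cap L(v)=\emptyset$. Then the backward direction is immediate: color each Kneser vertex $T$ by $T$; for an edge $vT$ we have $\alpha(v)\subseteq L(v)$ and $T\cap L(v)=\emptyset$, hence $\alpha(v)\cap T=\emptyset$. For the forward direction, after normalizing via Theorem~\ref{thm:homo} so that each $T$ is colored $T$, take any $\gamma\notin L(v)$ and let $X_\gamma$ consist of $\gamma$ together with $b-1$ elements of $[a+b]\setminus([a]\cup\{\gamma\})$; then $X_\gamma\cap L(v)=\emptyset$, so $vX_\gamma$ is an edge, so $\gamma\notin\alpha''(v)$. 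This gives $\alpha''(v)\subseteq L(v)$ as required. (Incidentally, you do not need the explicit description of $\Aut(KG_{p,q})$ via $\mathrm{Sym}([p])$: once $\phi=\alpha'|_{V(K)}$ is an automorphism, $\phi^{-1}\circ\alpha'$ is again a homomorphism to $KG_{a+b,b}$, which is all you need.)
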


\begin{proof}
	Let $(G, L)$ be an instance of \probLab
	where $G$ is a graph and $L\colon V(G)\to 2^{[a]}$ describes the lists
	of allowed colors.
	Define a graph $K$ with
	$V(K) = \binom{[a + b]}{b}$ and
	$$E(K) = \{\, XY\ \colon\ X, Y \in V(K) \textrm{ and } X \cap Y = \emptyset\, \}.$$ That is, $K$ is isomorphic to the Kneser graph $KG_{a+b,b}$.
	Then let
	$V' = V(G) \uplus V(K)$ and
	\[
		E' = E(G) \uplus E(K) \uplus \{\, vX\ \colon\ v \in V(G) \textrm{ and } X \in V(K) \textrm{ and } L(v) \cap X = \emptyset\, \}.
	\]
	The graph $G' = (V', E')$ has $n + \binom{a + b}{b}$ vertices, and the construction
	can be done in time polynomial in $n+\binom{a + b}{b}$.
	Let $G'$ be our output instance of \probABCol[($(a + b)$:$b$)].
	We will show that it is equivalent to the instance $(G, L)$ of \probLab.
	
	Let us assume that $\alpha:V(G)\to \binom{[a]}{b}$ is an $L$-$(a$:$b)$-coloring of $G$.
	Consider $\alpha': V(G') \to \binom{[a + b]}{b}$ such that
	\[
		\alpha'(v) = \begin{cases}
		\alpha(v) & \mbox{ for } v \in V(G)\\
		v & \mbox{ for } v \in V(K) = \binom{[a + b]}{b}.\\
		\end{cases}
	\]
	We claim that $\alpha'$ is an ($(a+b)$:$b$)-coloring of $G'$.
	Indeed, for every edge $uv \in E(G)$ we have $\alpha'(u) \cap \alpha'(v) = \alpha(u) \cap \alpha(v) = \emptyset$ because $\alpha$ is an $L$-$(a$:$b)$-coloring of $G$.
	For every edge $XY \in E(K)$ we have $\alpha'(X) \cap \alpha'(Y) = X \cap Y = \emptyset$.
	For every edge $vX \in E(V(G), V(K))$ we have $\alpha'(v) \cap \alpha'(X) = \alpha(v) \cap X \subseteq L(v) \cap X = \emptyset$.
	
	Now, let us assume that $\alpha': V(G') \to \binom{[a + b]}{b}$ is an ($(a + b)$:$b$)-coloring of $G'$.
	Recall that $\alpha'$ is a homomorphism of $G'$ to $KG_{a+b,b}$.
	Denote $\phi = \alpha'|_{V(K)}$.
	By Theorem~\ref{thm:homo}, $\phi$ is an automorphism of $K$. Define $\alpha''=\phi^{-1} \circ \alpha'$.
	Then $\alpha''$ is an ($(a + b)$:$b$)-coloring of $G'$ with the property that $\alpha''(X)=X$ for every $X\in V(K)$.
	We claim that $\alpha''|_{V(G)}$ is an $L$-$(a$:$b)$-coloring of $G$.
	Since $\alpha''$ is a ($(a + b)$:$b$)-coloring of $G'$, it suffices to show that $\alpha''(v)\subseteq L(v)$ for every vertex $v\in V(G)$.
	Pick a color $\gamma\not\in L(v)$.
	Let $X_\gamma$ be the $b$-element set consisting of $\gamma$ and arbitrary $b-1$ elements from $[a + b] \setminus ([a] \cup \{\gamma\})$.
	Then $L(v) \cap X_\gamma = \emptyset$ and hence $vX_\gamma \in E(G')$.
	It follows that $X_\gamma\cap \alpha''(v)=\alpha''(X_\gamma)\cap \alpha''(v)= \emptyset$, and in particular $\gamma \not\in \alpha''(v)$.
	Thus, $\alpha''(v)\subseteq L(v)$ as required.
\end{proof}

We now prove our main result.

\mainthm*

\begin{proof}
Fix a computable function $f(b)$, a function $g(b)=o(\log b)$ and assume there is an algorithm $\mathcal{A}$ for \probABCol that runs in time~$f(b)\cdot 2^{g(b) \cdot n}$ for a given $n$-vertex graph, whenever $a=\Theta(b^2 \log b)$.
Without loss of generality we can replace $f(b)$ by any non-decreasing function $f'(n)$ such that $f'(n)\geq f(n)$ and $f'(n)>n$.
Intuitively, we now define an unbounded function $b(N)$ which should be at least 2, at most the inverse of $f$, and small enough so that $2^{\Oh(b\log b)} \leq \frac{N}{\log b}$.
The following function is $\omega(1)$ and a standard argument shows how to compute it in $\poly(N)$ time (see Lemmas~3.2 and 3.4 in~\cite{CaiCDF95}).
$$b(N) = \min\left(\max\{b : f(b) \leq N\}\ ,\ \max\{b : b\log b \leq \log N / \log \log N\}\right) + 2.$$

Consider an instance of \probTFSAT with $N$ variables. 
Let $b=b(N)$.
By Theorem~\ref{th:non-unif-abconstruction} in $\poly(N)$ time we get an equivalent instance $(G,\beta,L)$ of \probNULab[($a$:$(2b)$)] such that $a=\Theta(b^2 \log b)$, $|V(G)|=\Oh(\frac{N}{\log b})$, and a 3-coloring of $G$.
Next, by Lemma~\ref{lem:abconstruction} in $\poly(N)$ time we get an equivalent instance $(G,L')$ of \probLab[($(a+6b)$:$(2b)$)].
Then, by Lemma~\ref{lem:remove-lists}, in time $\poly(N,\binom{a+8b}{2b})=\poly(N)$ we get an equivalent instance $G'$ of \probABCol[($(a+8b)$:$(2b)$)] such that $|V(G')|=|V(G)|+\binom{a+8b}{2b}$.
Observe that since $a=\Theta(b^2 \log b)$ and $b\log b \leq \log N / \log \log N$,
$${\textstyle\binom{a+8b}{2b}}\leq (a+8b)^{2b} = 2^{\Oh(b \log b)} = 2^{\Oh(\log N / \log \log N)} = N^{o(1)} = o({N}/{\log b(N)})$$
Hence $|V(G')| = \Oh(\frac{N}{\log b})$.
Finally, we solve the instance $G'$ using algorithm $\mathcal{A}$.
Since $b(N)=\omega(1)$, we have $g(b(N))=o(\log(b(N)))$.
Therefore, $\mathcal{A}$ runs in time
	$$f(b) \cdot 2^{o(\log b(N))\cdot |V(G')|} \leq N \cdot 2^{o(\log b(N)) \cdot \Oh(N / \log b(N))} = 2^{o(N)}$$
solving the instance $\phi$ of \probTFSAT in time $2^{o(N)}$.
By Corollary~\ref{cor:eth-3,4-sat}, this contradicts ETH.

\end{proof}

\maincor*
\begin{proof}
Fix a computable function $f(h)$ and assume there is an algorithm $\mathcal{A}$ for \probHom that runs in time~$f(h)\cdot 2^{o(\log\log h) \cdot n}$ for a given $n$-vertex graph, whenever  $H$ is a Kneser graph $K_{a,b}$ with $a=\Theta(b^2 \log b)$.
Consider an instance of \probABCol with $n$ vertices and $a=\Theta(b^2 \log b)$. 
This is an instance of \probHom with $h=\binom{a}{b}\leq a^b = 2^{\Oh(b \log b)}$, hence $\mathcal{A}$ solves it in
$$ f(h)\cdot 2^{o(\log \log h) \cdot n} = f(2^{\Oh(b \log b)}) \cdot 2^{o(\log (b\log b)) \cdot n} \leq f'(b) \cdot 2^{o(\log b) \cdot n}$$ 
for some computable function $f'(b) \geq f(2^{\Theta(b \log b)})$, which contradicts Theorem~\ref{th:main}.
\end{proof}

\section{Low-degree testing}\label{sect:ldtesting}


In this section we derive lower bounds for \probMonoTest. In this problem, we are given an arithmetic circuit $C$ over some field $\mathbb{F}$; such a circuit may contain
input, constant, addition, negation, multiplication, and inversion gates. One gate is designated to be the output gate, and it computes some polynomial $P$ of the variables $x_1,x_2,\ldots,x_n$
that appear in the input gates. We assume that $P$ is a homogenous polynomial of degree $k$, i.e., all its monomials have total degree $k$. The task is to verify whether $P$ contains an $r$-monomial, i.e., a monomial in which every variable has its individual degree bounded by $r$, for a given parameter $r\leq k$. Abasi et al.~\cite{abasi} gave a very fast randomized algorithm for \probMonoTest.

\begin{theorem}[Abasi et al.~\cite{abasi}]
\label{thm:abasi}
Fix integers $r, k$ with $2 \le r \le k$. 
Let $p\le 2r^2+2r$ be a prime, and let $g\in\field{p}[x_1,\ldots,x_n]$ be a homogenous polynomial of degree $k$, computable by a circuit $C$.
Then, there is a randomized algorithm running in time $\Oh(r^{2k/r}|C|(rn)^{\Oh(1)})$ which
\begin{itemize}
 \item with probability at least $1/2$ answers YES when $g$ contains an $r$-monomial,
 \item always answers NO when $g$ contains no $r$-monomial.
\end{itemize}
\end{theorem}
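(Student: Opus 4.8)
The plan is to follow the algebraic ``polynomial method'' used by Koutis and Williams for multilinear monomial detection, generalising it so that monomials with a large individual degree get annihilated by a truncation-type relation. Concretely, I would work over an extension field $\field{p^s}$ with $s=\Oh(\log k)$ and a finite-dimensional commutative $\field{p^s}$-algebra $\Aa$, and define a randomised substitution $x_i\mapsto\xi_i\in\Aa$ whose entries are either constants or fresh, uniformly random elements of $\field{p^s}$. Evaluating the circuit $C$ gate by gate over $\Aa$ produces $P(\xi_1,\dots,\xi_n)\in\Aa$; the algorithm extracts one designated coordinate $\Phi$ of this element and answers YES iff $\Phi\neq 0$. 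Since each $\xi_i$ is affine in the random elements, $\Phi$ is a polynomial of degree at most $k$ in them, so by the Schwartz--Zippel lemma over $\field{p^s}$ (taking $p^s\ge 2k$, hence $s=\Oh(\log k)$) the claimed one-sided error follows once we establish: \emph{(i)} if $g$ has no $r$-monomial then $\Phi$ is identically $0$; and \emph{(ii)} if $g$ has an $r$-monomial then $\Phi$ is not the zero polynomial in the random elements.

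The choice of $\Aa$ is the crux. I would take $\Aa$ to be the group algebra $\field{p^s}[(\mathbb{Z}_p)^{t}]$ with $t=\Theta(k/r)$, together with a designated target group element $g^\star$ reachable from the identity by exactly $k$ ``unit steps''; the number of slots $t$ is chosen just large enough that any $r$-monomial of degree $k$, whose support has size at least $k/r$, is forced to ``spread out'' over the slots. Each variable gets $\xi_i=\sum_{h\in(\mathbb{Z}_p)^t} a_{i,h}\, h$ with the $a_{i,h}$ independent and uniform in $\field{p^s}$, and $\Phi$ is the coefficient of $g^\star$ in the product. The substitution is designed so that, in the formal expansion, the contribution of a monomial $\mathbf x^{\mathbf d}$ of total degree $k$ to the $g^\star$-coordinate is a linear combination of monomials in the $a_{i,h}$'s whose coefficients are multinomial numbers reduced modulo $p$; whenever some $d_i\ge r+1$, a truncation-type relation (equivalently, the vanishing mod $p$ of the relevant multinomial coefficients in the forbidden range) makes every such coefficient vanish, which gives \emph{(i)}. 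The dimension of $\Aa$ is $p^{\,t}=p^{\Oh(k/r)}=r^{\Oh(k/r)}$, and since $(\mathbb{Z}_p)^t$ is abelian, multiplication in $\Aa$ is a convolution computable in a polynomial-in-$\dim\Aa$ number of $\field{p^s}$-operations; running the whole circuit therefore costs $\Oh(r^{2k/r}\,|C|\,(rn)^{\Oh(1)})$, and it is exactly the hypothesis $p\le 2r^2+2r$ — a prime in such a range always exists by Bertrand's postulate, so the setting is non-vacuous — that keeps the base of the exponent polynomial in $r$.

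The main obstacle is \emph{(ii)}: showing that when $g$ does contain an $r$-monomial $\mathbf x^{\mathbf d}$, its contribution to $\Phi$ is not cancelled by contributions of the other monomials of $g$. I would isolate one particular monomial $M$ in the variables $a_{i,h}$ that can be produced \emph{only} by $\mathbf x^{\mathbf d}$ routed through the slots in one fixed way, and argue that its coefficient in $\Phi$ is then a single multinomial-type number all of whose parts are at most $r<p$, hence nonzero modulo $p$; since no other monomial of $g$ contributes $M$, this term cannot be cancelled and so $\Phi\not\equiv 0$. Establishing that such a ``uniquely generated'' $M$ exists, and that $p>r$ (indeed $p>2r$, forced by $p$ being a prime in the stated range) is the precise arithmetic condition under which these multinomial coefficients survive mod $p$, is the technical heart of the argument. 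The remaining pieces — a polynomial-time construction of $\field{p^s}$ for $s=\Oh(\log k)$, bounding the cost of circuit arithmetic over $\Aa$, and the exact bookkeeping behind \emph{(i)} — I would dispatch afterwards as routine.
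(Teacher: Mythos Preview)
The paper does not prove this theorem at all: it is quoted verbatim from Abasi et al.\ \cite{abasi} and used only as a black box (to observe that \probLab can be solved in time $b^{\Oh(n)}$ via \probMonoTest, and to motivate the lower bound). There is therefore no ``paper's own proof'' to compare your attempt against; the authors never intended one.

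That said, your sketch is broadly faithful to the actual argument of Abasi et al.: the group-algebra substitution over $\field{p^s}[(\mathbb{Z}_p)^t]$ with $t=\Theta(k/r)$, the Schwartz--Zippel step over a large enough extension field, and the identification of the running time with $p^{\Theta(k/r)}\cdot|C|\cdot\poly$ are all correct in outline, and you are right that the constraint $p>r$ (forced by the prime range) is what keeps the relevant multinomial coefficients nonzero modulo $p$. Where your sketch is vaguest is exactly where the real work lies: in Abasi et al.\ the annihilation of non-$r$-monomials is not achieved directly by a ``truncation-type relation'' in $\Aa$, but rather by first replacing each $x_i$ with a short sum of fresh variables and then appealing to the multilinear detection machinery of Koutis and Williams; your item \emph{(i)} would need to be made precise along those lines before it goes through.
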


This result was later derandomized by Gabizon et al.~\cite{GabizonLP15} under the assumption that the circuit is {\em{non-cancelling}}, that is, it contains only input, addition, and multiplication gates.
Many concrete problems like {\sc{$r$-Simple $k$-Path}} can be reduced to \probMonoTest by encoding the set of candidate objects as monomials of some large polynomial,
so that ``good'' objects correspond to monomials with low individual degrees. As we will see in a moment, this is also the case for \probLab.

Let $(G=(V,E),L)$ be an instance of the \probLab problem and let $\Ii$ be the family of all independent sets of $G$.
We denote $n=|V|$. Let $C_{a}(G,L)$ denote the set of all functions $c:V\rightarrow 2^{[a]}$ such that for every edge $uv\in E$ the sets $c(u)$ and $c(v)$ are disjoint, and for every vertex $v$ we have $c(v)\subseteq L(v)$.
Consider the following polynomial in $n(a+1)$ variables $\{x_v\}_{v\in V}$ and $\{y_{v,j}\}_{v\in V, j\in [a]}$, over $\field{2}$.

\begin{equation}
\label{eq:p_G}
 p_G = \sum_{\substack{c\in C_a(G,L)\\\sum_v|c(v)|=bn}}\prod_{v\in V}x_v^{|c(v)|}\prod_{j\in c(v)}y_{v,j}.
\end{equation}

Note that every summand in expression~\eqref{eq:p_G} has a different set of variables, therefore it corresponds to a monomial (with coefficient 1).
Then the following proposition is immediate.

\begin{proposition}
\label{prop:equiv}
 There is a list ($a$:$b$)-coloring of graph $G$ iff $p_G$ contains a $b$-monomial. 
\end{proposition}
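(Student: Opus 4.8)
The plan is to unwind the definitions on both sides. The key observation, already noted right before the statement, is that each summand in~\eqref{eq:p_G} uses a distinct set of variables, so after expanding the product $\prod_{v\in V}x_v^{|c(v)|}\prod_{j\in c(v)}y_{v,j}$ we obtain one monomial per function $c$, and distinct functions $c$ give distinct monomials; hence there is no cancellation over $\field{2}$ and the monomials of $p_G$ are \emph{exactly} the monomials $m_c := \prod_{v\in V}x_v^{|c(v)|}\prod_{j\in c(v)}y_{v,j}$ for $c\in C_a(G,L)$ with $\sum_v|c(v)|=bn$. So it suffices to show: $G$ has a list $(a{:}b)$-coloring if and only if some such $m_c$ is a $b$-monomial, i.e.\ every variable has individual degree at most $b$ in $m_c$.

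For the forward direction, suppose $\alpha$ is a list $(a{:}b)$-coloring of $G$. Then $\alpha\in C_a(G,L)$: adjacent vertices receive disjoint sets and $\alpha(v)\subseteq L(v)$ by definition. Since $|\alpha(v)|=b$ for every $v$, we have $\sum_v|\alpha(v)|=bn$, so $\alpha$ contributes the monomial $m_\alpha$ to $p_G$. In $m_\alpha$ the variable $x_v$ has degree $|\alpha(v)|=b$, and each $y_{v,j}$ has degree $1$ (it appears iff $j\in\alpha(v)$, and at most once). All individual degrees are therefore at most $b$, so $m_\alpha$ is a $b$-monomial.

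For the converse, suppose $p_G$ contains a $b$-monomial; by the correspondence above it equals $m_c$ for some $c\in C_a(G,L)$ with $\sum_v|c(v)|=bn$. The degree of $x_v$ in $m_c$ is $|c(v)|$, so the $b$-monomial condition forces $|c(v)|\le b$ for every $v$. Combined with $\sum_v|c(v)|=bn$ over $n$ vertices, this forces $|c(v)|=b$ for every $v$. Thus $c$ assigns exactly $b$ colors to each vertex, from $L(v)$, with disjoint sets on adjacent vertices — i.e.\ $c$ is a list $(a{:}b)$-coloring of $G$. This completes both directions.

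I do not expect any serious obstacle here; the statement is called "immediate" in the text for good reason. The only point requiring a moment's care is the no‑cancellation argument over $\field{2}$ — one must note that the map $c\mapsto m_c$ is injective on $C_a(G,L)$, so that a monomial of $p_G$ genuinely corresponds to a single $c$ (rather than a sum of several $m_c$ that happens to collapse), and that monomials of the form $m_c$ are never "hidden" by being equal to another $m_{c'}$; this follows because $c(v)=\{j : y_{v,j}\mid m_c\}$ can be read off from the exponent vector of $m_c$. Everything else is a direct translation between the combinatorial coloring condition and the degree conditions on the monomial.
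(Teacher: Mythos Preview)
Your proof is correct and matches the paper's approach exactly: the paper simply notes that the summands of~\eqref{eq:p_G} are distinct monomials (the observation you make explicit via the injection $c\mapsto m_c$) and declares the proposition ``immediate'' without further argument. Your write-up supplies precisely the routine verification the paper omits.
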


Now we show that $p_G$ can be evaluated relatively fast.

\begin{lemma}
\label{lem:circuit-eval}
 The polynomial $p_G$ can be evaluated using a circuit of size $2^n \poly(a,n)$, which can be constructed in time $2^n \poly(a,n)$
\end{lemma}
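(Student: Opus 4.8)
The goal is to build an arithmetic circuit over $\field{2}$ of size $2^n\poly(a,n)$ that evaluates $p_G$ as defined in~\eqref{eq:p_G}. The natural approach is a subset-sum / inclusion--exclusion style dynamic programming over the lattice of subsets of $V$, mirroring the classical $\Ohstar(2^n)$ chromatic-number algorithm of Bj\"orklund et al.~\cite{BjorklundHK09} but keeping track of colors symbolically through the $y_{v,j}$ variables. First I would observe that $p_G$ is a sum over colorings $c\in C_a(G,L)$ with total weight $\sum_v|c(v)|=bn$, and that such a coloring is the same thing as a choice, for each vertex $v$, of a subset $c(v)\subseteq L(v)$, subject to the constraint that $c(u)\cap c(v)=\emptyset$ whenever $uv\in E$. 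Equivalently, for each color $j\in[a]$, the set $S_j=\{v : j\in c(v)\}$ must be an independent set of $G$.

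**Key steps.** I would decompose the polynomial color by color. For a fixed color $j$, the contribution of choosing the independent set $S_j$ to be exactly those vertices using color $j$ is $\prod_{v\in S_j} x_v\, y_{v,j}$ (with the factor present only if $j\in L(v)$ for all $v\in S_j$). So define, for each $j\in[a]$, the polynomial $q_j=\sum_{S\in\Ii,\ j\in L(v)\ \forall v\in S}\prod_{v\in S}x_v y_{v,j}$, and note that $p_G$ is the degree-$bn$ homogeneous part (in the $x$-variables) of $\prod_{j\in[a]} q_j$ --- actually, since every monomial of $p_G$ already has a fixed set of $y$-variables and the $x$-degree equals the number of color-slots used, and we are over $\field{2}$ so cross terms with repeated variables vanish appropriately, the product $\prod_j q_j$ expands exactly to $\sum_{c\in C_a(G,L)}\prod_v x_v^{|c(v)|}\prod_{j\in c(v)}y_{v,j}$; restricting to $\sum_v|c(v)|=bn$ is a homogeneity condition I can enforce by introducing a fresh marker variable $z$, replacing each $x_v y_{v,j}$ factor by $z\,x_v y_{v,j}$, taking the coefficient of $z^{bn}$, which is extracted by a standard interpolation/summation trick over $bn+1$ evaluation points of $z$ --- but over $\field{2}$ we instead just build the product and note that the unwanted terms have $x$-degree $\ne bn$; since the circuit only needs to \emph{evaluate} $p_G$ at given points, I can instead carry the $x$-degree explicitly. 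Concretely I would maintain, for each subset $T\subseteq V$, a table entry computing $\sum_{S\in\Ii, S\subseteq T} \prod_{v\in S}(\cdot)$, building it up by the standard recurrence that adds one vertex at a time and checks independence against already-chosen vertices; the independence check is a fixed predicate on $G$ and costs only $\poly(n)$ gates. Iterating this construction once per color $j\in[a]$, and multiplying the $a$ resulting values together, gives $\prod_j q_j$; the table has $2^n$ entries, each computed with $\poly(a,n)$ gates, for a total of $2^n\poly(a,n)$. Finally I would extract the right homogeneous component: attach to every ``add vertex $v$ with color $j$'' step a symbolic weight, so the per-subset table actually stores a small vector indexed by the current $x$-degree $0,1,\dots,bn$ (that is $\Oh(bn)=\poly(a,n)$ coordinates), and read off the coordinate $bn$ at the end.

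**Main obstacle.** The genuinely delicate point is arguing that the product $\prod_{j} q_j$, after degree extraction, equals $p_G$ \emph{as a polynomial identity}, not merely numerically --- i.e.\ that no distinct colorings $c,c'$ produce the same monomial and that no unwanted cancellation or collapse happens. This needs the remark already made right after~\eqref{eq:p_G}, that every summand of $p_G$ has a distinct set of variables, which means the monomials of the different $q_j$'s involve disjoint sets of $y$-variables, so their product is multilinear in the $y$'s and the expansion is term-by-term injective on the color sets $(S_j)_{j\in[a]}$; and $(S_j)_j$ with each $S_j$ independent is exactly in bijection with $c\in C_a(G,L)$ via $c(v)=\{j: v\in S_j\}$, with $|c(v)|=\sum_j[v\in S_j]$ recorded by the accumulated $x_v$-exponent. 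The only remaining care is the $\field{2}$ setting: I must make sure the homogeneity filter (keeping only $\sum_v|c(v)|=bn$) is implemented by carrying the degree index explicitly in the DP table rather than by any interpolation that would need field elements we don't have; with the vector-valued table this is automatic. Everything else --- the $2^n$ subset enumeration, the $\poly(a,n)$ per-entry cost, constructibility of the circuit in time $2^n\poly(a,n)$ --- is routine bookkeeping.
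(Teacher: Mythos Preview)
Your proposal is correct and follows essentially the same approach as the paper: write $p_G$ as the homogeneous degree-$2bn$ part of $q_G=\prod_{j\in[a]}\sum_{I\in\Ii}\prod_{v\in I}x_vy_{v,j}$, build a $2^n\poly(a,n)$-size circuit for $q_G$ by enumerating independent sets, and extract the correct homogeneous component by splitting every gate into degree-indexed copies. The only differences are cosmetic---the paper enumerates $\Ii$ directly rather than via a subset DP, tracks total degree $2bn$ rather than $x$-degree $bn$, and omits your (correct) justification of the bijection between tuples $(S_j)_{j\in[a]}$ of independent sets and colorings $c\in C_a(G,L)$.
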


\begin{proof}
Consider the following polynomial:
\begin{equation}\label{eq:polynom}
 q_G = \prod_{j=1}^a\sum_{I \in \Ii}\prod_{v\in I}x_vy_{v,j}.
\end{equation}
Observe that $p_G$ is obtained from $q_G$ by removing all monomials of degree different than $2bn$. 
Eq.~\eqref{eq:polynom} shows that $q_G$ can be evaluated by a circuit $C_q$ of size $|\Ii|\poly(a,n)\leq 2^n \poly(a,n)$, which can be constructed in time $2^n \poly(a,n)$.
We obtain from $C_q$ a circuit $C_p$ for $p_G$ by splitting gates according to degrees, in a bottom-up fashion, as follows.

Every input gate $u$ of $C_q$ is replaced with a gate $u_1$ in $C_p$.
Every addition gate $u$ with inputs $x$ and $y$ in $C_q$ is replaced in $C_p$ by $2an$ addition gates $u_1,\ldots,u_{2an}$, where $u_i$ has inputs $x_i$ and $y_i$ (whenever $x_i$ and $y_i$ exist).
Every multiplication gate $u$ with inputs $x$ and $y$ in $C_q$ is replaced in $C_p$ by $2an$ addition gates $u_1,\ldots,u_{2an}$. 
Moreover, for every pair of integers $1\leq r,s \leq 2an$ we create a multiplication gate $u_{r,s}$ with inputs $x_r$ and $y_s$ (whenever they exist) and make it an input of the addition gate $u_{r+s}$.
It is easy to see that for every gate $u$ of $C_q$, for every $i$, the gate $u_i$ of $C_p$ evaluates the same polynomial as $u$, but restricted to monomials in which the total degree is equal to $i$.
When $o$ is the output gate of $C_q$, then $o_{2bn}$ is the output gate of $C_p$.
Clearly, $|C_p| \leq (2an+1)^2|C_q|$, and $C_p$ can be constructed from $C_q$ in time $2^n \poly(a,n)$.
\end{proof}

Since $p_G$ is a homogenous polynomial of degree $k=2bn$, by putting $r=b$ we can combine Proposition~\ref{prop:equiv}, Theorem~\ref{thm:abasi} and Lemma~\ref{lem:circuit-eval} 
to get yet another polynomial-space algorithm for \probLab, running in time $b^{\Oh(n)}\cdot \poly(n)$. 
Similarly, if the running time in Theorem~\ref{thm:abasi} was improved from to $2^{o(\log r/r)\cdot k}\cdot |C|\poly(r,n)$, 
then we would get an algorithm for \probLab in time $2^{o(\log b)\cdot n}\cdot \poly(n)$, which contradicts ETH by Theorem~\ref{th:listab}. 
However, a careful examination shows that this chain of reductions would only yield instances of \probMonoTest with $r=\Oh(\sqrt{k/\log k})$.
Hence, this does not exclude the existence of a fast algorithm that works only for large $r$. Below we show a more direct reduction, which excludes fast algorithms for a wider spectrum of pairs $(r,k)$.
                                                                                                                                                
In the \probCLSubsetSum problem, we are given $n+1$ numbers $s,a_1,\ldots,a_n$, each represented as $n$ decimal digits.
For any number $x$, the $j$-th decimal digit of $x$ is denoted by $x^{(j)}$.
It is assumed that $\sum_{i=1}^n a_i^{(j)} < 10$, for every $j=1,\ldots,n$.
The goal is to verify whether there is a sequence of indices $1 \le i_1 < \ldots < i_k \le n$ such that $\sum_{q=1}^k a_{i_q} = s$.
(Note that by the small sum assumption, this is equivalent to the statement that $\sum_{q=1}^k a_{i_q}^{(j)} = s^{(j)}$, for every $j=1,\ldots,n$).
The standard NP-hardness reduction from {\sc{3-SAT}} to {\sc{Subset Sum}} in fact outputs an instance of \probCLSubsetSum of linear size, yielding the following.

\begin{lemma}\label{lem:cless-lb}
Unless ETH fails, there is no algorithm that solves \probCLSubsetSum with $n$ numbers in time $2^{o(n)}$.
\end{lemma}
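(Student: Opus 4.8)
The plan is to exhibit a polynomial-time, linear-size reduction from \probTFSAT\ to \probCLSubsetSum\ and then invoke Corollary~\ref{cor:eth-3,4-sat}. The starting point is the classical NP-hardness reduction from \probTSAT\ to \probWUMultiSetCover-style {\sc Subset Sum}, which I now recall and adapt so that (i) no carries occur when the chosen numbers are summed, and (ii) the instance produced has size linear in the number of variables of the input formula. Fix a \probTFSAT\ instance $\varphi$ with variable set $V$ of size $m$ and clause set $C$ of size $|C|\le\frac43 m$ (by Tovey's regularization, Lemma~\ref{lem_transformation}, we may assume each clause has exactly $3$ variables and each variable occurs in at most $4$ clauses). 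We build decimal numbers with one ``digit block'' per variable of $\varphi$ and one digit block per clause of $\varphi$; since $m+|C|=\Oh(m)$, each number will have $\Oh(m)$ decimal digits, which is exactly the linear bound we want (we may pad with leading zeros so that all numbers have the same length $n=\Oh(m)$).

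First I would set up the digit blocks. For each variable $x\in V$ introduce two numbers $a_x^{+}$ and $a_x^{-}$ (corresponding to setting $x$ true, respectively false). In the variable block for $x$, both $a_x^{+}$ and $a_x^{-}$ carry a single $1$, and the target $s$ carries a single $1$ there as well; all other numbers carry $0$ in that block. This forces exactly one of $a_x^{+},a_x^{-}$ to be selected for each $x$, encoding a truth assignment. For each clause $c\in C$, the clause block is a single decimal digit; the number $a_x^{+}$ carries a $1$ in the block of $c$ if the literal $x$ appears in $c$, the number $a_x^{-}$ carries a $1$ there if $\lnot x$ appears in $c$, and the target $s$ carries the digit $3$ in the block of $c$. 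In addition, for each clause $c$ I would add three ``slack'' numbers $t_{c,1},t_{c,2},t_{c,3}$ each carrying a single $1$ in the block of $c$ and $0$ elsewhere. A selection of numbers summing to $s$ then corresponds to a truth assignment together with, for each clause $c$, a count in $\{0,1,2,3\}$ of slack numbers chosen so that (satisfied literals of $c$)$\,+\,$(chosen slacks)$\,=3$; this is solvable for clause $c$ iff $c$ has at least one satisfied literal, i.e.\ iff the assignment satisfies $\varphi$. Because each clause has exactly $3$ literals and $3+3=6<10$, and because every variable block is touched by only two numbers each contributing $1$, the crucial carry-less hypothesis $\sum_i a_i^{(j)}<10$ holds for every digit position $j$; this is the point where the ``$(3,4)$'' bounded-occurrence normal form is used to guarantee small column sums. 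The size of the selected subsequence $k$ is not prescribed a priori by \probCLSubsetSum\ (the problem only asks for \emph{some} increasing sequence of indices), so I do not need to pad to a fixed cardinality.

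Having fixed the construction I would then verify the two directions of equivalence explicitly: a satisfying assignment $\eta$ of $\varphi$ yields the subset $\{a_x^{+}:\eta(x)=1\}\cup\{a_x^{-}:\eta(x)=0\}$ together with, in each clause $c$, exactly $3-(\text{number of literals of }c\text{ satisfied by }\eta)\in\{0,1,2\}$ of the slacks $t_{c,\cdot}$, and this sums digitwise to $s$ with no carries by the column-sum bound; conversely, any carry-less subset summing to $s$ must, in each variable block, pick exactly one of $a_x^{\pm}$ (giving a well-defined assignment), and the digit $3$ in each clause block forces at least one selected $a_x^{\pm}$ to contribute there, i.e.\ forces the clause to be satisfied. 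The total number $n$ of decimal digits per number is $m+|C|=\Oh(m)$, and the number of numbers in the instance is $2m+3|C|=\Oh(m)$, so a $2^{o(n)}$-time algorithm for \probCLSubsetSum\ would solve \probTFSAT\ in $2^{o(m)}$ time, contradicting Corollary~\ref{cor:eth-3,4-sat}. The main obstacle is purely bookkeeping: making sure the column-sum bound $\sum_i a_i^{(j)}<10$ is respected everywhere (which is why one works from the bounded-occurrence $(3,4)$-normal form rather than from raw \probTSAT, and why slack digits in a clause block contribute at most $3$), and making sure the digit length stays linear rather than, say, quadratic — both are handled by giving each variable and each clause a single decimal digit of its own rather than a unary-style block. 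No genuinely new idea is needed beyond the textbook reduction; the content of the lemma is the observation that this reduction is already carry-free and linear-size.
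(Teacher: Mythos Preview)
Your overall strategy---spell out the textbook $3$-SAT $\to$ {\sc Subset Sum} reduction, observe it is carry-free and linear-size, and invoke the ETH lower bound for the source problem---is exactly the paper's approach (the paper cites Cormen et al.\ and starts from \probTSAT together with Theorem~\ref{thm:eth-main} rather than from \probTFSAT, but this is cosmetic). However, your concrete gadget has a bug that breaks the converse direction.

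You introduce \emph{three} slack numbers $t_{c,1},t_{c,2},t_{c,3}$, each contributing $1$ in the block of $c$, and set the target digit to $3$. Then the equation $(\text{satisfied literals of }c)+(\text{chosen slacks})=3$ is solvable with $0$ satisfied literals by taking all three slacks, so an unsatisfiable formula can yield a YES-instance. Your own sentence ``this is solvable for clause $c$ iff $c$ has at least one satisfied literal'' is therefore false as written, and the later claim that ``the digit $3$ in each clause block forces at least one selected $a_x^{\pm}$ to contribute there'' does not follow. The standard fix is to use only \emph{two} slack numbers per clause (so slacks contribute at most $2$); the column sum becomes $3+2=5<10$ and the correctness argument goes through verbatim. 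Two minor points: the role of the $(3,4)$ normal form is not to bound column sums (those are at most $2$ and $5$ regardless of occurrence bounds) but to ensure $|C|=\Oh(m)$ so that the instance has linearly many digits and numbers; and since \probCLSubsetSum requires the number of numbers to equal the digit length $n$, you should also pad with zero numbers and/or zero digits to equalize the two, as the paper does.
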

\begin{proof}
Let $\varphi$ be an instance of {\sc{3-SAT}} with $N$ variables and $M$ clauses.
By a standard NP-hardness reduction for {\sc Subset Sum} (see e.g.\ the textbook of Cormen et al.~\cite{cormen}) 
in polynomial time one can build an equivalent instance of \probCLSubsetSum, with $\Oh(N+M)$ numbers, each having of $\Oh(N+M)$ decimal digits, 
and with the sum of $j$-th digit in all the numbers not exceeding $7$. In case the number of numbers is different from the length of their decimal representations, we can make them equal by padding the instance by zero numbers
or with zeroes in the decimal representations. Thus, by Theorem~\ref{thm:eth-main}, an $2^{o(n)}$ algorithm for \probCLSubsetSum would contradict ETH.
\end{proof}
                   
We proceed to reducing \probCLSubsetSum to \probMonoTest.
Let us choose a parameter $t\in \{1,\ldots,n\}$. 
We assume w.l.o.g.\ that $n \bmod t = 0$, for otherwise we add $t- (n \bmod t)$ zeroes at the end of every input number.
Let $q=n/t$.
For an $n$-digit decimal number $x$, for every $j=1,\ldots t$, let $x^{[j]}$ denote the $q$-digit number given by the $j$-th block of $q$ digits in $x$, i.e., 
\[x^{[j]} = (x^{(jq-1)}\cdots x^{((j-1)q)})_{10}.\]
Let $r = 10^{q}-1$. Define the following polynomial over $\field{2}$:
\[q_S = \prod_{i=1}^n\left(y_i + z_i\cdot \prod_{j=1}^t x_j^{a_i^{[j]}}\right)\cdot \prod_{j=1}^tx_j^{r - s^{[j]}}.\]
        
\begin{proposition}
\label{prop:equiv2}
 $(s,a_1,\ldots,a_n)$ is a YES-instance of \probCLSubsetSum iff $q_S$ contains the monomial $\prod_{j=1}^tx_j^r\prod_{i\not\in S}y_i\prod_{i\in S}z_i$, for some $S\subseteq\{1,\ldots,n\}$. \qed
\end{proposition}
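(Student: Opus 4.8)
The plan is to unfold the product defining $q_S$ and match monomials. First I would observe that expanding $\prod_{i=1}^n\left(y_i + z_i\cdot \prod_{j=1}^t x_j^{a_i^{[j]}}\right)$ distributes into $2^n$ terms, one for each subset $S\subseteq\{1,\ldots,n\}$: choosing the $y_i$ summand for $i\notin S$ and the $z_i\prod_j x_j^{a_i^{[j]}}$ summand for $i\in S$ yields the term $\prod_{i\notin S}y_i\cdot\prod_{i\in S}z_i\cdot\prod_{j=1}^t x_j^{\sum_{i\in S}a_i^{[j]}}$. Since all these $2^n$ terms involve pairwise distinct sets of the $y$- and $z$-variables (the $S$-pattern is recorded exactly by which of $y_i,z_i$ appears), there is no cancellation or collision between different $S$ — this is the crucial bookkeeping point, and it is the same observation used after~\eqref{eq:p_G}. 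Multiplying through by the fixed factor $\prod_{j=1}^t x_j^{r-s^{[j]}}$, the term for $S$ becomes $\prod_{i\notin S}y_i\cdot\prod_{i\in S}z_i\cdot\prod_{j=1}^t x_j^{\,r - s^{[j]} + \sum_{i\in S}a_i^{[j]}}$.

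Next I would note that $q_S$ contains the monomial $\prod_{j=1}^t x_j^r\prod_{i\notin S}y_i\prod_{i\in S}z_i$ for some $S$ if and only if, for that $S$, the exponent of each $x_j$ equals exactly $r$, i.e.
\[
 r - s^{[j]} + \sum_{i\in S}a_i^{[j]} = r \qquad\text{for every } j=1,\ldots,t,
\]
which is equivalent to $\sum_{i\in S}a_i^{[j]} = s^{[j]}$ for every $j$. Here one must check that the exponents stay in the admissible range so that equality of exponents really is the only way to produce this monomial: since $a_i^{[j]}$ is a $q$-digit number and $r=10^q-1$, the exponent $r-s^{[j]}$ is nonnegative, and by the small-sum hypothesis $\sum_{i=1}^n a_i^{(j)}<10$ for each single digit position, so $\sum_{i\in S}a_i^{[j]}$ never causes a carry out of its $q$-digit block and the total $r - s^{[j]} + \sum_{i\in S}a_i^{[j]}$ never exceeds $r$ unless it equals $r$ — in fact it is at most $r$ exactly when no over/underflow occurs. (The degenerate possibility that two different $x$-exponent vectors collide is ruled out because the $y,z$-pattern already pins down $S$ uniquely, so there is exactly one candidate $x$-monomial per $S$.)

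Finally I would tie this to \probCLSubsetSum: by the carry-less structure, $\sum_{i\in S}a_i = s$ holds if and only if $\sum_{i\in S}a_i^{(j)} = s^{(j)}$ digit by digit, which is exactly equivalent to $\sum_{i\in S}a_i^{[j]} = s^{[j]}$ for every block $j=1,\ldots,t$. Combining the two equivalences, $(s,a_1,\ldots,a_n)$ is a YES-instance iff there is a set $S$ for which the $S$-term of $q_S$ is precisely $\prod_{j=1}^t x_j^r\prod_{i\notin S}y_i\prod_{i\in S}z_i$, i.e.\ iff $q_S$ contains this monomial. The main obstacle — really the only subtlety — is the carry-analysis in the middle step: making sure that equality of the $x_j$-exponent with $r$ genuinely forces $\sum_{i\in S}a_i^{[j]} = s^{[j]}$ and that no block interaction (carry between blocks, or carry within a block from summing the chosen $a_i$) can spuriously produce the target exponent vector; the small-sum assumption $\sum_i a_i^{(j)}<10$ is exactly what closes this gap. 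Everything else is the same monomial-indexing argument already established for $p_G$.
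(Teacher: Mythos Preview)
Your argument is correct and matches the paper's approach: expand $q_S$ as a sum over subsets $S$ (the paper names this expansion $r_S$), use the $y,z$-pattern to see that the summands are distinct monomials with coefficient~1, and then equate the $x_j$-exponents with $r$.

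One small correction: the carry discussion in your middle paragraph is unnecessary, and the claim that the exponent $r - s^{[j]} + \sum_{i\in S}a_i^{[j]}$ ``never exceeds $r$ unless it equals $r$'' is false (take $s^{[j]}=0$ and any $S$ with $\sum_{i\in S}a_i^{[j]}>0$). This does no damage, because your own parenthetical --- that the $y,z$-pattern already pins down $S$ uniquely, leaving exactly one candidate $x$-exponent vector per $S$ --- is the complete justification for the monomial-matching step; exponents are just nonnegative integers and equality is equality. The carry-less hypothesis is needed only in your final paragraph (to pass from $\sum_{i\in S}a_i = s$ to the block-wise equalities $\sum_{i\in S}a_i^{[j]} = s^{[j]}$), where you invoke it correctly.
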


\begin{proof}
 Consider the following polynomial over $\field{2}$: \[r_S = \sum_{S\subseteq\{1,\ldots,n\}}\,\prod_{j=1}^tx_j^{\sum_{i\in S}a_i^{[j]}+r-s^{[j]}}\prod_{i\not\in S}y_i\prod_{i\in S}z_i.\]
 The summands in the expression above have unique sets of $y_i$ variables, so each of them corresponds to a monomial (of coefficient 1).
 It is clear that these monomials where for every $j$ the degree of $x_j$ is exactly $r$ are in one-to-one correspondence with solutions of the instance $(s,a_1,\ldots,a_n)$.
 The claim follows by observing that polynomials $r_S$ and $q_S$ coincide.
\end{proof}

Let $p_S$ denote the polynomial obtained from $q_S$ by filtering out all the monomials of degree different than $k=tr+n$. 
                   
\begin{proposition}
\label{prop:equiv3}
 $(s,a_1,\ldots,a_n)$ is a YES-instance of \probCLSubsetSum iff $p_S$ contains an $r$-monomial. 
\end{proposition}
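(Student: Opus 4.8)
The plan is to read the statement off the polynomial $r_S$ already introduced in the proof of Proposition~\ref{prop:equiv2}. Recall that there we saw $q_S = r_S = \sum_{S \subseteq \{1,\ldots,n\}} \prod_{j=1}^{t} x_j^{e_j(S)} \prod_{i \notin S} y_i \prod_{i \in S} z_i$, where I write $e_j(S) := \sum_{i \in S} a_i^{[j]} + r - s^{[j]}$, and that distinct subsets $S$ contribute distinct monomials, each with coefficient $1$ over $\field{2}$, because $S$ is recovered from the set of $y$-variables that occur. Hence $p_S$ is precisely the sum of those monomials $\prod_j x_j^{e_j(S)} \prod_{i \notin S} y_i \prod_{i \in S} z_i$ whose total degree equals $k = tr + n$; and since the $y$/$z$-part of such a monomial always has degree exactly $n$, the degree restriction is equivalent to $\sum_{j=1}^{t} e_j(S) = tr$.

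Next I would record the two facts supplied by the carry-less structure of the instance. First, each $s^{[j]}$ is a $q$-digit block, so $0 \le s^{[j]} \le 10^{q} - 1 = r$, and therefore every exponent $e_j(S)$ is non-negative (so $q_S$, $r_S$, $p_S$ really are polynomials). Second, the hypothesis $\sum_{i=1}^{n} a_i^{(\ell)} < 10$ for every digit position $\ell$ means the additions in $\sum_{i \in S} a_i$ never carry; consequently, for every $S$ and every $j$ the block $\sum_{i \in S} a_i^{[j]}$ is the $q$-digit number obtained by adding the blocks digit-by-digit, so it lies in $[0, r]$, and $\sum_{i \in S} a_i = s$ holds if and only if $\sum_{i \in S} a_i^{[j]} = s^{[j]}$ for all $j = 1, \ldots, t$.

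Given this, both implications are immediate. For the forward direction, a solution $S$ satisfies $\sum_{i \in S} a_i^{[j]} = s^{[j]}$ for all $j$, hence $e_j(S) = r$ for all $j$, so $p_S$ contains $\prod_j x_j^{r} \prod_{i \notin S} y_i \prod_{i \in S} z_i$: this monomial has degree $tr + n = k$ so it survives the filtering, and it is an $r$-monomial since every $x_j$ has degree $r$ and every $y_i$ and $z_i$ has degree at most $1 \le r$. For the converse, an $r$-monomial $\mu$ of $p_S$ must equal $\prod_j x_j^{e_j(S)} \prod_{i \notin S} y_i \prod_{i \in S} z_i$ for a unique $S$, and the $r$-monomial condition applied to the variables $x_j$ gives $e_j(S) \le r$ for all $j$; together with $\sum_{j=1}^{t} e_j(S) = tr$ and the fact that there are exactly $t$ terms in that sum, this forces $e_j(S) = r$ for each $j$, i.e.\ $\sum_{i \in S} a_i^{[j]} = s^{[j]}$ for all $j$, whence $\sum_{i \in S} a_i = s$ by the no-carry fact, so the instance is a YES-instance of \probCLSubsetSum. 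The only genuinely delicate point is this carry-less bookkeeping — keeping each block sum inside $[0, r]$ and turning block-wise equality into numeric equality — which is exactly what the ``carry-less'' restriction guarantees; the rest is a degree count.
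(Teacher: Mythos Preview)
Your proof is correct and follows essentially the same route as the paper: both argue the forward direction by exhibiting the monomial $\prod_j x_j^{r}\prod_{i\notin S}y_i\prod_{i\in S}z_i$, and both handle the converse by combining the degree constraint $\sum_j e_j(S)=tr$ with the per-variable bound $e_j(S)\le r$ to force $e_j(S)=r$ for every $j$, then invoke the block-wise characterisation of solutions from Proposition~\ref{prop:equiv2}. Your explicit carry-less bookkeeping (non-negativity of the exponents and block sums bounded by $r$) is a welcome bit of extra care that the paper leaves implicit.
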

         
\begin{proof}
 If $(s,a_1,\ldots,a_n)$ is a YES-instance and let then by Proposition~\ref{prop:equiv2} polynomial $q_S$ contains the monomial $\prod_{j=1}^tx_j^r\prod_{i\not\in S}y_i\prod_{i\in S}z_i$, which is an $r$-monomial.
 This monomial has degree $tr+n$, so it is contained in $p_S$ as well.
 
 Conversely, assume $p_S$ contains an $r$-monomial $m$. Every monomial of $q_s$ (and hence also of $p_S$) contains exactly one of the variables $y_i$ and $z_i$, with degree $1$, for every $i=1,\ldots,n$.
 It means that the total degree of $x_j$-type variables in $m$ is $tr$. Hence, since $m$ is an $r$-monomial, each of $x_j$'s has degree exactly $r$. In other words, $m$ is of the form $\prod_{j=1}^tx_j^r\prod_{i\not\in S}y_i\prod_{i\in S}z_i$, for some $S\subseteq\{1,\ldots,n\}$. Then $(s,a_1,\ldots,a_n)$ is a YES-instance of \probCLSubsetSum by Proposition~\ref{prop:equiv2}.
\end{proof}

\begin{proposition}
\label{prop:size}
 $p_S$ can be evaluated by a circuit of size $\Oh(nt^2r + n^2t)$, which can be constructed in time polynomial in $n+t+r$.
\end{proposition}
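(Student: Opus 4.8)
The plan is to follow the two-step pattern of Lemma~\ref{lem:circuit-eval}: first produce a circuit for the unfiltered polynomial $q_S$, and then extract its degree-$k$ homogeneous part, where $k=tr+n$. The naive way to do the second step — replacing every multiplication gate by the $\Oh(k^2)$ gates that compute all pairwise products of homogeneous components — would blow the size up to about $(tr+n)^2\cdot(tr+nt)$, far above the target, so the real work is making the projection cheap. Two structural features of $q_S$ make this possible. First, $q_S=\Pi\cdot E$, where $E=\prod_{j=1}^tx_j^{\,r-s^{[j]}}$ is a single monomial and $\Pi=\prod_{i=1}^n\bigl(y_i+z_im_i\bigr)$ with each $m_i=\prod_{j=1}^tx_j^{\,a_i^{[j]}}$ again a single monomial; in particular each factor of $\Pi$ contributes exactly one of $y_i,z_i$ (of degree $1$), so every monomial of $q_S$ has $y,z$-degree exactly $n$. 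Consequently, filtering by total degree $k$ is equivalent to keeping from $\Pi$ only the part of $x$-degree $\sigma:=\sum_{j=1}^ts^{[j]}$ and then multiplying by $E$ (since $E$ has $x$-degree $tr-\sigma$ and the target total $x$-degree is $tr$). Second — and this is where the small-sum hypothesis of \probCLSubsetSum is used — for each $j$ we have $\sum_{i=1}^na_i^{[j]}\le 9\,(1+10+\dots+10^{q-1})=10^q-1=r$, so the $x$-degrees that ever need to be tracked are bounded by $r$ per variable and by $\sigma\le tr$ in total, not by $k$.

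Concretely, I would assemble the circuit over $\field{2}$ in this order. (i) For each $j$, compute $x_j^0,x_j^1,\dots,x_j^{\rho_j}$ where $\rho_j\le r$ (using the displayed bound together with $r-s^{[j]}\le r$); this is $\Oh(r)$ multiplications per $j$, hence $\Oh(tr)$ overall. (ii) Form each monomial $m_i=\prod_jx_j^{\,a_i^{[j]}}$ from the precomputed powers with $t-1$ multiplications ($\Oh(nt)$ in total), and likewise form $E$ with $t-1$ multiplications. (iii) For $i=0,\dots,n$ and $d=0,\dots,\sigma$ create a gate $G_{i,d}$ holding the $x$-degree-$d$ component of $\Pi_i:=\prod_{i'\le i}(y_{i'}+z_{i'}m_{i'})$, via $G_{0,0}=1$, $G_{0,d}=0$ for $d>0$, and $G_{i,d}=y_i\cdot G_{i-1,d}+z_i\cdot m_i\cdot G_{i-1,d-\delta_i}$, where $\delta_i=\sum_ja_i^{[j]}$ is the $x$-degree of $m_i$ and the second summand is dropped when $d<\delta_i$; this costs $\Oh(1)$ gates per pair $(i,d)$, hence $\Oh(n\sigma)$ in total. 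The recurrence is correct because $y_i,z_i$ have $x$-degree $0$ and $m_i$ is a monomial of $x$-degree exactly $\delta_i$, so projecting $\Pi_i=y_i\Pi_{i-1}+z_im_i\Pi_{i-1}$ onto $x$-degree $d$ yields precisely this formula. (iv) Output $p_S=G_{n,\sigma}\cdot E$, which by the discussion above equals the degree-$k$ part of $q_S$. It is routine to check that $\sigma\le tr$, that every gate introduced is a legal input/constant/addition/multiplication gate, and that the digit blocks $a_i^{[j]},s^{[j]}$ and all indices are computable — and the circuit emitted — in time polynomial in $n+t+r$.

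Summing the four phases gives $\Oh(tr+nt+n\sigma)$ gates; since $\sigma\le tr$ this is $\Oh(ntr)$, which is contained in $\Oh(nt^2r+n^2t)$ as the proposition requires. The one genuine obstacle is the one flagged at the outset: the degree projection must avoid the quadratic-in-$k$ per-gate overhead of generic homogeneous-component extraction. The escape is to exploit both the explicit two-term form of each factor of $\Pi$ — so each multiplication along the product chain contributes only $\Oh(1)$ gates per tracked degree rather than $\Oh(\text{degree})$ — and the small-sum hypothesis, which caps the relevant degrees at $r$ (respectively $\sigma\le tr$) instead of $k=tr+n$. Everything else — the power precomputation, the monomial bookkeeping, and the final size count — is mechanical.
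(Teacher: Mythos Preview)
Your proposal is correct and in fact yields a sharper bound, $\Oh(ntr)$, than the $\Oh(nt^2r+n^2t)$ stated in the proposition. The paper's own proof takes a different, more modular route: it simply notes that $q_S$ has a circuit of size $\Oh(nt)$ and then invokes the generic degree-splitting construction of Lemma~\ref{lem:circuit-eval} with maximum degree $k=tr+n$, arriving at size $\Oh(nt\cdot k)=\Oh(nt^2r+n^2t)$. (Taken literally, the construction in Lemma~\ref{lem:circuit-eval} incurs a blow-up of $\Oh(k^2)$ per multiplication gate rather than $\Oh(k)$; the linear factor is justified here because almost every multiplication in the $q_S$ circuit has a monomial---hence homogeneous---input, but the paper does not spell this out.)

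Your approach instead exploits the particular shape $q_S=\Pi\cdot E$: every monomial has $y,z$-degree exactly $n$, so extracting total degree $k$ reduces to extracting $x$-degree $\sigma=\sum_j s^{[j]}$ from $\Pi$, and the two-term factors $y_i+z_im_i$ make the recurrence $G_{i,d}=y_iG_{i-1,d}+z_im_iG_{i-1,d-\delta_i}$ cost only $\Oh(1)$ gates per cell. This is more hands-on but fully explicit and tighter. One minor remark: the small-sum hypothesis of \probCLSubsetSum is not actually needed for your size bound. You only track $d\le\sigma$, and $\sigma\le tr$ follows just from each $s^{[j]}$ being a $q$-digit number; similarly $a_i^{[j]}\le 10^q-1=r$ holds for each individual $i$, so $\rho_j\le r$ in step~(i) is automatic. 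The real escape from the quadratic-in-$k$ overhead is the two-term structure of the factors, not the digit-sum bound.
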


\begin{proof}
 Polynomial $q_S$ can be evaluated by a circuit of size $\Oh(nt)$. The circuit for $p_S$ is built using the construction from Lemma~\ref{lem:circuit-eval}. Thus, its size is $\Oh(nt(tr+n))=\Oh(nt^2r + n^2t)$. 
\end{proof}

We are ready to give our main lower bound for \probMonoTest. We state it in the most general form, which unfortunately is also quite technical. Next, we derive an exemplary corollary that gives
a lower bound for $r$ expressed as a function of $k$.

 \begin{theorem}
 \label{thm:ss->mono}
 If there is an algorithm solving \probMonoTest in time $2^{o(k\log r/ r)}|C|^{\Oh(1)}$, then ETH fails.
 The statement remains true even if the algorithm works only for instances where $r=2^{\Theta(n/t(n))}$ and $k=t(n) 2^{\Theta(n/t(n))}$, for an arbitrarily chosen function $t:\mathbb{N}\rightarrow\mathbb{N}$ computable in $2^{o(n)}$ time, such that $t(n)=\omega(1)$ and $t(n)\le n$ for every $n$.
\end{theorem}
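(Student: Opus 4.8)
The plan is to chain together the reduction machinery already developed in this section, starting from \probCLSubsetSum, whose $2^{o(n)}$-hardness under ETH is Lemma~\ref{lem:cless-lb}. First I would fix the function $t = t(n)$ from the statement, set $q = n/t(n)$ (padding with zeros so that $t \mid n$, which only blows up $n$ by a constant factor and does not affect $t(n) = \omega(1)$ nor $t(n) \le n$ up to constants), and then set $r = 10^{q} - 1$ and $k = tr + n$. With these choices, $r = 2^{\Theta(q)} = 2^{\Theta(n/t(n))}$ and $k = tr + n = \Theta(t \cdot 2^{\Theta(n/t(n))}) = t(n)\,2^{\Theta(n/t(n))}$, matching the parameter ranges claimed in the theorem. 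Given an instance $(s, a_1, \ldots, a_n)$ of \probCLSubsetSum, I build the polynomial $p_S$ as above; by Proposition~\ref{prop:equiv3} the instance is a YES-instance iff $p_S$ contains an $r$-monomial, and by Proposition~\ref{prop:size} a circuit $C$ for $p_S$ of size $\Oh(nt^2 r + n^2 t) = \poly(n, t, r)$ can be constructed in time polynomial in $n + t + r$. Note $|C| = 2^{\Oh(n/t(n))} \cdot \poly(n) = 2^{\Oh(n)}$, so $|C|^{\Oh(1)} = 2^{\Oh(n)}$ as well, and the circuit construction takes $2^{\Oh(n)}$ time.

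Now suppose, for contradiction, that there is an algorithm $\mathcal{A}$ solving \probMonoTest in time $2^{o(k \log r / r)} |C|^{\Oh(1)}$, working at least on the instances produced above. I would run $\mathcal{A}$ on $(C, r, k)$. The key quantity to estimate is the exponent $k \log r / r$. We have $\log r = \Theta(q) = \Theta(n/t)$ and $k/r = t + n/r = t + o(1) = \Theta(t)$ (since $r \gg n$ for $q$ large, and $q = n/t \to \infty$ because $t(n) = \omega(1)$ would need care — actually $q \to \infty$ iff $t = o(n)$; if $t = \Theta(n)$ then $q = \Theta(1)$, $r = \Theta(1)$, and one falls back on the crude bound, but the interesting regime and the one needed for the contradiction is where the exponent is controlled). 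Thus $k \log r / r = \Theta(t) \cdot \Theta(n/t) = \Theta(n)$, so $2^{o(k \log r/r)} = 2^{o(n)}$, and combined with $|C|^{\Oh(1)} = 2^{\Oh(n)}$ this gives total running time $2^{o(n)} \cdot 2^{\Oh(n)}$ — which is \emph{not} $2^{o(n)}$, so naively this does not yet yield a contradiction. To fix this I would instead argue more carefully: the dominant term must be handled by choosing the reduction so that $|C|^{\Oh(1)}$ is itself subexponential in $n$, or by noting that the exponent $k\log r/r$ in fact grows faster than $\log |C|$. Concretely, $\log|C| = \Oh(q + \log n) = \Oh(n/t + \log n)$, whereas $k \log r / r = \Theta(n)$; so for $t = \omega(1)$ we have $\log |C| = o(n) = o(k\log r/r)$, hence $|C|^{\Oh(1)} = 2^{o(k\log r/r)}$ and the total running time is $2^{o(k \log r/r)} = 2^{o(n)}$. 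That contradicts Lemma~\ref{lem:cless-lb}.

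The main obstacle I anticipate is precisely this bookkeeping around the exponent: one must verify that $\log|C|$ is genuinely dominated by $k \log r / r = \Theta(n)$ across the whole admissible range of $t(n)$ (in particular $t(n) = \omega(1)$ but possibly $t(n)$ close to $n$), so that the $|C|^{\Oh(1)}$ factor can be absorbed into $2^{o(k\log r/r)}$ and the hypothetical algorithm really does solve \probCLSubsetSum in $2^{o(n)}$ time. The second delicate point is ensuring the stated parameter relations $r = 2^{\Theta(n/t(n))}$ and $k = t(n)\,2^{\Theta(n/t(n))}$ hold \emph{with the right constants} after the zero-padding that enforces $t \mid n$ — this is routine but must be stated. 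Everything else (Propositions~\ref{prop:equiv2}, \ref{prop:equiv3}, \ref{prop:size}, and Lemma~\ref{lem:cless-lb}) is already in place, so the proof is essentially an assembly of these pieces together with the exponent estimate; I would write it as: assume the fast algorithm, apply the reduction, estimate the running time as $2^{o(n)}$, invoke Lemma~\ref{lem:cless-lb}.
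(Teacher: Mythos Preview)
Your proposal is correct and follows essentially the same approach as the paper: both arguments set $q=n/t(n)$, $r=10^q-1$, $k=tr+n$, invoke Propositions~\ref{prop:equiv3} and~\ref{prop:size}, and then estimate $k\log r/r=\Theta(n)$ together with $\log|C|=\Oh(n/t+\log n)=o(n)$ (using $t(n)=\omega(1)$) to conclude that the hypothetical algorithm solves \probCLSubsetSum in $2^{o(n)}$ time, contradicting Lemma~\ref{lem:cless-lb}. The paper's write-up is just more direct---it bounds $(ntr)^{\Oh(1)}=2^{o(n)}$ in one line via $r=10^{n/t}=2^{o(n)}$---whereas you first state the naive $2^{\Oh(n)}$ bound and then refine it; your digression about the case $t=\Theta(n)$ is also unnecessary, since even then $k\log r/r=\Theta(n)$ holds.
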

                   
\begin{proof}
 By Lemma~\ref{lem:cless-lb}, it suffices to give an algorithm for \probCLSubsetSum that works in time $2^{o(n)}$, where $n$ is the number of input numbers.
 Let $t=t(n)$ and $q=n/t$, $r = 10^q - 1$, $k=tr+n$ as before. 
 Note that $r=2^{\Theta(n/t(n))}$.
 Also, since $10^{n/t(n)}=\Omega(n/t(n))$, $k=\Theta(t(n) 10^{n/t(n)}+n) =\Theta(t(n) 10^{n/t(n)})=t(n) 2^{\Theta(n/t(n))}$.

 By Proposition~\ref{prop:equiv3}, solving \probCLSubsetSum is equivalent to detecting an $r$-monomial in $p_S$, which is a homogenous polynomial of degree $k=tr+n$.
 Let $C$ be the circuit for $p_S$; by Proposition~\ref{prop:size} we have $|C|=\Oh(nt^2r + n^2t)$. 
 If this can be done in time $2^{o(k\log r/r)}|C|^{\Oh(1)}$, we get an algorithm for \probCLSubsetSum running in time
 \[2^{o(k\log r/r)}|C|^{\Oh(1)} = 2^{o((tr+n)q/r)}(ntr)^{\Oh(1)}=2^{o(n+nq/10^q)}(ntr)^{\Oh(1)}=2^{o(n)}(ntr)^{\Oh(1)}.\]
 Recall that $t\le n$ and $r=10^{n/t}-1=2^{o(n)}$, since $t=t(n)=\omega(1)$. Hence $(ntr)^{\Oh(1)} = 2^{o(n)}\poly(n)$.
 The claim follows.
\end{proof}

\begin{theorem}\label{thm:main-mono-detection}
Let $\sigma\in [0,1)$.
Then, unless ETH fails, there is no algorithm for \probMonoTest that solves instances with $r=\Theta(k^\sigma)$ in time $2^{o(k\cdot \frac{\log r}{r})}\cdot |C|^{\Oh(1)}$.
\end{theorem}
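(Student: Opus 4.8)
The plan is to derive Theorem~\ref{thm:main-mono-detection} as a corollary of the more general Theorem~\ref{thm:ss->mono} by choosing the free function $t(n)$ appropriately so that the resulting pair $(r,k)$ satisfies $r = \Theta(k^\sigma)$. Recall that Theorem~\ref{thm:ss->mono} produces instances with $r = 2^{\Theta(n/t(n))}$ and $k = t(n)\cdot 2^{\Theta(n/t(n))}$ for any $t$ computable in $2^{o(n)}$ time with $t(n) = \omega(1)$ and $t(n) \le n$. First I would handle the boundary case $\sigma = 0$ separately: here we want $r$ to be (roughly) constant, but the construction forces $r \ge 10^{n/t(n)} - 1$, which is a growing function whenever $n/t(n)\to\infty$; the cleanest fix is to note that $r=\Theta(k^0)=\Theta(1)$ already follows from the known ETH lower bound ruling out $2^{o(k/r)}$-time algorithms, or alternatively one observes $r=\Theta(k^\sigma)$ for $\sigma$ slightly positive already gives what is needed and the $\sigma = 0$ case can be absorbed; I would state this carefully rather than sweeping it under the rug.

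For $\sigma \in (0,1)$, write $m := n/t(n)$, so $r = 2^{\Theta(m)}$ and $k = t(n)\cdot 2^{\Theta(m)} = (n/m)\cdot 2^{\Theta(m)}$. We want $\log r = \Theta(\sigma \log k)$, equivalently $r^{1/\sigma} = \Theta(k)$, i.e. $2^{\Theta(m/\sigma)} = \Theta\!\bigl((n/m) \cdot 2^{\Theta(m)}\bigr)$. Taking logarithms, this requires $m/\sigma = \Theta(m + \log(n/m))$, i.e. $m\bigl(1/\sigma - 1\bigr) = \Theta(\log(n/m))$, so $m = \Theta\!\bigl(\tfrac{\sigma}{1-\sigma}\log n\bigr)$ is the right choice (the $\log(n/m)$ and $\log n$ differ only by a lower-order $\log\log$ term, which is harmless inside the $\Theta$). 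Concretely I would set $t(n) := \lceil n / (c_\sigma \log_2 n) \rceil$ for a suitable constant $c_\sigma$ depending only on $\sigma$ (chosen so that $m = n/t(n) \approx c_\sigma \log_2 n$ makes $2^{m} = n^{\Theta(1)}$ land in the right power), check that $t(n) = \omega(1)$, $t(n)\le n$, and $t$ is polynomial-time computable (hence $2^{o(n)}$-time computable), and then verify directly that with this $t$, the quantities $r = 2^{\Theta(\log n)} = n^{\Theta(1)}$ and $k = t(n)\cdot n^{\Theta(1)} = n^{\Theta(1)}$ indeed satisfy $r = \Theta(k^{\sigma})$. Plugging into Theorem~\ref{thm:ss->mono}, an algorithm running in time $2^{o(k\log r/r)}|C|^{\Oh(1)}$ on these instances would contradict ETH.

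The main obstacle I anticipate is bookkeeping the constants and $\Theta(\cdot)$ slack: the construction only guarantees $r = 2^{\Theta(n/t(n))}$, not $r = 2^{n/t(n)(1+o(1))}$, so the exponent of $n$ in $r$ is only pinned down up to a constant factor, and likewise for $k$; I need to argue that $\Theta(k^\sigma)$ is robust enough to tolerate this, which it is, but it means I should phrase the target as ``there exists a choice making $r = \Theta(k^\sigma)$'' and not try to hit $r = k^\sigma$ exactly. A secondary subtlety is that $t(n) = \lceil n/(c_\sigma\log_2 n)\rceil$ must be an integer dividing into the analysis cleanly — but Theorem~\ref{thm:ss->mono} already absorbs rounding (it pads $n$ so that $t \mid n$), so this is not a real issue. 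I would also double-check the edge behavior as $\sigma \to 1$: then $c_\sigma \to \infty$ and $m \to$ a large multiple of $\log n$, which is still fine since $m \le n$ is all that is required, so $r$ can be as large as a large polynomial in $n$ while $k$ stays polynomial in $n$. Thus the proof is essentially a one-paragraph instantiation of the general theorem with the right $t$, and the only care needed is in the asymptotic algebra relating $m$, $\sigma$, $r$, and $k$.
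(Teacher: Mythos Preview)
Your high-level plan---instantiate Theorem~\ref{thm:ss->mono} with a suitable $t(n)$---is the same as the paper's. The gap is in the concrete choice of $t$. Setting $t(n)=\lceil n/(c_\sigma\log_2 n)\rceil$ for a fixed constant $c_\sigma$ does \emph{not} yield $r=\Theta(k^\sigma)$. With $m=n/t\approx c_\sigma\log_2 n$ one has $r=10^m-1$ and $k=tr+n\sim (n/m)\cdot 10^m$, so $\ln r \approx m\ln 10$ while $\sigma\ln k\approx \sigma(m\ln 10 + \ln n - \ln m)$. Even if $c_\sigma$ is chosen so the $\ln n$-scale terms match, there remains $\ln r-\sigma\ln k\approx \sigma\ln m=\Theta(\log\log n)$, hence $r/k^\sigma=\Theta((\log n)^\sigma)$, which is unbounded. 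Your asymptotic derivation only establishes $\log r=\Theta(\sigma\log k)$, which is far weaker than $r=\Theta(k^\sigma)$; the sentence ``$\Theta(k^\sigma)$ is robust enough to tolerate this'' is precisely where the argument breaks, because a constant-factor error in $m$ becomes a polynomial error after exponentiation.

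The paper avoids this by defining $t$ implicitly: for each $n$ it searches (by brute force over $t\le n$, in polynomial time) for an integer $t$ with
\[
\tfrac12\ \le\ \frac{10^{n/t}-1}{\bigl(t(10^{n/t}-1)+n\bigr)^{\sigma}}\ \le\ 2,
\]
and argues that for $\sigma<1$ such a $t$ exists once $n$ is large (the left-hand ratio is a decreasing function of $t$ that crosses the interval $[\tfrac12,2]$). This directly forces $r/k^\sigma\in[\tfrac12,2]$ and also gives $t(n)=\omega(1)$ and polynomial-time computability, so Theorem~\ref{thm:ss->mono} applies. Replacing your explicit formula with this implicit one (or any equivalent search) closes the gap; no further change to your outline is needed.
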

\begin{proof}
We prove that an algorithm for \probMonoTest with properties as in the statement can be used to derive an algorithm for the same problem with properties as in the statement of Theorem~\ref{thm:ss->mono},
which implies that ETH fails. Take $t$ to be a positive integer not larger than $n$ such that
\begin{equation}\label{eq:choicet}
\frac{1}{2}\leq \frac{10^{n/t}-1}{(t\cdot (10^{n/t}-1)+n)^\sigma}\leq 2;
\end{equation}
it can be easily verified that since $\sigma<1$, for large enough $n$ such an integer $t\le n$ always exists. Moreover, we have that $t=t(n)\in \omega(1)$ and $t(n)$ can be computed in polynomial time by brute-force.
Hence, $t(n)$ satisfies the properties stated in Theorem~\ref{thm:ss->mono}.


Let $t=t(n)$ and $q=n/t$. Define $r = 10^q - 1$ and $k=tr+n$, then~\eqref{eq:choicet} is equivalent to
$$1/2\leq r/k^\sigma\leq 2.$$
Hence $r=\Theta(k^\sigma)$. Consequently, the assumed algorithm solves \probMonoTest in time $2^{o(k\log r/ r)}|C|^{\Oh(1)}$, however in the proof of Theorem~\ref{thm:ss->mono}
we have shown that the existence of an algorithm that achieves such a running time for this particular choice of parameters implies that ETH fails.
\end{proof}

Note that Theorem~\ref{thm:main-mono-detection} in particular implies that \probMonoTest does not admit an algorithm that achieves running time $2^{o(\frac{\log r}{r})\cdot k}\cdot |C|^{\Oh(1)}$ for any given $r$.

\paragraph*{Acknowledgements.} The authors thank Andreas Bj\"orklund and Matthias Mnich for sharing the problem considered in this paper.

\vfill

\bibliographystyle{abbrv}
\bibliography{ab-coloring}

\begin{thebibliography}{10}

\bibitem{abasi}
H.~Abasi, N.~H. Bshouty, A.~Gabizon, and E.~Haramaty.
\newblock On $r$-simple $k$-path.
\newblock In {\em {MFCS 2015}}, volume 8635 of {\em Lecture Notes in Computer
  Science}, pages 1--12. Springer, 2014.

\bibitem{BjorklundHK09}
A.~Bj{\"{o}}rklund, T.~Husfeldt, and M.~Koivisto.
\newblock Set partitioning via inclusion-exclusion.
\newblock {\em {SIAM} J. Comput.}, 39(2):546--563, 2009.

\bibitem{Bshouty09}
N.~H. Bshouty.
\newblock Optimal algorithms for the coin weighing problem with a spring scale.
\newblock In {\em {COLT 2009}}, 2009.

\bibitem{CaiCDF95}
L.~Cai, J.~Chen, R.~G. Downey, and M.~R. Fellows.
\newblock On the structure of parameterized problems in {NP}.
\newblock {\em Inf. Comput.}, 123(1):38--49, 1995.

\bibitem{ChristFL14}
M.~G. Christ, L.~M. Favrholdt, and K.~S. Larsen.
\newblock Online multi-coloring with advice.
\newblock In {\em {WAOA 2014}}, volume 8952 of {\em Lecture Notes in Computer
  Science}, pages 83--94. Springer, 2014.

\bibitem{Chvatal83}
V.~Chv{\'a}tal.
\newblock Mastermind.
\newblock {\em Combinatorica}, 3(3):325--329, 1983.

\bibitem{chvatal}
V.~Chv\'atal, M.~Garey, and D.~Johnson.
\newblock Two results concerning multicoloring.
\newblock In {\em Algorithmic Aspects of Combinatorics}, volume~2 of {\em
  Annals of Discrete Math.}, pages 151--154. Elsevier, 1978.

\bibitem{cormen}
T.~H. Cormen, C.~E. Leiserson, R.~L. Rivest, and C.~Stein.
\newblock {\em Introduction to Algorithms}.
\newblock The MIT Press, 2009.

\bibitem{cygan:homo}
M.~Cygan, F.~V. Fomin, A.~Golovnev, A.~S. Kulikov, I.~Mihajlin, J.~Pachocki,
  and A.~Soca\l{}a.
\newblock Tight bounds for {G}raph {H}omomorphism and {S}ubgraph {I}somorphism.
\newblock In {\em SODA 2016}, pages 1643--1649, 2016.

\bibitem{platypus}
M.~Cygan, F.~V. Fomin, L.~Kowalik, D.~Lokshtanov, D.~Marx, M.~Pilipczuk,
  M.~Pilipczuk, and S.~Saurabh.
\newblock {\em Parameterized Algorithms}.
\newblock Springer, 2015.

\bibitem{Diestel-book}
R.~Diestel.
\newblock {\em Graph Theory}.
\newblock Springer-Verlag Heidelberg, 2010.

\bibitem{Fisher95}
D.~C. Fisher.
\newblock Fractional colorings with large denominators.
\newblock {\em J. Graph Theory}, 20(4):403--409, 1995.

\bibitem{DBLP:journals/mst/FominHK07}
F.~V. Fomin, P.~Heggernes, and D.~Kratsch.
\newblock Exact algorithms for graph homomorphisms.
\newblock {\em Theory Comput. Syst.}, 41(2):381--393, 2007.

\bibitem{GabizonLP15}
A.~Gabizon, D.~Lokshtanov, and M.~Pilipczuk.
\newblock Fast algorithms for parameterized problems with relaxed disjointness
  constraints.
\newblock In {\em {ESA 2015}}, volume 9294 of {\em Lecture Notes in Computer
  Science}, pages 545--556. Springer, 2015.

\bibitem{godsil-agt}
C.~Godsil and G.~F. Royle.
\newblock {\em Algebraic Graph Theory}.
\newblock Springer, 2001.

\bibitem{GrebinskiK00}
V.~Grebinski and G.~Kucherov.
\newblock Optimal reconstruction of graphs under the additive model.
\newblock {\em Algorithmica}, 28(1):104--124, 2000.

\bibitem{GrotschelLS81}
M.~Gr{\"{o}}tschel, L.~Lov{\'{a}}sz, and A.~Schrijver.
\newblock The ellipsoid method and its consequences in combinatorial
  optimization.
\newblock {\em Combinatorica}, 1(2):169--197, 1981.
\newblock Corrigendum available at: \url{http://dx.doi.org/10.1007/BF02579139}.

\bibitem{HalldorssonK04}
M.~M. Halld{\'{o}}rsson and G.~Kortsarz.
\newblock Multicoloring: Problems and techniques.
\newblock In {\em {MFCS 2013}}, volume 3153 of {\em Lecture Notes in Computer
  Science}, pages 25--41. Springer, 2004.

\bibitem{HalldorssonKPSST03}
M.~M. Halld{\'{o}}rsson, G.~Kortsarz, A.~Proskurowski, R.~Salman, H.~Shachnai,
  and J.~A. Telle.
\newblock Multicoloring trees.
\newblock {\em Inf. Comput.}, 180(2):113--129, 2003.

\bibitem{Havet01}
F.~Havet.
\newblock Channel assignment and multicolouring of the induced subgraphs of the
  triangular lattice.
\newblock {\em Discrete Math.}, 233(1-3):219--231, 2001.

\bibitem{HellN90}
P.~Hell and J.~Ne\v{s}et\v{r}il.
\newblock On the complexity of \emph{H}-coloring.
\newblock {\em J. Comb. Theory, Ser. {B}}, 48(1):92--110, 1990.

\bibitem{HuaWYL10}
Q.~Hua, Y.~Wang, D.~Yu, and F.~C.~M. Lau.
\newblock Dynamic programming based algorithms for set multicover and multiset
  multicover problems.
\newblock {\em Theor. Comput. Sci.}, 411(26-28):2467--2474, 2010.

\bibitem{eth}
R.~Impagliazzo and R.~Paturi.
\newblock {On the Complexity of k-SAT}.
\newblock {\em J. Comput. Syst. Sci.}, 62(2):367--375, 2001.

\bibitem{seth}
R.~Impagliazzo, R.~Paturi, and F.~Zane.
\newblock Which problems have strongly exponential complexity?
\newblock {\em J. Comput. Syst. Sci.}, 63(4):512--530, 2001.

\bibitem{KchikechT06}
M.~Kchikech and O.~Togni.
\newblock Approximation algorithms for multicoloring planar graphs and powers
  of square and triangular meshes.
\newblock {\em Discrete Math. Theor. Comput. Sci.}, 8(1):159--172, 2006.

\bibitem{Kuhn09}
F.~Kuhn.
\newblock Local multicoloring algorithms: {C}omputing a nearly-optimal {TDMA}
  schedule in constant time.
\newblock In {\em {STACS 2009}}, volume~3 of {\em LIPIcs}, pages 613--624.
  Schloss Dagstuhl - Leibniz-Zentrum fuer Informatik, Germany, 2009.

\bibitem{Lin08}
W.~Lin.
\newblock Multicoloring and {M}ycielski construction.
\newblock {\em Discrete Math.}, 308(16):3565 -- 3573, 2008.

\bibitem{lindstrom1965combinatorial}
B.~Lindstr{\"o}m.
\newblock On a combinatorial problem in number theory.
\newblock {\em Canad. Math. Bull}, 8(4):477--490, 1965.

\bibitem{Lovasz78}
L.~Lov{\'{a}}sz.
\newblock Kneser's conjecture, chromatic number, and homotopy.
\newblock {\em J. Comb. Theory, Ser. {A}}, 25(3):319--324, 1978.

\bibitem{Marx02}
D.~Marx.
\newblock The complexity of tree multicolorings.
\newblock In {\em {MFSC 2002}}, volume 2420 of {\em Lecture Notes in Computer
  Science}, pages 532--542. Springer, 2002.

\bibitem{McDiarmidR00}
C.~McDiarmid and B.~A. Reed.
\newblock Channel assignment and weighted coloring.
\newblock {\em Networks}, 36(2):114--117, 2000.

\bibitem{Moon1965}
J.~W. Moon and L.~Moser.
\newblock On cliques in graphs.
\newblock {\em Israel J. Math.}, 3(1):23--28, 1965.

\bibitem{Nederlof}
J.~Nederlof.
\newblock Inclusion exclusion for hard problems.
\newblock Master's thesis, Department of Information and Computer Science,
  Utrecht University, 2008.
\newblock Available at \url{http://www.win.tue.nl/~jnederlo/MScThesis.pdf}.

\bibitem{SudeepV05}
K.~S. Sudeep and S.~Vishwanathan.
\newblock A technique for multicoloring triangle-free hexagonal graphs.
\newblock {\em Discrete Math.}, 300(1-3):256--259, 2005.

\bibitem{Tovey84}
C.~A. Tovey.
\newblock {A simplified NP-complete satisfiability problem}.
\newblock {\em Discrete Appl. Math.}, 8(1):85--89, 1984.

\bibitem{Wahlstrom11}
M.~Wahlstr{\"{o}}m.
\newblock New plain-exponential time classes for graph homomorphism.
\newblock {\em Theory Comput. Syst.}, 49(2):273--282, 2011.

\end{thebibliography}

%
%
%
%
%
%

\end{document}